\newtheorem{theorem}{Theorem}
\newtheorem{observation}{Observation}
\newtheorem{claim}{Claim}
\newtheorem{fact}{Fact}
\newtheorem{definition}{Definition}
\newtheorem{lemma}{Lemma}
\newtheorem{corollary}{Corollary}
\DeclareMathOperator{\REL}{REL}
\newcommand{\opt}{\operatorname{OPT}}
\newcommand{\eps}{\epsilon}
\newcommand{\n}{\mathcal{N}}
\newcommand{\va}{\mathbf{v}}
\newcommand{\smax}{\operatorname{smax}}
\renewcommand{\Pr}{\mathop{\bf Pr\/}}
\title{Hitting the High Notes: Subset Selection for Maximizing Expected Order Statistics}
\author{%
  Aranyak Mehta\\Google Research\\\texttt{aranyak@google.com}
  \and
  Uri Nadav\\Google Research\\\texttt{urinadav@google.com}
  \and
  Alexandros Psomas\\Purdue University\\\texttt{apsomas@cs.purdue.edu}
  \and
  Aviad Rubinstein\\Stanford University\\\texttt{aviad@cs.stanford.edu}
}
\date{}
\begin{document}

\maketitle

\begin{abstract}
We consider the fundamental problem of selecting $k$ out of $n$ random variables in a way that the expected highest or second-highest value is maximized. This question captures several applications where we have uncertainty about the quality of candidates (e.g. auction bids, search results) and have the capacity to explore only a small subset due to an exogenous constraint. For example, consider a second price auction where system constraints (e.g., costly retrieval or model computation) allow the participation of only $k$ out of $n$ bidders, and the goal is to optimize the expected efficiency (highest bid) or expected revenue (second highest bid).

We study the case where we are given an explicit description of each random variable. We give a PTAS for the problem of maximizing the expected highest value. For the second-highest value, we prove a hardness result: assuming the Planted Clique Hypothesis, there is no constant factor approximation algorithm that runs in polynomial time.
Surprisingly, under the assumption that each random variable has monotone hazard rate (MHR), a simple score-based algorithm, namely picking the $k$ random variables with the largest $1/\sqrt{k}$ top quantile value,  is a constant approximation to the expected highest and second highest value, \emph{simultaneously}.
\end{abstract}

\section{Introduction}

We study a basic algorithmic meta-question: given $n$ independent random variables, select $k$ of them, with the objective of maximizing the expected largest value, and/or the expected second highest value.
We are motivated by the following applications:
\begin{description}[style=unboxed,leftmargin=0cm]
\item[Search engine] Given a search query, the search engine has to return $k$ results of $n$ candidates. The random variables model the uncertainty about the user's utility from each result. Among the $k$, the human can select the most relevant result, and our goal is to maximize their utility.  In their seminal WAND paper, Broder et. al. \cite{broder2003wand} point out that a search engine's latency constraint prevents it from perfectly scoring all possible candidates, and propose a two-tier solution for scoring documents in a search engine, where first they run a fast approximate evaluation and then a full slower evaluation limited to only promising candidates.
\item[Procurement auctions] A buyer in a complex procurement auction (e.g.~for a large engineering project \cite{rong2007two,tan1992entry}) receives $n$ initial proposals. 
They need to select a subset of $k$ bidders who will be allowed to submit a more detailed second-stage proposal, of which the best will be selected. 
\item[Simple ad auctions] A platform receives $n$ candidates for a slot to display an online ad. The candidates come with a value-per-click bid, as well as a set of features for estimation of relevance and click-through-rate (CTR). A large deep model converts these to CTR estimates (\cite{he2014practicallessons, mcmahan2013adclick}), which are combined with the per-click bid to generate an auction score. An auction (typically second price \cite{varian2007position, edelman2007internet}) is run to choose the ad to display and its per-click payment. In this setting, computational constraints (the auction has to be extremely fast) typically prevent evaluation of the large model on all $n$ candidates; all but $k$ are filtered using scores from a faster, less accurate model, before going to the auction.
\item[The race for a vaccine] A government agency like NIH or NSF can fund $k$ out of $n$ competing grant proposals that aim to solve the same problem, e.g.~develop a vaccine for COVID-19. Ultimately, the best vaccine will be used.
\end{description}
The significance of the expected largest value is clear in all applications. 
In the context of auctions (of both types), the expected second-largest is important since it is the revenue of a second-price auction. 
The expected maximum objective was previously studied by Kleinberg and Raghu~\cite{kleinberg2018team} in the context of a fifth application, team selection.
\begin{description}[style=unboxed,leftmargin=0cm]
\item[Team selection] A manager needs to select $k$ out of $n$ applicants to form a team to work on a particular task ~\cite{kleinberg2018team}. Every applicant takes one or multiple tests, modeled as samples from the distribution of performance. In the ``contest'' model, the team's performance is evaluated based on the best outcome of any team member.
\end{description}

%The work of \cite{kleinberg2018team} is 
%\cite{page2008difference}
%\cite{hong2004groups} argue that a group of more diverse can outperform a team formed by the members with the highest individual scores.

\cite{kleinberg2018team} focused on the existence of good score-based selection rules, i.e. rules that separately compute a score for each variable, and then take the $k$ variables with the highest score. It is tempting, and very common in practice, to compute the average performance for each variable, and then pick the best $k$. But, this would lead to a suboptimal solution. As a simple example, consider a scenario with $n=20$ candidates: $10$ that always score $1.1$ (with probability $1$) and $10$ score $0$ with probability $0.9$, and $10$ otherwise (with probability $0.1$). We must form a $10$-member team. The first group of candidates has higher individual averages, but the group's expected \emph{maximum} score is only $1.1$. On the other hand, the second group has lower individual scores, but the probability that the maximum is less than $10$ is $0.9^{10}$; the expected maximum is larger than $6.5$. Thus, a group of high variance members can outperform a team formed by the members with the highest individual score~\cite{page2008difference, hong2004groups}. 

\cite{kleinberg2018team} prove that two simple test scores, ``best of $k$ samples'' and ``expectation over $1/k$ top quantile'', obtain a constant factor approximation to the expected-maximum objective.  Furthermore, they prove that in general, the approximation ratio of any score-based rule is at most a constant (namely $8/9$).

\subsubsection*{Our contribution}
In this work, we extend the seminal ideas of \cite{kleinberg2018team} in multiple directions.

\paragraph{Algorithms and complexity}
We first consider the algorithmic task of computing a near-optimal subset given an explicit description of discrete support random variables, i.e. as a list of (value, probability) pairs.
We prove NP-hardness and give a near-linear time PTAS for the highest value objective. Score based algorithms with such performance are ruled out by the lower bound of~\cite{kleinberg2018team}, so, of course, our algorithm is not score based. This result shows that looking at the interaction between variables opens the door to much better guarantees. On the other hand, for the second-highest objective, we prove that computing any constant factor approximation is intractable, assuming either (a variant of) the Planted Clique assumption or the Exponential Time Hypothesis.

\paragraph{A simple and near-optimal score for MHR distributions}
In contrast to our worst-case hardness result, we show that if each variable satisfies a {\em monotone hazard rate (MHR)} assumption,\footnote{A random variable is MHR if its hazard rate $h(x) = f(x)/(1-F(x))$ is monotone non-decreasing; see Section~\ref{sec: model}. Many common families of distributions are MHR, e.g. Normal, Exponential and Uniform.} then a simple score-based rule gives a constant factor approximation to both the highest and second-highest objectives, simultaneously. The score $s_i$ of each variable $x_i$ is the value of its $1/\sqrt{k}$ top quantile, namely $s_i = \sup\left\{\tau: \Pr[X_i > \tau] \geq 1/\sqrt{k}\right\}.$

\paragraph{Selection rules for machine learning}
In practice, for most of the above scenarios, we don't have an explicit description of the distribution. Relaxing the assumption of access to such an explicit description was left as an open problem in~\cite{kleinberg2018team}. In this paper, we consider a more realistic scenario where each candidate is represented by a vector of features, and the random variables model our uncertainty about the true value of each candidate. We develop regression-based analogs of \cite{kleinberg2018team}'s and our scoring rules and empirically evaluate them on a neural net to predict the popularity of tweets on Twitter. We observe that the Quantile method and \cite{kleinberg2018team}'s method have similar performance, and both outperform regression (squared loss), for a large range of input quantiles (including the choices that we have theoretical guarantees for).

\subsubsection*{Additional related work by~\cite{GGM10,CHLLLL16,SS20}}
After the publication of the conference version of this paper, we became aware of earlier~\cite{GGM10,CHLLLL16} and concurrent~\cite{SS20} works on approximation algorithms for the highest value objective.  These works refer to essentially the same problem using the names {\bf $k$-MAX} or {\bf non-adaptive ProbeMax}.
Specifically,~\cite{CHLLLL16} give a PTAS for this problem, and~\cite{SS20} improve to an EPTAS. We note that our algorithm is also an EPTAS: for a $(1-\epsilon)$-factor approximation, the running time is 
$$O(n \cdot \log^{C(\epsilon)}(k)) \le O(n^{1+o(1)} \cdot 2^{C(\epsilon)^2}),$$ for some $C(\eps)$ that depends only on $\epsilon$.  
On the complexity side, the NP-hardness for exact algorithms for the highest value objective follows from~\cite{GGM10,CHLLLL16}. 

\section{Model}\label{sec: model}

There is a set $\n = \{ X_1, \dots, X_n \}$ of $n$ mutually independent random variables. We write $[n]$ for the set $\{1,\dots, n\}$. Our goal is to select a subset $S \subset [n]$ of size $k \geq 2$ in order to maximize the expected largest value, denoted by $\E[ \max_{i \in S} X_i ]$, and expected second largest value, denoted by $\E[ \smax_{i \in S} X_i ]$. Let $\opt_{max}(X) = \max_{S \subset [n]: |S| = k} \E[ \max_{i \in S} X_i ]$ and $\opt_{smax}(X) = \max_{S \subset [n]: |S| = k} \E[ \smax_{i \in S} X_i ]$ be the optimal expected largest and second largest values. We often overload notation and refer to the optimal subsets themselves as $\opt_{max}(X)$ and $\opt_{smax}(X)$. Also, when clear from context we drop the subscript, and simply write $\opt(X)$.

In Sections~\ref{sec:ptas} and~\ref{sec: hardness for rev} we are interested in computation: given an explicit description of the $X_i$s, i.e. for each $X_i$ pairs of numbers $(v_j,p_j)$ indicating the probability $p_j$ that $X_i$ takes value $v_j$, can we compute a good approximation to $\opt_{max}(X)$ and $\opt_{smax}(X)$? In Section~\ref{sec: quantile algo} we consider a slightly different model, where each $X_i$ is a continuous random variable. Let $F_i(x)$ and $f_i(x)$ be the cumulative distribution function (CDF) and probability density function (PDF) of $X_i$. We will be interested in a special family of random variables. 

\begin{definition}[MHR]
A random variable $X$ has Monotone Hazard Rate (MHR) if its hazard rate $h(v) = \frac{f(v)}{1-F(v)}$ is a monotone non-decreasing function.
\end{definition}

Many common families of distributions such as the Uniform, Exponential, and Normal have
monotone hazard rate. MHR distributions have been extensively studied in the statistics literature under the (perhaps better) name of IFR, Increasing Failure Rate (see~\cite{barlow1996mathematical}) but to maintain consistency with the computer science literature we refer to them as MHR in this paper.

\section{A PTAS for Expected Largest Value}\label{sec:ptas}

In this section we study the problem of maximizing the expected largest value. First, we show that the problem is NP-hard.

\begin{theorem}\label{thm: np hardness for expected max}
Given $n$ random variables $X_1, \dots, X_n$, an integer $k$ and a target $C$, deciding if there exists a subset of random variables, of size $k$, whose expected largest largest value is at least $C$, is an NP-hard problem.
\end{theorem}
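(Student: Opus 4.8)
The plan is to reduce from a known NP-hard problem. A natural candidate is \textsc{Max $k$-Cover} or \textsc{Independent Set}, but the cleanest route is a reduction from a problem whose combinatorial structure matches "the maximum of a selected set." I would aim for a reduction from \textsc{Max Clique} (or equivalently \textsc{Independent Set}, or \textsc{Max $k$-Subgraph}/densest-$k$-subgraph in its decision form), exploiting the fact that the expected maximum of a set of random variables behaves superadditively when the variables are "disjoint" in where they place their mass and is dominated by overlaps otherwise. Concretely, given a graph $G=(V,E)$ with $|V|=m$ and a target clique size $k$, I would create one random variable $X_v$ per vertex $v$. Each $X_v$ should be supported on a small number of value/probability pairs engineered so that $\E[\max_{v\in S}X_v]$ is large exactly when $S$ is a clique (or independent set, depending on the orientation of the gadget).

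The key construction step: associate to each edge $e\in E$ a distinct "coordinate'' or value level $w_e$, and let $X_v$ put a small probability $\delta$ on each of the value levels $w_e$ for edges $e$ incident to $v$ (and probability $1-\deg(v)\delta$ on $0$). If we instead want cliques to be the good sets, we flip this: give $X_v$ mass on the value levels corresponding to \emph{non-neighbors}, so that a clique's variables have pairwise \emph{disjoint} supports and hence their expected max adds up; any non-clique set shares a coordinate and "wastes'' probability mass there. By choosing $\delta$ small (so that second-order terms $\binom{k}{2}\delta^2$ are negligible) and scaling the values appropriately, I can make $\E[\max_{i\in S}X_i]$ a strictly increasing function of the number of disjoint coordinates covered, which for a $k$-set is exactly $\binom{k}{2}$ minus (number of shared coordinates, i.e. non-edges inside $S$). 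Setting the threshold $C$ at the value achieved by a perfect clique then makes the decision problem equivalent to whether $G$ has a $k$-clique. All numbers in the construction (values and probabilities) are polynomially bounded, so the reduction is polynomial time.

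I would then verify the two directions: (i) if $S$ is a $k$-clique, the supports are disjoint on the relevant coordinates and a direct inclusion–exclusion / union-bound calculation shows $\E[\max_{i\in S}X_i]\ge C$; (ii) if every $k$-set has a missing edge, I show the expected max drops by at least a fixed polynomial amount below $C$, using that a shared coordinate contributes its value with probability $\le 2\delta$ rather than effectively $2\delta$ across two disjoint slots — the gap is $\Theta(\delta^2)$ per missing edge, which I make detectable by scaling. The arithmetic here is the routine part; the only care needed is choosing the value levels (e.g.\ geometrically increasing, or all equal with careful bookkeeping) so that the "max'' cleanly decomposes and the clique case is the unique optimum.

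The main obstacle I anticipate is controlling the higher-order interaction terms: $\E[\max_{i\in S}X_i]$ is not literally additive over coordinates because two of the selected variables can simultaneously land on two \emph{different} coordinates, and these cross terms of order $\delta^2$ must be shown to be (a) uniform across all $k$-sets up to lower order, or (b) strictly dominated by the first-order clique signal. I expect to handle this by taking $\delta=1/\mathrm{poly}(m)$ small enough that the $O(k^2\delta^2)$ slack is smaller than the $\Theta(\delta^2)$-scale separation created by a single missing edge after rescaling — i.e.\ by amplifying the per-edge signal (using distinct large values $w_e$) so that one non-edge is worth more than all the uncontrolled quadratic noise combined. Alternatively, if that balancing is delicate, I would fall back to a reduction from \textsc{Clique} on sparse/bounded-degree graphs, or from \textsc{Exact Cover by $3$-Sets}, where the "disjoint support'' structure is exact and no higher-order terms intrude. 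Either way, the theorem follows once the gap is established.
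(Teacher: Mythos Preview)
Your plan is essentially the paper's: reduce from \textsc{Independent Set} (the paper uses regular graphs), create one variable $X_v$ per vertex with an edge-indexed value level shared by both endpoints, and space the levels geometrically so distinct edges do not interfere. Two details you flag as ``routine'' are in fact where the work lies. First, to make one internal edge cost the same regardless of which edge it is, the paper pairs value $m^{4p_e}$ with probability $m^{-2p_e}$, so the collision loss when both endpoints are selected is $m^{4p_e}\cdot (m^{-2p_e})^2=1$ per edge inside $S$; with your uniform $\delta$ the per-edge penalty is $\delta^2 w_e$, which varies across edges by geometric factors and cannot be separated from the cross-term noise by shrinking $\delta$. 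Second, with distinct edge values the expectations $\E[X_v]$ differ across vertices, so $\E[\sum_{v\in S}X_v]$ depends on $S$ and no single threshold $C$ separates YES from NO instances; the paper fixes this by restricting to regular graphs and adding a tiny high-value/low-probability balancing term so that every $\E[X_v]$ equals the same $\mu$, making $C=k\mu-1$ work. Finally, your ``all equal values'' fallback fails outright: if every level has the same value $w$ then $\max_{v\in S}X_v\in\{0,w\}$, and on a $d$-regular graph $\E[\max_{v\in S}X_v]=w\bigl(1-(1-d\delta)^k\bigr)$ is identical for every $k$-subset, carrying no information about which edges lie inside $S$.
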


We defer the proof to Appendix~\ref{app: np hardness}. We note that~\cite{kleinberg2018team} also show NP-hardness, but for the case of correlated random variables.
Our main result for this section is a PTAS for maximizing the expected largest value. 

\begin{theorem}
For every fixed $\epsilon \in (0,1]$ there exists an algorithm that runs in time polynomial in $n$ and $k$, and outputs a $(1-\epsilon)$ approximate solution to the expected maximum objective.
\end{theorem}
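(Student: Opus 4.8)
The plan is to reduce the problem, through a chain of controlled approximations, to a \emph{structured} instance in which every variable is supported on a common set of only $O(\epsilon^{-1}\log(1/\epsilon))$ values, and then solve the structured instance by a dynamic program over discretized ``profiles'' of the chosen set. I first guess $\opt$ up to a $(1+\epsilon)$ factor: since $\max_i\E[X_i]\le\opt\le k\max_i\E[X_i]$ (the right inequality because $\E[\max]\le\E[\sum]$), there are only $O(\epsilon^{-1}\log k)$ candidate values $G$, I run the rest of the algorithm for each, and keep the best set found — the objective of a fixed set being computable exactly in polynomial time from the explicit supports via $\Pr[\max_{i\in S}X_i\le v]=\prod_{i\in S}\Pr[X_i\le v]$. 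Fix $G\in[\opt,(1+\epsilon)\opt]$ and set $\beta:=\epsilon G$, $\Gamma:=2G/\epsilon$. Zeroing out realized values below $\beta$ changes $\E[\max_{i\in S}X_i]$ by at most $\beta=O(\epsilon\opt)$ for every $S$. At the top end I would \emph{not} truncate — a single heavy tail makes that lossy — but instead replace each $X_i$ by $\min(X_i,\Gamma)$ and introduce a per-variable bonus $c_i:=\E[(X_i-\Gamma)^+]$, passing to the surrogate objective $\mathrm{obj}'(S):=\E[\min(\max_{i\in S}X_i,\Gamma)]+\sum_{i\in S}c_i$. Finally, rounding every value down to a power of $1+\epsilon$ costs another $(1-\epsilon)$ factor and leaves all variables supported on $\{0\}\cup\{\beta(1+\epsilon)^j:0\le j\le J\}$ with $J=\log_{1+\epsilon}(\Gamma/\beta)=O(\epsilon^{-1}\log(1/\epsilon))$.

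Next I would show the surrogate is faithful \emph{for every} subset. Because $(\max_{i\in S}X_i-\Gamma)^+\le\sum_{i\in S}(X_i-\Gamma)^+$, we get $\mathrm{obj}'(S)\ge\E[\max_{i\in S}X_i]$ for all $S$, hence $\max_S\mathrm{obj}'(S)\ge\opt$. Conversely, for every $S$ the definition of $\opt$ gives $\E[\max_{i\in S}X_i]\le\opt$, so by Markov $\Pr[\max_{i\in S}X_i\ge\Gamma]\le\opt/\Gamma\le\epsilon/2$; since $\prod_{i\in S}\Pr[X_i<t]\ge1-\epsilon/2$ for all $t\ge\Gamma$, taking logs yields $\sum_{i\in S}\Pr[X_i\ge t]\le-\ln(1-\epsilon/2)\le\epsilon$, and in that regime second-order inclusion--exclusion gives $\Pr[\max_{i\in S}X_i\ge t]\ge(1-\epsilon)\sum_{i\in S}\Pr[X_i\ge t]$. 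Integrating over $t\ge\Gamma$, the tail $\E[(\max_{i\in S}X_i-\Gamma)^+]$ is within a $(1-\epsilon)$ factor of $\sum_{i\in S}c_i$, so $\mathrm{obj}'(S)\le\E[\max_{i\in S}X_i]/(1-\epsilon)$. Combining the two bounds, any set with $\mathrm{obj}'(S)\ge(1-O(\epsilon))\opt$ satisfies $\E[\max_{i\in S}X_i]\ge(1-\epsilon)\mathrm{obj}'(S)\ge(1-O(\epsilon))\opt$, so it suffices to $(1-O(\epsilon))$-approximate $\max_S\mathrm{obj}'(S)$ on the structured instance.

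On the structured instance $\mathrm{obj}'(S)$ depends on $S$ only through its \emph{profile}: the $J$ products $P_\ell(S):=\prod_{i\in S}\Pr[X_i\le\gamma_\ell]$, via $\E[\max_{i\in S}X_i]=\sum_{\ell=1}^{J}(\gamma_\ell-\gamma_{\ell-1})(1-P_{\ell-1}(S))$, together with the scalar $C(S):=\sum_{i\in S}c_i$. I would scan the variables $1,\dots,n$, maintaining the reachable set of states (number of variables chosen so far, rounded profile), rounding each $P_\ell$ upward onto a grid and $C$ onto a linear grid of spacing $\epsilon\opt$; rounding up guarantees the recorded value never overestimates $\mathrm{obj}'$. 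Since $\sum_\ell(\gamma_\ell-\gamma_{\ell-1})=\Gamma-\beta=O(\opt/\epsilon)$, it is enough to know each $P_\ell$ to additive $O(\epsilon^2)$ for an $O(\epsilon\opt)$ overall error; because each $P_\ell$ undergoes at most $k$ multiplicative updates, a grid of spacing $O(\epsilon^2/k)$ controls the accumulated error, giving $\mathrm{poly}(k/\epsilon)$ cells per coordinate and $(k/\epsilon)^{O(\epsilon^{-1}\log(1/\epsilon))}$ profiles in total — polynomial for fixed $\epsilon$. A standard exchange argument (trace the $\mathrm{obj}'$-optimal set element by element; its rounded profile is always a reachable state; the up-rounding makes the DP's recorded value a lower bound on $\mathrm{obj}'$) shows the returned set $(1-O(\epsilon))$-approximates $\max_S\mathrm{obj}'(S)$, and hence, by the previous paragraph, $\opt$. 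The running time is $\mathrm{poly}(n,k)$ for fixed $\epsilon$.

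The main obstacle is the choice of the cutoff $\Gamma$ in the faithfulness step: it must be only $\mathrm{poly}(1/\epsilon)\cdot\opt$, \emph{independent of $k$}, or else the number of value levels $J$ — and with it the profile dimension and the whole state space — blows up; yet it must be large enough that the discarded tail is well approximated by the additive bonuses $c_i$ for every subset, not just the optimum. The reconciling observation — that the mere fact $\E[\max_{i\in S}X_i]\le\opt$ already caps $\Pr[\max_{i\in S}X_i\ge 2\opt/\epsilon]$ and thereby forces every subset's far tail into the regime where the union bound (inclusion--exclusion) is tight to within $(1-\epsilon)$ — is what makes $\Gamma=\Theta(\opt/\epsilon)$ both small enough and large enough, and is the crux of the argument. (The near-linear running time advertised in the introduction requires additional ideas, to avoid both the $\mathrm{poly}(k)$ error accumulation in the profile and the $\Theta(\log k)$ value levels that a larger $\Gamma$ would entail; the plan above gives the $\mathrm{poly}(n,k)$-time PTAS that the theorem asks for.)
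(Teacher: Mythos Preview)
Your plan is correct and yields a valid PTAS, but it proceeds quite differently from the paper. The paper never guesses $\opt$; it instead defines an intrinsic threshold $\tau$ (the largest value hit with probability $\eps$ by some $k$-set), replaces each variable's mass above $\tau$ by a point mass at a common value $H_{\max}$, and then---crucially---performs a \emph{core--tail decomposition} at a second threshold $\eta$: an $\eps k$-sized greedy set handles the ``cores,'' and the remaining budget is spent only on ``tail'' variables, each of which now has total mass at most $\eps^{1/\eps}$. This last step is what lets the paper classify every tail variable into one of only $C(\eps)$ \emph{types} (via the rounded relative-contribution vector $\REL(i)$) and then brute-force over type histograms, achieving the near-linear $O(n\cdot\polylog k)$ running time. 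Your route sidesteps the core--tail trick entirely: you linearize the far tail with the additive bonuses $c_i$ (a clean surrogate, with a nice faithfulness argument via Markov $+$ second-order inclusion--exclusion), and then run a profile DP over the $J=O(\eps^{-1}\log(1/\eps))$ discretized products $P_\ell(S)$. This is more elementary and arguably more transparent, but the state space is $(k/\eps)^{O(J)}$, so you get $n\cdot k^{O(\eps^{-1}\log(1/\eps))}$ rather than near-linear time---exactly the gap you acknowledge in your last paragraph.

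One small patch: your grid spacing $\eps\opt$ for $C(S)$ would accumulate $k\eps\opt$ error over $k$ insertions. Either refine the spacing to $O(\eps\opt/k)$ (still $\mathrm{poly}(k)$ cells), or---cleaner---do not discretize $C$ at all and store, for each (count, rounded $P$-profile) state, the exact maximum $C$ over sets reaching it; since the objective is monotone in $C$ for fixed profile, this is lossless and keeps the state space unchanged.
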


Our algorithm uses a number of non-trivial pre-processing steps to simplify every random variable $X_i$ to a new random variable $T_i$ that can be completely described via one of constantly many vectors (this constant, of course, depends on $\eps$). After this transformation, the search space is small enough for a brute-force approach to work, by trying all ways to put $k$ ``balls'', the random variables, into a constant number of ``bins'', the different descriptions, resulting in a polynomial time algorithm. We can further reduce this to an almost linear time algorithm. We briefly sketch the main ideas. Missing proofs can be found in Appendix~\ref{app: ptas}.

Our pre-processing works as follows.
First, for some appropriately chosen threshold $\tau$, we replace, for each random variable $X_i$, the outcomes (i.e. points of the support) of $X_i$ with value greater than $\tau$ with a point mass of the same expectation. That is, we construct a new random variable $\hat{X}_i$ that is equal to $X_i$ when $X_i \leq \tau$ and otherwise randomizes between zero and a value $H_{max}$ (formally defined in the appendix), in a way that $\E[ \hat{X}_i | \hat{X}_i > \tau ] = \E[ X_i | X_i > \tau ]$. We show (Claim~\ref{claim:high-tail}) that for any subset of variables, this transformation has a negligible effect on the expected maximum value.

Second, for each random variable $X_i$, we discard outcomes with value smaller than $\eps^2 \tau$. Those values have a negligible contribution to the expected largest value anyway (Claim~\ref{claim:low-vals}).
Third, the new random variables are supported in the range $[\eps^2 \tau, \tau] \cup \{ 0, H_{max} \}$ for each $X_i$. We partition the $[\eps^2 \tau, \tau]$ range into $\ell := \log_{1-\eps}(\eps^2) = O(\frac{1}{\eps} \log(\frac{1}{\eps})) = \tilde{O}(1/\eps)$. Let $I_j = [ \frac{\eps^2}{(1-\eps)^{j-1}} \tau, \frac{\eps^2}{(1-\eps)^{j}} \tau )$.
We further round down the values within each interval $I_j$ to its lower endpoint $\frac{\eps^2}{(1-\eps)^{j-1}} \tau$, losing a $1-\eps$ factor. Thus far we have constructed random variables that are $\ell+2$ point masses (with the last two corresponding to $0$ and $H_{max}$ from Step 1). 

Fourth, for some appropriately chosen threshold $\eta$, we decompose each variable into a {\em core} random variable $C_i$ and a {\em tail} random variable $T_i$ using $\eta$ as the cutoff. We show (Lemma~\ref{lem:core-tail}) that we can set aside a small portion of our ``budget'' $k$ to cover almost the full contribution to the expected maximum from the cores using a simple greedy algorithm. We can therefore focus on optimizing  the tail random variables.

Fifth, for each tail random variable $T_i$, we discard all intervals whose marginal contribution is much smaller than the total expectation from $T_i$. We show (Claim~\ref{claim:intervals}) that this step has a negligible effect on the expected maximum of any subset. For each of the remaining intervals, we consider its marginal contributions relative to the total expectation, and round it to the nearest power of $(1+\eps)$.  

This concludes the pre-processing. 
After the last step, we use a new representation for each tail random variable $T_i$, as follows. Let $\mathcal{I} = \cup_{j = 1}^{\ell} I_j \cup H_{max}$ be the set of intervals $T_i$ can take a value in.
We can write the expectation of $T_i$ as
$\E[T_i] = \sum_{I \in \mathcal{I}} \Pr[T_i \in I] \E[T_i | T_i \in I]$.
We henceforth use $\REL(i)$ to denote the vector of length $\ell+1$, whose $j$-th component is the relative contribution of $I$, the $j$-th interval in $\mathcal{I}$, to the expectation of $T_i$. We overload notation and use $I$ for the index of interval $I$. Thus, we have
$$ \REL(i, I) := \frac{\Pr[T_i \in I] \E[T_i | T_i \in I]}{\E[T_i]}.$$
Notice that $T_i$ is completely described by $\E[T_i]$ and $\REL(i) 	$. Given our last pre-processing step $\REL(i,I)$ only take a constant number of values. The length of $\REL(i)$ is $\ell + 1 = \tilde{O}(1/\epsilon)$, again, a constant; therefore the total number of $\REL(i)$ vectors is a constant $C(\epsilon)$.
Given two random variables with the same $\REL(i)$ vector, it is always preferable to pick the one with the larger expectation (since it stochastically dominates).

Thinking of each different $\REL(i)$ vector as a type, each random variable has one of $C(\epsilon)$ types. At this point, we can simply try all ways to put $k$ ``balls'', the random variables, into $C(\eps)$ ``bins'', the different types, and taking the best one (of the ones corresponding to feasible assignments with respect to the random variables we actually have). This gives a $O(n) k^{C(\epsilon)}$ time algorithm (where the $O(n)$ comes from the running time of the pre-processing steps). We show how to vastly improve the running time by considering only $\log(k)$ possibilities for each type. Specifically, instead of considering putting $1,2,\dots,k$ ``balls'' to bin $I$, we consider $1, (1+\epsilon), (1+\epsilon)^2, \dots, k$ ``balls''. The running time is improved to $O(n (\log k)^{O(C(\epsilon))}) = O(n \polylog(k))$.

%\begin{lemma}
%For all $S \subseteq [n], |S| = k'$, $\E[ \max_{i \in S_t} T_i ] \geq (1-\epsilon) \E[ \max_{i \in S} T_i ]$.
%\end{lemma}

%\alex{Prove the lemma}

%\paragraph{Putting everything together}
%Our algorithm's performance is ...

\section{Hardness for Expected Second Largest Value}\label{sec: hardness for rev}

In this section we prove that, in stark contrast to expected maximum, maximizing the expected second largest value is hard to approximate, assuming the planted clique hypothesis or the exponential time hypothesis. The planted clique hypothesis states that there is no polynomial time algorithm that can distinguish between an Erd{\H{o}}s-R{\'e}nyi random graph $G(n,1/2)$ and one in which a clique of size polynomial in $n$ (e.g. $n^{1/3})$ is planted. The exponential time hypothesis (ETH) states that no $2^{o(m)}$ time algorithm can decide whether any $3SAT$ formula with $m$ clauses is satisfiable.

\begin{theorem}\label{thm: hardness for smax}
Assuming the exponential time hypothesis or the planted clique hypothesis, there is no polynomial time algorithm that, given $n$ random variables $X_1, \dots, X_n$, finds a subset of size $k$ whose expected second largest value is a constant factor of the optimal.
\end{theorem}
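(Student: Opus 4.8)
The plan is to reduce from \textsc{Densest $k$-Subgraph} (equivalently, from the planted-clique distinguishing problem) by a gadget that turns a graph $G=(V,E)$ on $N$ vertices into $N$ independent random variables $X_1,\dots,X_N$, one per vertex, such that for every $S\subseteq V$ with $|S|=k$ the value $\E[\smax_{i\in S}X_i]$ is, up to an absolute constant factor, the quantity $e_G(S)$, the number of edges of $G$ with both endpoints in $S$. Given such a gadget, a $\rho$-factor approximation algorithm for the expected second-largest value immediately yields an $O(\rho)$-factor approximation for \textsc{Densest $k$-Subgraph} on the same vertex set — run the algorithm and return the set it outputs — with no blow-up in instance size (the number of random variables equals $N$). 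Hardness then follows by invoking known inapproximability of \textsc{Densest $k$-Subgraph}: under the exponential time hypothesis it admits no polynomial-time $N^{1/(\log\log N)^{c}}$-approximation for some constant $c>0$ (Manurangsi), which is super-constant and so rules out \emph{every} constant factor; and under the appropriate variant of the planted-clique hypothesis it admits no polynomial-time constant-factor approximation. So this final step is essentially bookkeeping once the gadget is in hand.

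The gadget is built around the only mechanism available under independence for making two of the $X_i$ ``simultaneously large'' in a graph-dependent way: overlapping supports. I would reserve a distinct value $z_e$ for each edge $e\in E$, geometrically spaced (say $z_e=B^{\mathrm{rank}(e)}$ for a large base $B$), and let $X_v$ be supported on $\{0\}\cup\{z_e:e\ni v\}$, taking value $z_e$ with probability proportional to $1/\sqrt{z_e}$, further scaled down by a $\Theta(1/k)$ factor. The analysis has two ingredients. First, in this low-probability regime a Bonferroni / inclusion–exclusion argument gives $\E[\smax_{i\in S}X_i]=\Theta\!\big(\sum_{\{u,w\}\subseteq S}\E[\min(X_u,X_w)]\big)$, since $\Pr[\text{at least two of }\{X_i\}_{i\in S}\text{ exceed }t]=\Theta(\sum_{\{u,w\}\subseteq S}\Pr[X_u>t]\Pr[X_w>t])$ uniformly in $t$, and $\int_0^\infty \Pr[X_u>t]\Pr[X_w>t]\,dt=\E[\min(X_u,X_w)]$. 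Second, for this construction $\E[\min(X_u,X_w)]=\Theta(1)$ when $u,w$ are adjacent — the collision event $X_u=X_w=z_{uw}$ contributes a constant, by the choice of probabilities — while it is negligible (below $1/k^2$, say) when $u,w$ are non-adjacent, because then $\min(X_u,X_w)$ takes only ``mismatched'' values and the geometric spacing makes every cross term vanish. Summing over all $\binom{k}{2}$ pairs gives $\E[\smax_{i\in S}X_i]=\Theta(1)\cdot e_G(S)$; since the $z_e$ and the probabilities have polynomially many bits, the instance is of polynomial size.

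The main obstacle is exactly the design and analysis of this gadget. The naive idea ``make adjacent vertices share a large value'' does not suffice: the second maximum only ever needs \emph{two} large values, so any two variables that happen to be large at the same time inflate $\smax$ whether or not they ``belong together,'' and under independence the graph structure of $S$ threatens to decouple completely — what forces a shared edge to be the sole source of correlated largeness is the geometric spacing of the $z_e$, and getting the two tuning exponents (the growth of $z_e$ versus the decay $1/\sqrt{z_e}$ of its probability) to balance so that adjacent pairs contribute $\Theta(1)$ while all other interactions are $o(1/k^2)$ is the delicate calculation. A secondary point is gap amplification: a single planted clique in $G(N,1/2)$ separates $e_G(S)$ by only a constant factor, so to exclude \emph{all} constant factors one must lean on the already-amplified external hardness of \textsc{Densest $k$-Subgraph} — this is why the planted-clique statement is phrased for a ``variant'' of the hypothesis, and why routing the argument through the ETH-hardness of \textsc{Densest $k$-Subgraph} is the cleanest option.
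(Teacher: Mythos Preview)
Your approach is correct and essentially identical to the paper's: the same gadget (for each edge $e$, put a distinct value $z_e$ with probability $1/\sqrt{z_e}$ into both endpoint variables, with the $z_e$ geometrically spaced), the same reduction from \textsc{Densest $k$-Subgraph}, and the same appeals to Manurangsi and to Alon et al.\ for the underlying hardness. The paper takes $z_e=(2k+1)^{2\pi(e)}$ and proves completeness/soundness by a direct case analysis on $\Pr[\smax_{i\in S}X_i=z_e]$ (showing $\ell\le \E[\smax_{i\in S}X_i]\le \ell+1/(2k)$ where $\ell=e_G(S)$) rather than through your $\E[\min(X_u,X_w)]$ identity, but the two computations are equivalent.

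One small inconsistency to clean up: if you scale the probabilities by an extra $\Theta(1/k)$ factor as you describe, then the adjacent-pair contribution $\E[\min(X_u,X_w)]$ becomes $\Theta(1/k^2)$, not $\Theta(1)$, and your ``below $1/k^2$'' bound for non-adjacent pairs would no longer be negligible after summing over $\binom{k}{2}$ pairs. In fact no separate scaling is needed: with base $B\ge(2k)^2$ the geometric spacing already forces $\Pr[X_i>0]\le\sum_{r\ge1}B^{-r/2}<1/(2k)$, which is exactly the low-probability regime your Bonferroni step requires, and is what the paper's choice of base $2k+1$ achieves.
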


We give a reduction from the densest $\kappa$-subgraph problem, which is known to be hard under both hypotheses~\cite{Manurangsi17,alon2011inapproximability}. We briefly sketch the construction and intuition here, and defer the details to Appendix~\ref{app: proof of hardness}.

Given a graph $G$ on $n$ vertices we construct $n$ random variables $X_1,\dots, X_n$. For every edge $e = (i,j)$ in the graph, we add the value $p^2_e$ with probability $1/p_e$ to the support of $X_i$ and $X_j$, for some value $p_e$. If both $X_i$ and $X_j$ are in a subset, and an edge $(i,j)$ exists, then the second largest value is (exactly equal to) $p^2_e$ with probability at least $1/p^2_e$, which contributes $1$ to the expected second largest value. Furthermore, by picking the $p_e$s very far apart, we can ensure the probability that the second largest value is $p^2_e$ but the largest value is strictly larger is negligible. Therefore, the overall expected second largest value for a subset $S$ is roughly the corresponding number of edges in the graph.

\section{Quantile Based Algorithm}\label{sec: quantile algo}

In this section, we consider continuous random variables that have monotone hazard rate. Omitted proofs can be found in Appendix~\ref{app : quantile appendix}. Let $\alpha^{(i)}_p \geq \inf\{ x | F_i(x) = 1 - \frac{1}{p} \}$, for $p \geq 1$. %Our main result states that selecting the $k$ random variables with the largest $a^{(i)}_{\sqrt{k}}$ value is a constant approximation to both $\opt_{max}(X)$ and $\opt_{smax}(X)$, simultaneously.

\begin{theorem}\label{thm: quantile algo perf}
Picking the $k$ random variables with the highest $\alpha^{(i)}_p$, for $p = \sqrt{k}$, is a $32$ approximation to the optimal subset for the expected largest value and a $1000$ approximation to the optimal subset for the expected second largest value.
\end{theorem}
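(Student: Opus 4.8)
For an MHR variable $X$ write $h$ for its hazard rate and $\alpha_p$ for the value with $\Pr[X>\alpha_p]=1/p$, so that $-\ln\Pr[X>t]=\int_0^t h$ and $\int_0^{\alpha_p}h=\ln p$. Since $h$ is non‑decreasing, $G(t):=\int_0^t h$ is convex with $G(0)=0$, hence $G(t)/t$ is non‑decreasing; this gives the two‑sided tail bound $\Pr[X>t]\le (1/p)^{t/\alpha_p}$ for $t\ge\alpha_p$ and $\Pr[X>t]\ge (1/p)^{t/\alpha_p}$ for $t\le\alpha_p$. Three consequences I will use repeatedly: (a) $\alpha_p\le \E[X\mid X>\alpha_p]\le \alpha_p\bigl(1+\tfrac1{\ln p}\bigr)$ and $\tfrac{\alpha_p(1-1/p)}{\ln p}\le \E[X]\le \alpha_p\bigl(1+\tfrac1{p\ln p}\bigr)\le 2\alpha_p$; (b) $\alpha_p\le\alpha_{p^2}\le 2\alpha_p$ (again from $G(t)/t$ non‑decreasing), so the $1/\sqrt k$‑ and $1/k$‑quantiles of any MHR variable agree up to a factor of $2$; (c) for a set of $m$ variables all with $\alpha_p\ge \theta$, the event that at least one exceeds $\theta$ has probability $\ge 1-(1-1/p)^m$, and the event that at least two do has probability $\ge 1-(1-1/p)^m-\tfrac{m}{p}(1-1/p)^{m-1}$. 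Throughout put $p=\sqrt k$, $q=1/\sqrt k$, and $v_i:=\alpha^{(i)}_{\sqrt k}$ (the algorithm's score).

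\textbf{Step 2: an upper bound on $\opt$.} For any size‑$k$ set $S$ with $u:=\max_{i\in S}v_i$, apply the layer‑cake formula to $\E[\max_{i\in S}X_i]$, bound $\Pr[\max_{i\in S}X_i>t]\le\min\!\bigl(1,\sum_{i\in S}\Pr[X_i>t]\bigr)$, split the integral at $t=2u$, and on $[2u,\infty)$ use $\Pr[X_i>t]\le q^{\,t/v_i}$ (valid since $v_i\le u\le t$): this yields $\E[\max_{i\in S}X_i]\le 2u+\tfrac{2u}{\ln k}\le 3u$ for $k$ not tiny. For the second‑largest I will use, in addition, the Bonferroni inequality $\Pr[\smax_{i\in S}X_i>t]\le\min\!\bigl(1,\tfrac12(\sum_{i\in S}\Pr[X_i>t])^2\bigr)$ (and trivially $\opt_{smax}\le\opt_{max}$) to get $\opt_{smax}\le 3\max_{i\in S^\ast_{smax}}v_i$ as well; the squared term also gives a sharper tail control that I will need when matching the lower bound.

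\textbf{Step 3: a lower bound on the algorithm, via the global top‑$k$ property.} Let $A$ be the algorithm's set; by construction the $j$‑th largest score inside $A$ equals the $j$‑th largest score overall, so $A$'s sorted score vector dominates that of every competitor coordinatewise. Fix the relevant optimum $S^\ast$, put $u:=\max_{i\in S^\ast}v_i$ and $N:=|\{i\in[n]:v_i\ge u/3\}|$, and split into two regimes. In the \emph{broad} regime $N\ge 1/q=\sqrt k$ (for $\smax$: $N\ge 2/q$), the top $\sqrt k$ (resp.\ $2\sqrt k$) scores in $A$ are all $\ge u/3$, so by Step~1(c) the algorithm's largest (resp.\ second‑largest) value is $\ge u/3$ with constant probability; since $\opt\le 3u$ by Step~2, we are done. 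In the \emph{peaked} regime $N<\sqrt k$ (resp.\ $<2\sqrt k$), $A$ contains \emph{every} variable of score $\ge u/3$ — in particular the extremal variable(s) of $S^\ast$ — and its remaining ``light'' slots hold the globally top light scores, so $A$ dominates $S^\ast$ light‑variable by light‑variable except for the $<\sqrt k$ smallest light slots of $S^\ast$. A dyadic layer‑cake comparison of $\E[\max_{i\in A}X_i]$ with $\E[\max_{i\in S^\ast}X_i]$ (and, for $\smax$, of the corresponding $\Pr[\ge 2\text{ exceed }2^\ell]$ quantities, using the Bonferroni bound as an upper bound for $S^\ast$ and a matching lower bound for $A$), controlling the leftover light mass via $\Pr[X_i>t]\le q^{\,t/v_i}$, which decays doubly‑exponentially in the dyadic distance below $v_i$, shows the algorithm recovers a constant fraction of $\opt$.

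\textbf{Main obstacle, constants, and edge cases.} The delicate case is the peaked regime when the extremal variable of $S^\ast$ is ``spread out'' (near‑exponential): then $\max_i v_i$ can exceed the means by a $\log k$ factor, so naively charging $\opt$ to $\max_{i\in S^\ast}v_i$ and then to a single variable of $A$ loses a $\log k$; the comparison must stay bucketwise and must everywhere exploit that $A$ holds the globally largest scores at each level. The second‑largest version is the hardest part, since one must simultaneously lower‑bound $\Pr[\ge 2\text{ exceed }t]$ for $A$ and upper‑bound it for $S^\ast$; this is where the $2/q$ threshold on $N$ and the squaring in the Bonferroni bounds enter, and it is why the constant degrades from $32$ for $\max$ to $1000$ for $\smax$. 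Finally, the values $k\in\{2,3\}$, where $\sqrt k<2$, are handled separately and directly, the set $A$ being of constant size.
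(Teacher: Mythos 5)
Your Step 1 toolbox and Step 2 upper bound are sound, and the broad regime of Step 3 works: if at least $\sqrt{k}$ (resp.\ $2\sqrt{k}$) variables globally have score $\geq u/3$, the algorithm's set contains that many of them and a constant probability of one (resp.\ two) exceeding $u/3$ follows, which matches your $\opt \leq 3u$ bound. The genuine gap is the peaked regime, which you correctly identify as the crux but leave as an unexecuted sketch, and there are concrete reasons it does not go through as described. First, ``$A$ dominates $S^{\ast}$ light-variable by light-variable'' is only a domination of the \emph{scores} $v_i$, not of the distributions: two MHR variables with $v_i \geq v_j$ need not satisfy $X_i \succeq X_j$ stochastically, and below the level $v_j$ your MHR tail inequality points the wrong way ($\Pr[X_j > t] \geq q^{t/v_j}$ for $t \leq v_j$), so the body of $X_j$ is uncontrolled by its score. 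A bucketwise layer-cake comparison therefore cannot simply swap $S^{\ast}$'s light variables for $A$'s. Second, the up-to-$\sqrt{k}$ unmatched light variables of $S^{\ast}$ are not negligible: each can exceed a level close to $u/3$ with probability $1/\sqrt{k}$, so collectively they can contribute $\Theta(u)$ to $\E[\max_{i \in S^{\ast}} X_i]$, while in the peaked regime $\E[\max_{i \in A} X_i]$ (and indeed $\opt$ itself) can be far below $u$ -- e.g.\ of order $u/\sqrt{k}$ -- so you cannot absorb this loss against a lower bound of the form $c\,u$. Third, the specific constants $32$ and $1000$ are asserted but never derived. The $\smax$ case inherits all of these issues in amplified form.

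For comparison, the paper avoids the peaked/broad dichotomy entirely by truncating each $X_i$ at its own score $\alpha^{(i)}_{\sqrt{k}}$. After truncation, the selected set trivially contains the global argmax (and arg-second-max) of the truncated variables whenever the max exceeds the $k$-th largest score, which happens with probability $1-(1-1/\sqrt{k})^k$ (resp.\ $1-(k+1)(1-1/\sqrt{k})^{k-1}$); this is Lemma~\ref{lemma: truncated perf} and it sidesteps any stochastic-dominance comparison between distinct variables. The cost of truncation is then bounded by constants ($28.8$ and $122$, Lemma~\ref{lem: loss of truncating}) via a two-output variant of the Cai--Daskalakis anchoring algorithm that uses only quantiles at or above $1/\sqrt{k}$, so that it returns identical anchors for the original and truncated variables. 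If you want to salvage your route, the cleanest fix is precisely to insert this truncation step before your Step 3 comparison; as written, the peaked-regime argument is missing the idea that makes the comparison between $A$ and $S^{\ast}$ legitimate.
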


We note that we did not try to optimize the constant factors, and further improvements could be possible.
Let $S$ be the subset selected by the algorithm. 
Let $\hat{X}_i$ be the random variable that is identical to $X_i$ up until $\alpha^{(i)}_{\sqrt{k}}$, and takes value $\alpha^{(i)}_{\sqrt{k}}$ with probability $\frac{1}{\sqrt{k}}$. We analyze the algorithm in two steps.

First, in Section~\ref{subsec: first step of quantile} we show that $S$ is an almost optimal subset for the truncated random variables (Lemma~\ref{lemma: truncated perf}), i.e. for some small $\delta_k, \delta'_k > 0$, $\E[ \max_{i \in S} \hat{X}_i ] \geq (1-\delta_k) \E[ \max_{i \in A} \hat{X}_i]$ and $\E[ \smax_{i \in S} \hat{X}_i ] \geq (1-\delta'_k) \E[ \smax_{i \in B} \hat{X}_i]$, for all $A, B \subseteq [n]$, $|A| = |B| = k$. Second, in Section~\ref{subsec: second step of quantile} we show that by truncating at $\sqrt{k}$ we only lose constant factors (Lemma~\ref{lem: loss of truncating}). %Formally, for all $A \subseteq [n], |A| = k$, $\E[ \max_{i \in A} X_i ] \leq 28.8 \E[ \max_{i \in A} \hat{X}_i ] $ and $\E[ \smax_{i \in A} X_i ] \leq 122 \E[ \smax_{i \in A} \hat{X}_i ]$. 
Given the two lemmas, we complete the proof of Theorem~\ref{thm: quantile algo perf} in Section~\ref{subsec: proof of quantile thm}.

\subsection{Almost optimal selection for truncated random variables}\label{subsec: first step of quantile}

We start by showing that for truncated random variables we can make an almost optimal selection. The intuition is as follows. Let $\alpha_{min}$ be the $k$-th largest $\alpha^{(i)}_p$ value. For the random variables in $S$, the probability that each of them exceeds $\alpha_{min}$ is at least $1/p$. In fact, the largest one exceeds $S$ with probability at least $1 - \prod_{i=1}^k (1-1/p)$. If $p \in \Omega(1/\sqrt{k})$, then with high probability both the largest and the second largest value exceed $\alpha_{min}$. Conditioned on this event, the set $S$ we have chosen contains the random variable with the highest value and the random variable with the second highest value among all ($n$) random variables.

\begin{lemma}\label{lemma: truncated perf}
Let $\hat{X}_i$ be the random variable that takes value $x$ when (the possibly non MHR) random variable $X_i$ takes value $x$, for all $x < \alpha^{(i)}_p$, and takes value $\alpha^{(i)}_p$ when $X_i$ takes value at least $\alpha^{(i)}_p$ (i.e. with probability $1/p$). Let $S$ be the subset of random variables, $|S| = k$, with the largest $\alpha_p$ values. 
Then for all $A \subseteq [n]$, $|A| = k$, (1)
$\E[ \max_{i \in S} \hat{X}_i ] \geq \left( 1 - (1 - 1/p)^k \right) \E[ \max_{i \in A} \hat{X}_i ]$, and (2)
$\E[ \smax_{i \in S} \hat{X}_i ] \geq \left( 1 - (k+1) (1-1/p)^{k-1} \right) \E[ \smax_{i \in A} \hat{X}_i ]$.
\end{lemma}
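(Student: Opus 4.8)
The plan is to reduce both inequalities to a comparison between $S$ and the \emph{entire} ground set $[n]$, and only then exploit the threshold structure of the truncation. First I would fix notation: let $\alpha_{\min}$ denote the $k$-th largest value among $\alpha^{(1)}_p,\dots,\alpha^{(n)}_p$, so that (after an infinitesimal perturbation making the $\alpha^{(i)}_p$ distinct) $S=\{i:\alpha^{(i)}_p\ge\alpha_{\min}\}$. Two structural facts drive everything. First, for $i\in S$ we have $\Pr[\hat X_i\ge\alpha_{\min}]\ge\Pr[\hat X_i=\alpha^{(i)}_p]=1/p$, since $\alpha^{(i)}_p\ge\alpha_{\min}$ and the truncation puts mass exactly $1/p$ on $\alpha^{(i)}_p$. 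Second, for $j\notin S$ we have $\hat X_j\le\alpha^{(j)}_p\le\alpha_{\min}$ with probability $1$, since $\hat X_j$ is capped at $\alpha^{(j)}_p$. Finally, for any $A$ with $|A|=k$ we have $\max_{i\in A}\hat X_i\le\max_{i\in[n]}\hat X_i$ and $\smax_{i\in A}\hat X_i\le\smax_{i\in[n]}\hat X_i$ pointwise, so it suffices to prove both bounds with $A$ replaced by $[n]$.

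For part (1) I would introduce the event $E$ that $\hat X_i\ge\alpha_{\min}$ for at least one $i\in S$. By independence and the first structural fact, $\Pr[E^c]\le(1-1/p)^k$. On $E$ some variable of $S$ reaches a value $\ge\alpha_{\min}$, which dominates every variable outside $S$ by the second fact, so $\max_{i\in[n]}\hat X_i=\max_{i\in S}\hat X_i$; on $E^c$ every variable is $\le\alpha_{\min}$, so $\max_{i\in[n]}\hat X_i\le\alpha_{\min}$. Writing $M=\E[\max_{i\in S}\hat X_i]$, comparing against the constant $\alpha_{\min}$ on $E$ gives $M\ge\alpha_{\min}\Pr[E]$, hence $\alpha_{\min}\le M/\Pr[E]$, and therefore
\[
\E\bigl[\max_{i\in[n]}\hat X_i\bigr]=\E\bigl[\max_{i\in S}\hat X_i\,\mathbf{1}_E\bigr]+\E\bigl[\max_{i\in[n]}\hat X_i\,\mathbf{1}_{E^c}\bigr]\le M+\alpha_{\min}\Pr[E^c]\le\frac{M}{\Pr[E]}.
\]
Rearranging yields $M\ge\Pr[E]\,\E[\max_{i\in[n]}\hat X_i]\ge\bigl(1-(1-1/p)^k\bigr)\E[\max_{i\in[n]}\hat X_i]$, which is (1).

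Part (2) is the same argument with $E$ replaced by the event $F$ that at least \emph{two} variables of $S$ reach a value $\ge\alpha_{\min}$. The one new computation is the estimate on $\Pr[F^c]$: setting $Y_i=\mathbf{1}[\hat X_i\ge\alpha_{\min}]$ for $i\in S$ (independent, each with success probability $\ge1/p$), I would bound $\Pr[\sum_{i\in S}Y_i=0]\le(1-1/p)^k$ and, crudely, $\Pr[\sum_{i\in S}Y_i=1]\le\sum_{i\in S}\prod_{j\in S,\,j\ne i}(1-\Pr[Y_j=1])\le k(1-1/p)^{k-1}$, so $\Pr[F^c]\le(1-1/p)^k+k(1-1/p)^{k-1}\le(k+1)(1-1/p)^{k-1}$. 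On $F$ the two largest order statistics of $(\hat X_i)_{i\in[n]}$ are both $\ge\alpha_{\min}$ hence can be taken within $S$, and the second-largest $S$-value $\ge\alpha_{\min}\ge$ every non-$S$ value, so $\smax_{i\in[n]}\hat X_i=\smax_{i\in S}\hat X_i$; on $F^c$ at most one variable exceeds $\alpha_{\min}$, so $\smax_{i\in[n]}\hat X_i\le\alpha_{\min}$. Repeating the displayed computation with $\max\mapsto\smax$, $E\mapsto F$, $M\mapsto M'=\E[\smax_{i\in S}\hat X_i]$ gives $M'\ge\Pr[F]\,\E[\smax_{i\in[n]}\hat X_i]\ge\bigl(1-(k+1)(1-1/p)^{k-1}\bigr)\E[\smax_{i\in[n]}\hat X_i]$.

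The step I expect to be the crux is precisely the reduction to $[n]$: a direct comparison of $S$ with an adversarial $A$ fails because $\alpha_{\min}$ can far exceed $\E[\max_{i\in A}\hat X_i]$ when $A$ is made of low-quantile variables, so the error term $\alpha_{\min}\Pr[E^c]$ is not obviously small against that benchmark — whereas routing through $\E[\max_{i\in[n]}\hat X_i]$ lets the inequality $M\ge\alpha_{\min}\Pr[E]$ control $\alpha_{\min}$ for free. The only other points needing care are the tie-breaking at $\alpha_{\min}$ (dispatched by perturbation) and the loose-but-sufficient union bound giving the $(k+1)(1-1/p)^{k-1}$ factor in (2).
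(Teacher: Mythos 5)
Your proposal is correct and follows essentially the same route as the paper: you use the same events (at least one, resp.\ at least two, variables of $S$ exceeding $\alpha_{\min}$), the same probability bounds $(1-1/p)^k$ and $(k+1)(1-1/p)^{k-1}$, and the same structural observation that on the good event $S$ captures the global largest (resp.\ top two) values. The only difference is that you make the final "combining" step fully explicit via the inequality $\alpha_{\min}\Pr[E]\le\E[\max_{i\in S}\hat X_i]$, which the paper leaves implicit; this is a sound and arguably cleaner way to close the argument.
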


\subsection{Loss from truncation}\label{subsec: second step of quantile}

In this section we bound the ratio between the expected highest and expected second highest value between $X_i$ and $\hat{X}_i$, for any subset $A$ of size $k$. For ease of notation we, without loss of generality, consider the subset $A = [k]$. We consider an algorithm that, given as inputs $k$ random variables $X_1, \dots, X_k$ outputs anchoring points $\beta_1$ and $\beta_2$ (Algorithm~\ref{algo: beta}). If the variables are MHR then the contribution to $\E[ \max_i X_i ] $ and $\E[ \smax_i X_i ]$ from the tail, formally events larger $\beta = \max\{ \beta_1, \beta_2 \}$ and $\beta_1$, respectively, is upper bounded by (roughly) a constant times $\beta$ and $\beta_1$, respectively. Second, the outputs of this algorithm satisfy, even for non-MHR random variables, that the probability of $\max_i X_i $ and $\smax_i X_i$ being above $\beta$ and $\beta_1$ is at least a constant. Finally, the outputs $\beta_1$ and $\beta_2$ when the algorithm is executed on input $X_1, \dots, X_k$ and on input $\hat{X}_1, \dots, \hat{X}_k$ (as defined above, i.e. $\hat{X}_i$ is $X_i$ truncated at $\alpha^{(i)}_{\sqrt{k}}$) are exactly the same. The upper bound on the tail connects $\beta_1, \beta_2$ with the expectations of the original random variables, while the lower bound on the probability (plus Markov's inequality) connects $\beta_1$ and $\beta_2$ with the expectations of the truncated random variables. Combining all these ingredients we get the main lemma for this step.

\begin{lemma}\label{lem: loss of truncating}
Let $X_1, \dots, X_k$ be MHR random variables. Let $\hat{X}_i$ be the random variable that is identical to $X_i$ up until $\alpha^{(i)}_{\sqrt{k}}$, and takes value $\alpha^{(i)}_{\sqrt{k}}$ with probability $1/\sqrt{k}$.
Then $\E[ \max_i X_i ] \leq 28.8 \E[\max_{i} \hat{X}_i]$ and $\E[ \smax_i X_i ] \leq 122 \E[\smax_{i} \hat{X}_i]$.
\end{lemma}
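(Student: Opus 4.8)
The plan is to follow the three-ingredient outline sketched before the lemma statement, which reduces the problem to controlling the ``tail'' contribution (above the anchoring points $\beta_1,\beta_2$) by exploiting the MHR property, while controlling the ``body'' contribution (below the anchors) via a union/Markov argument that does not use MHR and therefore transfers verbatim to the truncated variables. Concretely, write $\beta = \max\{\beta_1,\beta_2\}$ and decompose
$$\E[\max_i X_i] = \E[\max_i X_i \cdot \mathbf{1}(\max_i X_i \le \beta)] + \E[\max_i X_i \cdot \mathbf{1}(\max_i X_i > \beta)],$$
and similarly for $\smax$ using the threshold $\beta_1$. The first step is to show that Algorithm~\ref{algo: beta} picks $\beta_1,\beta_2$ so that $\Pr[\max_i X_i > \beta]$ and $\Pr[\smax_i X_i > \beta_1]$ are each at least some absolute constant (say bounded below by something like $1/4$ or $1/2$); crucially this probability statement is a statement about the lower part of the distribution, so it is unchanged when each $X_i$ is replaced by $\hat X_i$ (since $\hat X_i$ agrees with $X_i$ below $\alpha^{(i)}_{\sqrt k}$ and the anchors are chosen at or below these quantiles). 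Combined with the fact that $\max_i \hat X_i \ge \min\{\max_i X_i,\ \text{the anchor level}\}$, this gives $\E[\max_i \hat X_i] \gtrsim \beta$ and $\E[\smax_i \hat X_i] \gtrsim \beta_1$ — these are the lower bounds that put the right-hand sides of the lemma in play.

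The second step is the genuinely analytic one: bound $\E[\max_i X_i \cdot \mathbf 1(\max_i X_i > \beta)] \le c_1 \beta$ and $\E[\smax_i X_i \cdot \mathbf 1(\smax_i X_i > \beta_1)] \le c_2 \beta_1$ for MHR variables. The standard tool here is that an MHR random variable has exponentially decaying conditional tails: if $\Pr[X_i > t] = q$ then for MHR $X_i$ and any $s \ge 0$, $\Pr[X_i > t + s] \le q \cdot e^{-h(t) s}$ where $h(t)$ is the hazard rate at $t$, and in particular the conditional expectation $\E[X_i - t \mid X_i > t]$ is bounded by a constant times the ``scale'' at which the survival probability drops by a constant factor. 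Since the anchors $\beta_1,\beta_2$ are chosen by the algorithm precisely to be the points where the relevant survival probabilities of the $\max$/$\smax$ are constant, one translates this per-variable tail bound into a bound on the tail of the order statistic: the overshoot of $\max_i X_i$ above $\beta$ has expectation $O(\beta)$, and similarly the overshoot of $\smax_i X_i$ above $\beta_1$ is $O(\beta_1)$. One also needs the trivial bound $\E[\max_i X_i \cdot \mathbf 1(\le \beta)] \le \beta$ and $\E[\smax_i X_i \cdot \mathbf 1(\le \beta_1)] \le \beta_1$, so altogether $\E[\max_i X_i] \le (1+c_1)\beta$ and $\E[\smax_i X_i] \le (1+c_2)\beta_1$. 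Chaining this with Step~1's lower bounds $\beta \le c_3 \E[\max_i \hat X_i]$ and $\beta_1 \le c_4 \E[\smax_i \hat X_i]$ yields the claimed inequalities with explicit constants $28.8$ and $122$ after bookkeeping.

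The main obstacle I anticipate is Step~2 for the \emph{second}-largest value: controlling $\smax_i X_i$ in the tail is subtler than $\max_i X_i$ because the event $\{\smax_i X_i > t\}$ involves \emph{two} of the variables being large simultaneously, and the overshoot is governed by the second-order statistic of a collection of heterogeneous (different hazard rates, different quantile scales) MHR variables. One has to be careful that the anchor $\beta_2$ for the largest variable and $\beta_1$ for the second-largest interact correctly — e.g. when the second-largest exceeds $\beta_1$ the largest is typically around $\beta_2$, and one must not double-count the mass that pushes the largest far above $\beta$. I expect the cleanest route is to further split according to whether $\max_i X_i \le \beta$ or $> \beta$: on the first event $\smax_i X_i \le \beta \le \beta$-type bound applies; on the second, bound $\E[\smax_i X_i \cdot \mathbf 1(\max > \beta)]$ by $\E[\max_i X_i \cdot \mathbf 1(\max > \beta)]$ which is already $O(\beta)$ from the first part — but this only works if $\beta_1$ and $\beta$ are within a constant factor, which must be established from how Algorithm~\ref{algo: beta} defines them. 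Getting these constants to line up (and tracking the loss from Markov's inequality, from the union bound over $k$ variables each contributing $1/\sqrt k$ to the relevant probability, and from the MHR tail integral) is where the bulk of the technical work lies, and where the somewhat large numerical constants $28.8$ and $122$ come from.
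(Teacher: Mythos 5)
Your overall architecture is the same as the paper's: observe that Algorithm~\ref{algo: beta} queries only quantiles at or above the $1/\sqrt{k}$ truncation level so its outputs coincide on $X$ and $\hat X$; lower-bound $\Pr[\max_i \hat X_i \ge \beta]$ and $\Pr[\smax_i \hat X_i \ge \beta_1]$ by absolute constants without MHR and apply Markov to get $\beta \lesssim \E[\max_i \hat X_i]$, $\beta_1 \lesssim \E[\smax_i \hat X_i]$; then use MHR to bound the tail of the original order statistics above the anchors. Step~1 is fine. The gap is in Step~2, which you leave at the level of ``one translates this per-variable tail bound into a bound on the tail of the order statistic.'' That translation fails at the threshold $\beta$ itself: for a variable eliminated in round $t$ one only knows $\alpha^{(i)}_{\sqrt{k/2^t}} \le \beta_t$, so $Con[X_i \ge \beta] \le 6\beta\sqrt{2^t/k}$, and summing over the $k/2^{t+1}$ such variables gives $3\beta\sqrt{k/2^t}$ per round, i.e.\ $\Theta(\sqrt{k}\,\beta)$ in total (already for $k$ i.i.d.\ exponentials). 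The paper's essential device, which your sketch is missing, is to move the truncation point out to $d\beta$ with $d=\log_2(1/\eps)$: the MHR inequality $d\,\alpha_p \ge \alpha_{p^d}$ converts the per-variable tail at $d\beta_t$ into $6d\beta_t(\sqrt{2^t/k})^{\,d}$, the round-$t$ sums then form a convergent geometric series totalling $O(\sqrt{\eps}\log_2(1/\eps)\beta)$, and one pays $\log_2(1/\eps)\beta$ for the body below $d\beta$; optimizing $\eps$ (at $0.00075$ and $0.0074$ respectively) is exactly where $28.8$ and $122$ come from.

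Your proposed route for the second-largest value has a second, independent problem that you yourself flag but do not resolve: bounding $\E[\smax_i X_i\cdot\mathbf 1(\max_i X_i>\beta)]$ by the max's tail requires $\beta_1=\Theta(\beta)$, and this need not hold --- $\beta_2=\alpha^{(\cdot)}_{\sqrt2}$ of the lone surviving variable can exceed $\beta_1$ by an arbitrary factor. The paper avoids $\beta_2$ entirely for the $\smax$ bound by using the elementary inequality $\smax_i X_i \le \sum_i X_i - X_{j_0}$, where $j_0$ is the unique survivor of the elimination rounds; the tail of $\smax$ above $d\beta_1$ is then controlled by the sum of the remaining variables' tails (Lemma~\ref{lem: bound on sum of tails}), which depends only on $\beta_1$. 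Without these two ingredients --- the $\log_2(1/\eps)$-inflated threshold and the exclusion of the survivor --- the quantitative claims of the lemma do not follow from your outline.
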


Algorithm~\ref{algo: beta} is a modification of an algorithm of Cai and Daskalakis ~\cite{cai2015extreme}. Verbatim, their result states that for $k$ independent MHR random variables, $X_1, \dots, X_k$, there exists an algorithm that outputs an anchoring point $\beta$ such that $\Pr[ \max_i X_i \geq \beta/2] \geq 1 - \frac{1}{\sqrt{e}}$ and $\int_{2\beta \log_2 (1/\epsilon)}^\infty x f_{max_i X_i}(x) dx \leq 36 \beta \epsilon \log_2(1/\epsilon), \forall \epsilon \in (0,1/4),$
where $f_{max_i X_i}(x)$ is the probability density function of $\max_i X_i$. For our purposes, this high level view is not sufficient. This theorem gives us a value $\beta$ such that truncating the $X_i$s at $2\beta \log(1/\eps)$ has a small effect on the expected maximum of the $X_i$s. This fact is very surprising, but on first glance seems of little use here. First, we do not know which subset of $[n]$ to use to compute $\beta$ (selecting a good subset is, in fact, the problem we're trying to solve). Second, it is unclear how to use this information to bound $\E[ \max_i X_i ]/\E[ \max_i \hat{X}_i ]$. Third, this theorem tells us nothing about the expected second largest value. We need a more flexible approach.

Taking a closer look at their proof, the algorithm of~\cite{cai2015extreme} looks at quantiles of the form $2^t/k$. Specifically, in round $t$, for $t = 0, \dots, \log_2 k-1$, it sorts the remaining random variables by $\alpha_{k/2^t}$ and eliminates the bottom half, keeping track of $\beta_t$, the smallest threshold among surviving random variables. $\beta$ is the maximum of the $\beta_t$s and the $\alpha_2$ value of the unique surviving random variable. Truncating our random variables at $\alpha_k$ and then executing this algorithm for $X_i$ and $\hat{X}_i$ would give the same $\beta$. Unfortunately, such a truncation point is not good enough for the bound on the expected second largest value in Lemma~\ref{lemma: truncated perf}. Our first modification is to instead focus on the $\sqrt{2^t/k}$ top quantile values. This guarantees that the algorithm does not use any information from the parts where $X_i$ and $\hat{X}_i$ differ. 
In order to take care of both $\max_i X_i$ and $\smax_i X_i$ at the same time, further modifications in the book-keeping (which values to remember at each round) and the analysis are necessary.
Overall our algorithm works as follows. In round $t$, for $t = 0,\dots, \log_2 k - 1$, it sorts the random variables by threshold $\alpha_{\sqrt{k/2^t}}$ and eliminates the bottom half. We record the largest threshold among the eliminated random variables. The maximum of these records is $\beta_1$, the threshold we use for the second highest value $\smax_i X_i$. $\beta_2$ is the threshold $\alpha_{\sqrt{2}}$ for the unique random variable that survived the $\log_2 k -1$ rounds of elimination. The maximum of $\beta_1$ and $\beta_2$ is the threshold we use for the highest value $\max_i X_i$. We assume without loss of generality that $k$ is a power of $2$; we can always add random variables that take value deterministically zero.

%Intuitively, in round $t$, the algorithm sorts the remaining random variables in decreasing $\alpha^{(i)}_{\sqrt{k/2^t}}$ and eliminates the bottom half. $\beta_t$ is the threshold in round $t$, i.e. the highest $\alpha^{(i)}_{\sqrt{k/2^t}}$ value among the ones from the random variables we eliminate in round $t$. This is a slight deviation from the algorithm of~\cite{cai2015extreme} that maintains the smallest value among the random variables that survived; there is virtually no change when bounding the contribution to the tail, but this slightly lower value is needed in order to satisfy that the second highest value is above $\max_t \beta_t$ with constant probability.

\begin{algorithm}[H]
\SetAlgoLined
\KwInput{$\alpha^{(i)}_q$ for $i = 1,\dots,k$, and $q = \sqrt{k}/\sqrt{2}^t$, for $t = 0,\dots, \log_2k -1$.}
Define the permutation $\pi_0(i) = i$, $i \in [k]$. Let $Q_0 = [k]$.\\
 \For{$t = 0, \dots, \log_2 k - 1$}{
  For $j \in [k/2^t]$, sort the numbers $\alpha^{(\pi_t(j))}_{\sqrt{k}/\sqrt{2}^t}$ in decreasing order $\pi_{t+1}$ such that \\
  $\alpha^{(\pi_{t+1}(1))}_{\sqrt{k}/\sqrt{2}^t} \geq \alpha^{(\pi_{t+1}(2))}_{\sqrt{k}/\sqrt{2}^t} \geq \dots \geq \alpha^{(\pi_{t+1}(k/2^t))}_{\sqrt{k}/\sqrt{2}^t}$ \;
  $Q_{t+1} = \{ \pi_{t+1}(i) | i \leq k/2^{t+1}\}$ \;
  $\beta_t = \alpha^{(\pi_{t+1}(k/2^{t+1} +1))}_{\sqrt{k}/\sqrt{2}^t}$ \;
 }
Set $\beta_{\log_2k} = \alpha^{(\pi_{\log_2 k}(1))}_{\sqrt{2}}$ \;
Output $\beta_1 = \max_{t=0,\dots,\log_2 k -1}  \beta_t$ and $\beta_2 = \beta_{\log_2k}$\;
 \caption{Algorithm for finding $\beta$}\label{algo: beta}
\end{algorithm}

We bound the contribution to the tail above $\beta_1$ and $\beta_2$ separately in Appendix~\ref{subsec: upper bound}. %, largely following the analysis of~\cite{cai2015extreme}. 
In Section~\ref{subsec: lower bound} we lower bound the probability that the maximum and second maximum is above $\max\{\beta_1,\beta_2\}$ and $\beta_1$, respectively; importantly these lower bounds hold even if the random variables are not MHR.
We complete the proof of Lemma~\ref{lem: loss of truncating} in Appendix~\ref{subsec: proof of loss of truncation}.

\subsection{Putting everything together}\label{subsec: proof of quantile thm}

\begin{proof}[Proof of Theorem~\ref{thm: quantile algo perf}]
Let $S$ be the subset of $[n]$ selected by our algorithm, i.e. the set of random variables with the largest $\alpha_{\sqrt{k}}$. Let $W^*$ be the subset of $[n]$ that maximizes the expected maximum and $R^*$ be the subset that maximizes the expected second maximum.
\begin{align*}
    \E[ \max_{i \in S} \hat{X}_i ] &\geq^{(Lem.~\ref{lemma: truncated perf})} \left( 1 - (1 - 1/\sqrt{k})^k \right) \E[ \max_{i \in W^*} \hat{X}_i ] \\
    &\geq 0.91 \E[ \max_{i \in W^*} \hat{X}_i ] \\
    &\geq^{(Lem.~\ref{lem: loss of truncating})} \frac{0.91}{28.8} \E[ \max_{i \in W^*} X_i ],
\end{align*}
where in the second inequality we lower bounded for the value at $k=2$.
Similarly,
\begin{align*}
    &\E[ \smax_{i \in S} \hat{X}_i ] \geq^{(Lemma~\ref{lemma: truncated perf})} \left( 1 - (k+1) (1-1/\sqrt{k})^{k-1} \right) \E[ \smax_{i \in R^*} \hat{X}_i ]\\
    &\geq 0.122 \E[ \smax_{i \in R^*} \hat{X}_i ] \geq^{(Lemma~\ref{lem: loss of truncating})} \frac{0.122}{122} \E[ \smax_{i \in R^*} X_i ] = \frac{1}{1000} \E[ \smax_{i \in R^*} X_i ]. \qedhere
\end{align*}
\end{proof}

\section{Experiments}\label{sec: experiments}

We run two types of experiments to evaluate the methods described above; we restrict attention to the simple score-based methods and exclude the more complex PTAS from Section~\ref{sec:ptas}. 
First, we evaluate the methods on synthetic data. That is, we construct explicit distributions that we give as inputs to our methods and measure the expected largest and expected second largest value. We observe that the vast differences in approximation factors do not appear. In other words, despite the poor approximation guarantees of the quantile method in theory, in practice it does just as well as the theoretically superior (better approximation guarantee without the MHR assumption, at least for expected maximum) method of~\cite{kleinberg2018team}. In the same type of experiment, we slightly deviate from measuring the expected highest and second highest value, and compare the methods in a different dimension: how the scarcity of data affects each method's selection. Second, we evaluate the methods on real data, and specifically likes-data from Twitter (from \cite{Kaggle}).
In practice explicit distributions typically only arise if we fit a model to data. A slightly more realistic assumption is historical samples from the same distribution. In our experiments, we go one step further: we consider the practical scenario where we observe only one value for each feature vector. Here, we have an implicit distribution over our uncertainty. We develop regression-based analogs of the score-based algorithms and compare them. We include some additional figures and details about the implementations in Appendix~\ref{app: experiments}.

\begin{figure}[t]
\centering
\begin{minipage}[b]{0.45\linewidth}
  \includegraphics[width=0.95\textwidth]{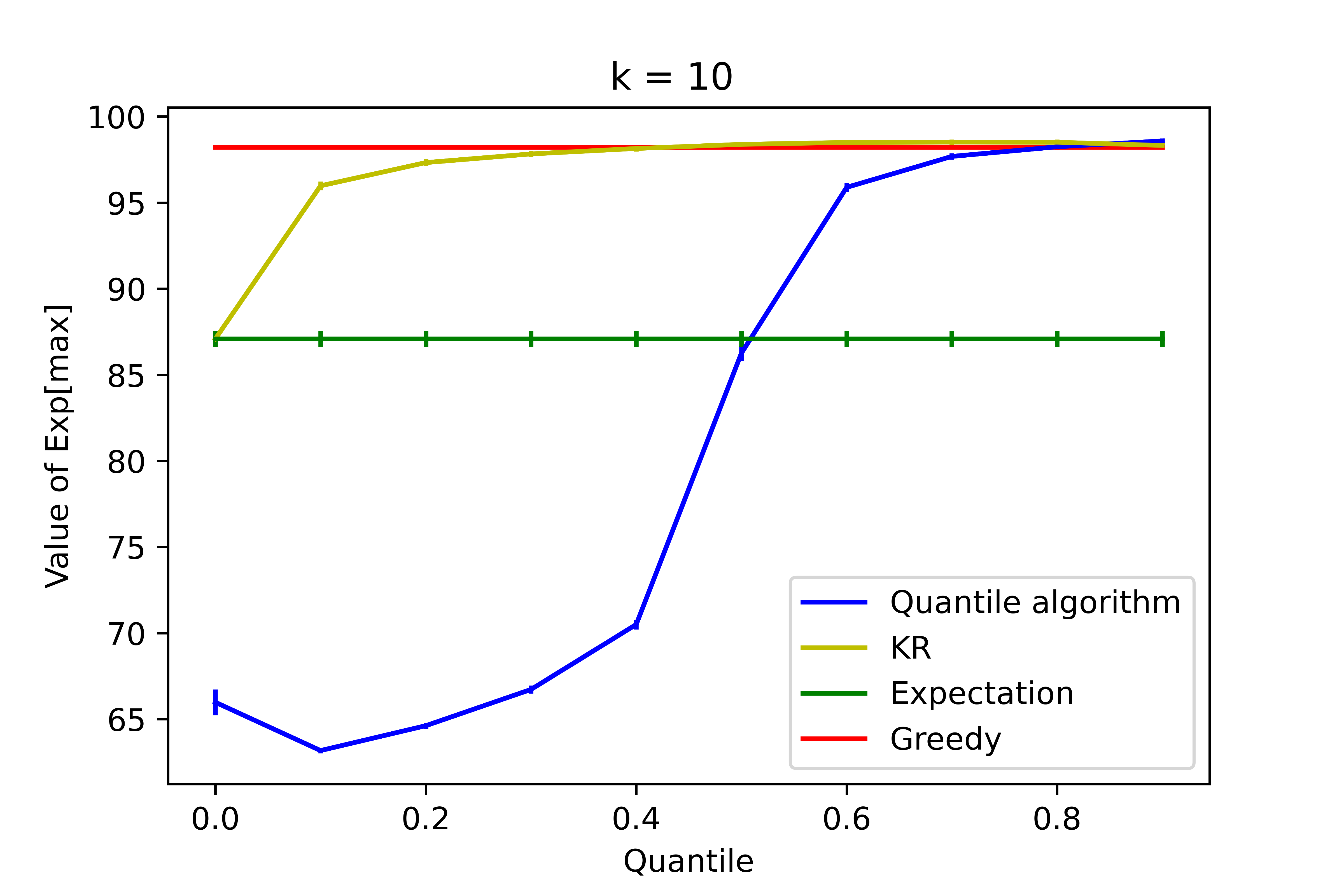}
\caption{Comparing the average performance (errors bars show standard deviation divided by square root of number of experiments) of the score-based algorithms and Greedy, for selecting $k$ out of $n=500$ distributions, for the expected maximum objective. }
\label{fig:1}
\end{minipage}
\quad
\begin{minipage}[b]{0.45\linewidth}
\includegraphics[width=0.95\textwidth]{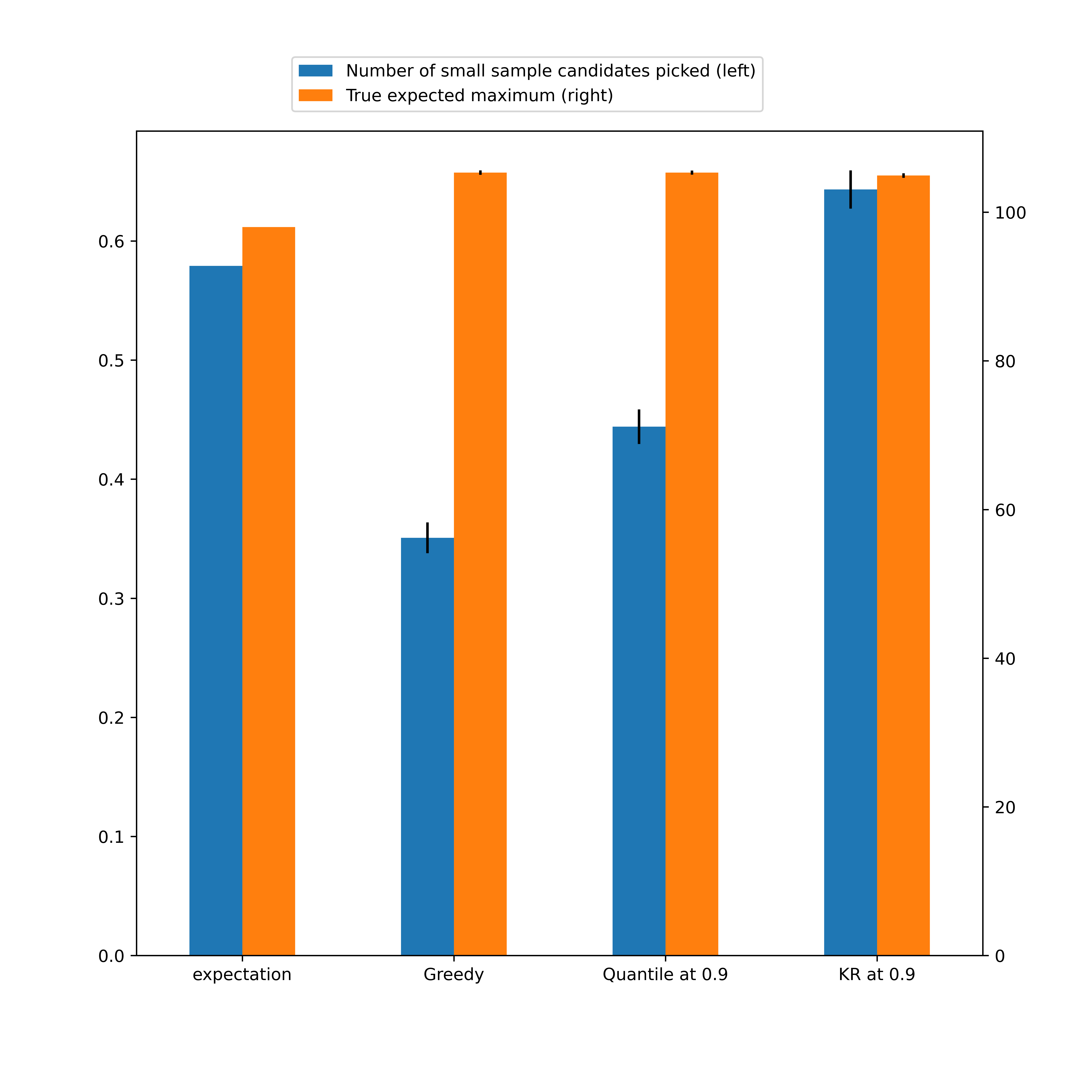}
\caption{Percentage of small data candidates selected and expected maximum for different algorithms.}
\label{fig:bias}
\end{minipage}
\end{figure}

\paragraph{Synthetic data.} We construct $n=500$ (independent but non-identical) Normal distributions $\mathcal{N}_i(\mu_i, \sigma_i)$, where each mean $\mu_i$ is drawn from $U[0,60]$ and $\sigma_i$ is drawn from $U[0,30]$. Since we want to deal with non-negative and bounded support, we further \emph{clip} the distributions as follows: for each $\mathcal{N}_i$, we make $5000$ draws, taking a min with $V_{max} := 1000$ and a max with $0$, and then take the empirical distribution. We note that this process yields an \emph{explicit} distribution that we can give as input to each method. We run this process $100$ independent times and compare the following methods, for three different values of $k = 10, 20, 30$: (1) Quantile, the algorithm from Section~\ref{sec: quantile algo}, (2) KR, the algorithm from~\cite{kleinberg2018team}, (3) Expectation: pick the $k$ distributions with the highest expected values, and (4) Greedy Submodular Optimization: Pick distributions iteratively, picking the next distribution to maximize the increment in expected reward. This is the standard greedy  $(1-1/e)$-approximation algorithm from submodular optimization. It is relevant here since the expected maximum objective is a submodular function (see Appendix~\ref{app: ptas} or~\cite{kleinberg2018team} for a proof). The Quantile and KR algorithms are parameterized by the quantiles picked. Note that ``quantile'' is used to refer to the bottom quantile. So, for example, the correct instantiation of the KR method would be to use the $1-1/k$, the value such that a $1-1/k$ fraction of entries is below. We choose a range of quantiles for each method, and observe the performance under each one.

%Note that we do not know how to compute the optimal set of $k$ distributions (or the optimal value itself). We do have the PTAS from Section~\ref{sec:ptas} which gives an arbitrarily good approximation, but it is not quite practical to implement, and also not as relevant in the next application, so we exclude it.

The results for the expected maximum objective are presented in Figure~\ref{fig:1}. We include figures for the expected second largest value objective in Appendix~\ref{app: experiments}. We can see that the KR algorithm is always outperforming expectation, while the Quantile algorithm's performance is more sensitive to the quantile selected. However, despite the poorer worst-case approximation guarantees of the Quantile algorithm, it performs just as well as the algorithms with better guarantees. For the parameter choices that we have theoretical guarantees for, though, ($1-\frac{1}{\sqrt{k}}$ for Quantile and $1 - 1/k$ for KR) the two algorithms, as well as the greedy algorithm, are indistinguishable in terms of performance.

\paragraph{More versus fewer data. }
We also run the following ``selection-bias'' experiment on synthetic data. In the experiments so far, we drew samples from a Normal distribution $\mathcal{N}_i(\mu_i, \sigma_i)$, took the empirical distribution, and used that as the input to our algorithms. The expected largest/second largest value is one measure that we can use to compare the different methods.  In theory, improving the objective function is always a better outcome. In practice, in particular in the context of the broader impact of machine learning research, it is important to explore the bias introduced by different algorithms. Algorithmic bias due to \emph{data scarcity} is a well-documented bias in practical ML (e.g.~\cite{mehrabi2019survey}). Here, we explore the bias of each method with respect to the number of samples available from each distribution.  After sampling $\mu_i$ and $\sigma_i$ for each Normal, we also sample a binary label $\{ l, m \}$, with probability $1/2$. If the label is $m$ our algorithms see $5000$ samples from this random variable, as before. If the label is $l$ they only see $10$. We compare each method along two metrics: in terms of the percentage of small labeled distributions selected, and in terms of the true expected maximum of the subset selected.
We notice that all methods have comparable performance in terms of expected maximum, but select very different candidates in terms of their labels. See Figure~\ref{fig:bias}, and additional figures in Appendix~\ref{app: experiments}.

\begin{figure}
\centering
\begin{subfigure}{.5\textwidth}
  \centering
  \includegraphics[width=0.95\textwidth]{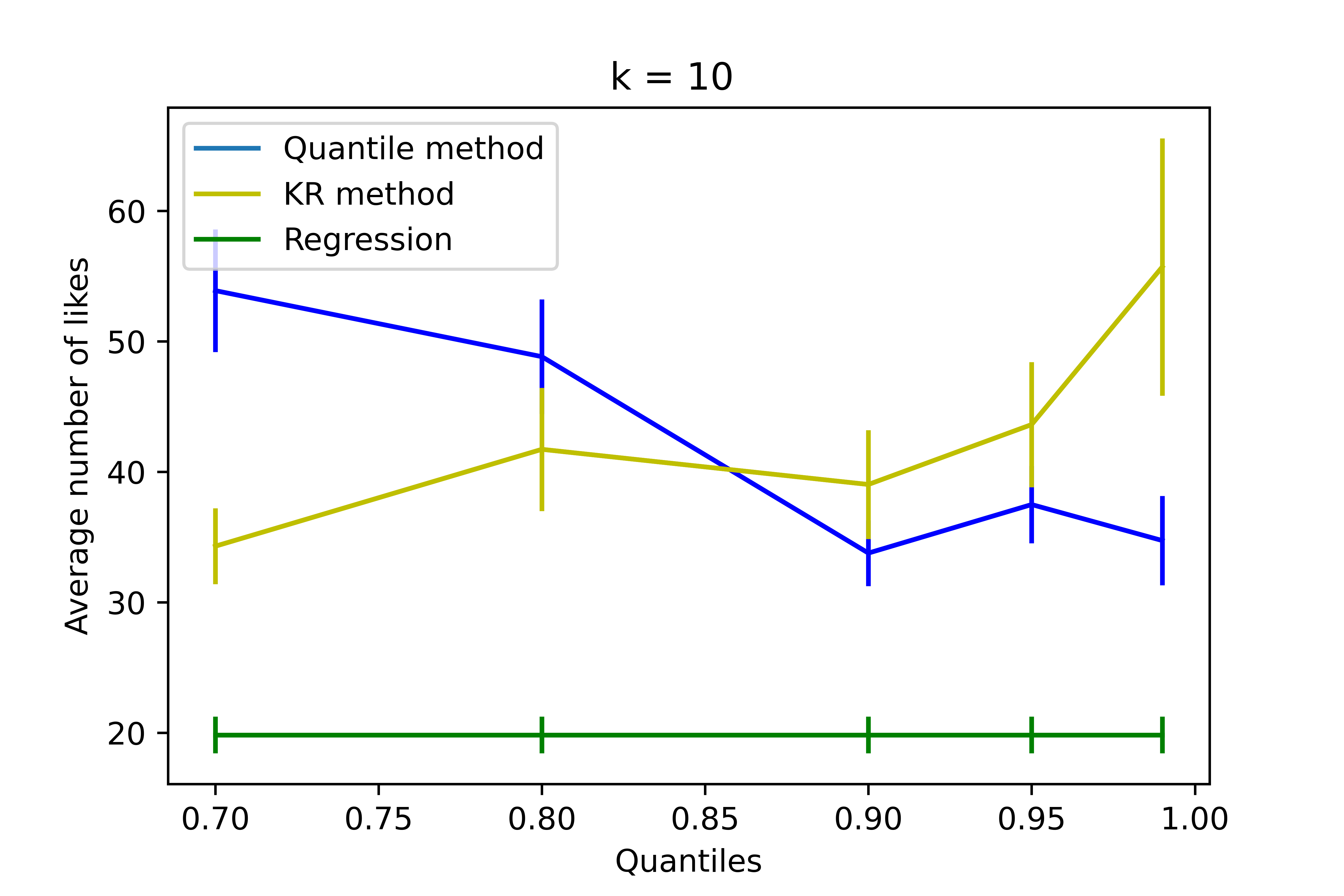}
  \caption{$k=10$}
  \label{fig:sub1nlp}
\end{subfigure}%
\begin{subfigure}{.5\textwidth}
  \centering
  \includegraphics[width=0.95\textwidth]{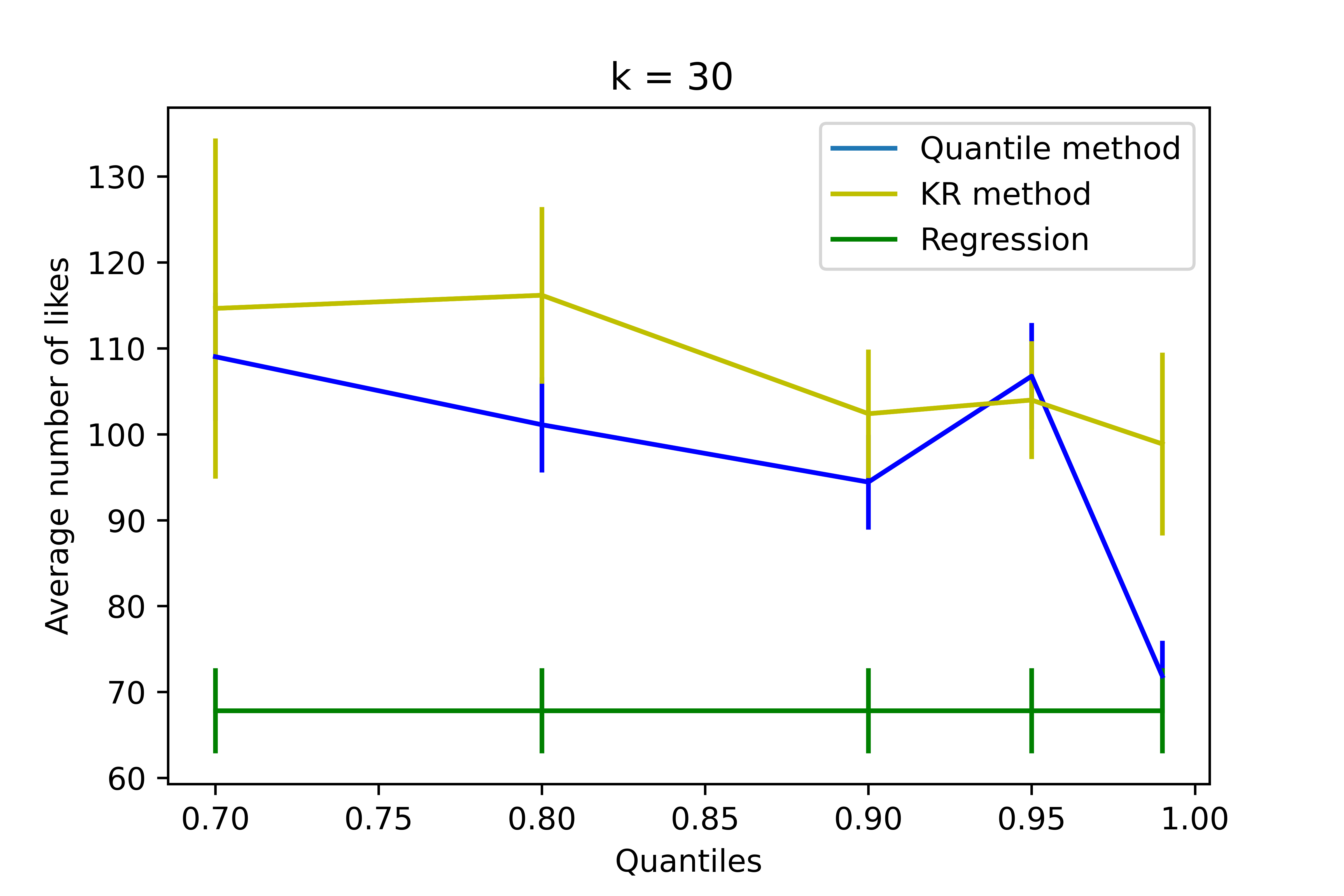}
  \caption{$k=30$}
  \label{fig:sub2nlp}
\end{subfigure}
\caption{Comparing the average performance (errors bars show standard deviation divided by square root of number of experiments) of the KR, quantile and regression methods, for selecting $k$ out of $n=500$ tweets, for the expected maximum objective.}
\label{fig:2}
\end{figure}

\paragraph{Real data and the regression-based algorithms.}
In most practical situations, we do not have access to an explicit distribution. Instead, we have multi-dimensional feature vectors associated with each data point. To apply the insights from our algorithms to this kind of data, we develop regression-based analogs of the score-based methods (Quantile and KR), and evaluate them together with the standard squared loss algorithm (which naturally corresponds to picking the $k$ candidates with the $k$ largest expected values).

We start with a dataset of $8$ million tweets, sorted in chronological order. We use the first $2$ million for collecting features: we drop all entries with fewer than $5$ likes and pre-process the text, and use as features the (distinct) words that appear in a certain range. This step gives us $\approx 5500$ features. The value in our case is the number of likes a tweet received.

Given a feature vector, there is some correct distribution over the value. At run time, we would like to have an explicit description of these distributions that we can use as inputs to our algorithms, and select a good subset of tweets. Despite the lack of such explicit descriptions, notice that our methods do not use full access to the explicit description. The Quantile method only needs access to a specific quantile, the method that picks the $k$ candidates with the largest expected value only needs the expected value, while the KR method only needs the expected value above a certain quantile. Here we replace exact access (or even sample access) of this information, and instead work with estimates produced by learning algorithms (that are trained using feature vector, number of likes pairs).
For the Quantile method we need to estimate a specific quantile, which becomes the usual quantile loss. 
For the ``Regression'' method we train using squared loss.
The KR-based regression works as follows. The ideal score for the KR algorithm is of the form $\E[ X_i | X_i \text{ in top $q$ quantile }]$. Our implementation first trains using quantile loss on a number of different quantiles. We filter the data using these quantile models, throwing away all entries with real value (``likes'') below the prediction. For the remaining entries we train using squared loss.

We train our models using the next $2$ million tweets (without dropping any entries). We train a neural network, with $2$ hidden layers,  with quantile loss at quantiles $[0.7, 0.8, 0.9, 0.95, 0.99]$ (and this is the estimator utilized by the Quantile method). We also train a similar network with squared loss (this is the estimator for the Regression method). For the KR method we first filter the data using the quantile models, and then use a neural net with squared loss for the rest. Lastly, we use the remaining $4$ million tweets, that we map to our features, for testing. We randomly perturb this data, and split in non-overlapping chunks of $n=500$ tweets each. A single experiment samples a chunk of entries, ranks by each method's score and then reveals the true largest/second  largest number of likes in the entries picked by each method. We do $8000$ experiments.\footnote{We removed one outlier tweet with $280,000$ likes (incidentally, Quantile at $0.99$  did include this tweet in its set, even for $k=10$). The second largest number of likes was $40,000$.}
The results are presented in Figure~\ref{fig:2}. We can see that both the Quantile method and the KR method outperform regression (note that the parameters for which we have theoretical guarantees are $q = 1 - 1/\sqrt{10} \approx 0.7$ for quantile and $q= 1- 1/10 = 0.9$ for KR).

\section*{Acknowledgments}
Aviad Rubinstein is supported by NSF CCF-1955205.
This work was done while Alexandros Psomas was a visiting scientist at Google research MTV

\bibliographystyle{alpha}
\bibliography{refs.bib}

\newpage

\appendix

\section{Proof of Theorem~\ref{thm: np hardness for expected max}}\label{app: np hardness}

\begin{proof}
We reduce from \textsc{Independent set} for regular (i.e. equal degrees) graphs to our problem.
Let $G=(V,E)$ be an undirected, regular graph on $n$ vertices and $m$ edges.

For every vertex $v \in V$ we construct a random variable $X_v$. The support of $X_v$ includes the value $m^{4p_{u,v}}$, if the graph contains the edge $(u,v)$, and this value occurs with probability $m^{-2p_{u,v}}$, where $p_{u,v}$ is an integer specific to the edge $(u,v)$. Concretely, we can arbitrarily number the edges $e_1, e_2, \dots$ and set $p_{u,v} = i$ when $e_i = (u,v)$. Notice that $\Pr[ X_v = p_{u,v} ]^2 \cdot p_{u,v} = 1$. We normalize, by adding a balancing term. Each random variable $X_v$ has some probability ($O(m^{-8m})$) of taking value $m^{10m}$, in a way that the expectation of every random variable is the same number $\mu$.

\paragraph{Completeness}
Let $S$ be an independent set of size $k$ in $G$. Then,
\begin{align*}
\E[ \max_{v \in S} X_v ] &= \sum_{v \in S} \sum_{e = (u,v) \in E} m^{4p_{u,v}} \cdot \Pr[ X_v = m^{4p_{u,v}} ] \cdot \Pr[ \forall z \in S, z \neq v: X_z < m^{4p_{u,v}} ] \\
&=  \sum_{v \in S} \sum_{e = (u,v) \in E} m^{2p_{u,v}} \cdot (1 - \Pr[ \exists z \in S, z \neq v: X_z \geq m^{4p_{u,v}} ])\\
&\geq  \sum_{v \in S} \sum_{e = (u,v) \in E} m^{2p_{u,v}} \cdot (1 - \sum_{ z \in S: z \neq v} \Pr[ X_z \geq m^{4p_{u,v}} ] ) \\
&=^{\text{($u \notin S$ since $S$ is an IS)}}  \sum_{v \in S} \sum_{e = (u,v) \in E} m^{2p_{u,v}} \cdot (1 - \sum_{ z \in S: z \neq v} \Pr[ X_z > m^{4p_{u,v}} ] ) \\
&=  \sum_{v \in S} \sum_{e = (u,v) \in E} m^{2p_{u,v}} \cdot (1 - \sum_{ z \in S: z \neq v} \sum_{ m^{4p'} > m^{4p_{u,v}} }  \Pr[ X_z = m^{4p'} ] ) \\
&=  \sum_{v \in S} \sum_{e = (u,v) \in E} m^{2p_{u,v}} \cdot (1 - \sum_{ z \in S: z \neq v} \sum_{ m^{4p'} > m^{4p_{u,v}} }  \frac{1}{\sqrt{m^{4p'}}} ) \\
&\geq  \sum_{v \in S} \sum_{e = (u,v) \in E} m^{2p_{u,v}} \cdot (1 -  \sum_{ p' > p_{u,v} }  \frac{1}{m^{2p'}} ) \\
&\geq \sum_{v \in S} \sum_{e = (u,v) \in E} m^{2p_{u,v}} \cdot (1 - \frac{2}{m^{2(p_{u,v}+1)}} ) \\
&= \E[ \sum_{v \in S} X_v ] - \sum_{v \in S} \sum_{e = (u,v) \in E} \frac{2}{m^2} \\
&= \E[ \sum_{v \in S} X_v ] - \sum_{v \in S} |N(S)| \frac{2}{m^2} \\
&\geq \E[ \sum_{v \in S} X_v ] - \frac{2}{m}.
\end{align*}

\paragraph{Soundness}

Say $G$ does not contain an independent set of size $k$. Let $S \subseteq V$ be a subset of vertices=random variables of size $|S|=k$. %, hence $S$ must contain an edge $e^* = (u^*,v^*)$.
Let $E_1(S)$ denote the set of edges where one endpoint is in $S$, and likewise let $E_2(S)$ denote the set of edges where both endpoints are in $S$. 
\begin{align*}
\E[ \max_{v \in S} X_v ] &= \sum_{(u,v) \in E_1(S)} m^{4p_{u,v}} \cdot \Pr[ X_v = m^{4p_{u,v}} ] \cdot \Pr[ \forall z \in S: X_z \le m^{4p_{u,v}} ] \\
& \;\;\; + 2\sum_{(u,v) \in E_2(S)} m^{4p_{u,v}} \cdot \Pr[ X_v = m^{4p_{u,v}} \text{\;OR\;} X_u = m^{4p_{u,v}}] \cdot \Pr[ \forall z \in S: X_z \le  m^{4p_{u,v}} ] \\
&\le  \sum_{(u,v) \in E_1(S)} m^{4p_{u,v}} \cdot \Pr[ X_v = m^{4p_{u,v}} ] 
 + 2\sum_{(u,v) \in E_2(S)} m^{4p_{u,v}} \cdot \Pr[ X_v = m^{4p_{u,v}} \text{\;OR\;} X_u = m^{4p_{u,v}}]  \\
& = \sum_{(u,v) \in E_1(S)} m^{2p_{u,v}} + \left(\sum_{(u,v) \in E_2(S)} 2m^{2p_{u,v}}-1\right)\\
& = \E[\sum_{v \in S} X_v]-|E_2(S)|\\
& \le E[\sum_{v \in S} X_v]-1.
\end{align*}

\end{proof}

\section{Missing proofs from Section~\ref{sec:ptas} }\label{app: ptas}

\subsection{Preliminaries on submodular functions}
Before we present the details of our algorithm, we prove (for completeness) the following well known property of submodular functions.

\begin{claim}\label{claim:submodular}
Let $f(S)$ be a submodular function. Then, $\max\limits_{S: |S| \leq (1-\eps)k} f(S) \geq (1-\eps) \max\limits_{S: |S| \leq k} f(S).$
\end{claim}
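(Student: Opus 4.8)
The plan is to exhibit, for any set $S$ with $|S| \le k$, a subset $S' \subseteq S$ with $|S'| \le (1-\eps)k$ whose value is at least $(1-\eps) f(S)$; applying this to the maximizer $S^\star$ of the right-hand side then gives the claim. Without loss of generality I would assume $f$ is monotone and normalized with $f(\emptyset) \ge 0$ (the expected-maximum objective is nonnegative, monotone and submodular), and I would take $|S| = k$ exactly (otherwise there is nothing to prove). The natural construction is to \emph{delete the $\eps k$ elements of $S$ with smallest marginal contribution} and keep the rest.

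Here are the steps in order. First, order the elements of $S$ as $e_1, \dots, e_k$ and write $S_j = \{e_1, \dots, e_j\}$, so that $f(S) = \sum_{j=1}^{k} \big(f(S_j) - f(S_{j-1})\big)$ is a sum of $k$ nonnegative marginal terms $\Delta_j := f(S_j) - f(S_{j-1}) \ge 0$ (nonnegativity uses monotonicity). Second — and this is the one point that needs care — I want to choose the ordering so that the $\eps k$ elements I drop account for at most an $\eps$ fraction of $f(S)$. A clean way: pick the ordering greedily in \emph{reverse}, i.e. repeatedly remove from the current set the element whose removal decreases $f$ the least, recording it as the next-to-be-deleted element; submodularity guarantees that in such a greedy-removal ordering the recorded marginals are monotone in the right direction, so the $\eps k$ smallest marginals really are the first $\eps k$ removed. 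Then by an averaging / pigeonhole argument, among the $\eps k$ candidate prefixes of deletions, some prefix $D$ of size $\le \eps k$ satisfies $f(S) - f(S \setminus D) \le \eps\, f(S)$, since the total of all the deletion-marginals telescopes to at most $f(S)$. Setting $S' = S \setminus D$ gives $|S'| \ge (1-\eps)k$ and $f(S') \ge (1-\eps) f(S)$.

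Applying this with $S = S^\star$, the optimal set of size $\le k$ for the right-hand side, yields a set $S'$ of size $\le (1-\eps)k$ with $f(S') \ge (1-\eps) f(S^\star) = (1-\eps)\max_{|S|\le k} f(S)$, which is exactly the desired inequality since $S'$ is feasible for the left-hand maximization. The main obstacle is the combinatorial bookkeeping in the second step: one must be careful that ``drop the $\eps k$ cheapest elements'' is well-defined and that submodularity (not just monotonicity) is used correctly to bound the lost value by $\eps f(S)$ — a naive ordering by standalone marginals $f(\{e_i\}) - f(\emptyset)$ does \emph{not} work, which is why the reverse-greedy ordering (or, equivalently, directly invoking the telescoping-sum averaging over any fixed ordering and then choosing the best contiguous block to delete) is the right device. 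Everything else is routine.
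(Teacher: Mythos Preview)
Your reverse-greedy deletion argument is correct: repeatedly removing from the current set the element whose removal decreases $f$ the least produces deletion marginals $\delta_1 \le \delta_2 \le \cdots \le \delta_k$ (the monotonicity uses submodularity exactly as you say), and since $\sum_i \delta_i = f(S) - f(\emptyset) \le f(S)$, the first $\eps k$ of them sum to at most $\eps f(S)$. So just take $D$ to be exactly the first $\eps k$ deleted elements; there is no need for any pigeonhole over ``candidate prefixes'', and in fact you must take $|D| = \eps k$ (not $\le \eps k$, which would give $|S'| \ge (1-\eps)k$, infeasible for the left-hand side). Your parenthetical alternative---deleting the cheapest contiguous block in an \emph{arbitrary} fixed ordering---is \emph{not} equivalent and does not work as stated: those block marginals $f(S_j)-f(S_{j-1})$ are with respect to growing prefixes, not with respect to deletions from the full set $S$, so their sum does not control $f(S)-f(S\setminus D)$.

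The paper's proof takes a different, somewhat more direct route. It lets $B$ be the \emph{best} size-$(1-\eps)k$ subset of the optimizer $A$ and argues that each excluded element $v \in A\setminus B$ satisfies $\Delta(v\mid B) \le f(B)/|B|$: optimality of $B$ gives $\Delta(v\mid B) \le \Delta(u\mid B\setminus\{u\})$ for every $u\in B$, and those right-hand sides average to at most $f(B)/|B|$ by submodularity. Summing over the $\eps k$ excluded elements yields $f(A) \le f(B)(1+\eps/(1-\eps)) = f(B)/(1-\eps)$. Your approach is constructive (it names the subset via a greedy procedure) at the price of tracking a whole sequence of marginals; the paper's is non-constructive but shorter, using optimality of $B$ in place of your monotonicity-of-deletions step.
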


\begin{proof}
Let $A = arg\max_{S: |S| \leq k} f(S)$ and $B \subseteq A, |B| = (1-\eps)k$ such that $f(B)$ is maximized among all subsets of $A$ of size $(1-\eps)k$. Let $\Delta( v | S )$ be the marginal contribution of an element $v$ to a set $S$, i.e. $\Delta( v | S ) = f(S \cup \{ v \}) - f(S)$.

\begin{align*}
f(A) &= f(B) + \sum_{i = 1}^{\eps k} \Delta( v_i | B \cup \{ v_1, \dots, v_{i=1} \} )  \\
&\leq f(B) + \sum_{i = 1}^{\eps k} \Delta( v_i | S )  \\
&\leq f(B) + \sum_{i = 1}^{\eps k}\frac{ f(B) }{|B|} \\
&\leq f(B) + \sum_{i = 1}^{\eps k} \frac{ f(B) }{(1-\eps)k} \\
&= f(B) + \eps k \frac{ f(B) }{(1-\eps)k} \\
&= \frac{1}{1-\eps} f(B)
\end{align*}
\end{proof}

We also prove that the expected maximum is a submodular function.

\begin{fact}
$f(S) = \mathbb{E}[ \max_{i \in S} \{ X_i \} ]$ is a monotone submodular set function.
\end{fact}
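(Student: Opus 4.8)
The plan is to prove that $f(S) = \mathbb{E}[\max_{i \in S} X_i]$ is monotone and submodular. Monotonicity is immediate: adding a random variable to $S$ can only increase the pointwise maximum, hence the expectation. For submodularity, I would verify the diminishing-returns condition: for any $T \subseteq S \subseteq [n]$ and any $j \notin S$, we want $f(T \cup \{j\}) - f(T) \geq f(S \cup \{j\}) - f(S)$. By linearity of expectation it suffices to prove the corresponding pointwise inequality for every fixed realization $(x_1, \dots, x_n)$ of the random variables, i.e. that the deterministic set function $g(S) = \max_{i \in S} x_i$ is submodular. Since the $X_i$ are independent (indeed even without independence), once the realized values are fixed the claim is purely combinatorial.

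So the main step is: fix values $x_i$ and show $g(T \cup \{j\}) - g(T) \geq g(S \cup \{j\}) - g(S)$ for $T \subseteq S$, $j \notin S$. Write $a = \max_{i \in T} x_i$ and $b = \max_{i \in S} x_i$, so $a \leq b$ since $T \subseteq S$. The marginal of $j$ over $T$ is $\max(a, x_j) - a = \max(0, x_j - a)$, and the marginal over $S$ is $\max(0, x_j - b)$. Since $a \leq b$, we have $\max(0, x_j - a) \geq \max(0, x_j - b)$, which is exactly the desired inequality. (One should handle the boundary case $T = \emptyset$ by the usual convention $g(\emptyset) = 0$, or note that the same argument works with $a = 0$, or that all realized values are nonnegative in our setting.) Then integrate over the product distribution of $(X_1, \dots, X_n)$ to conclude submodularity of $f$, and combine with the monotonicity observation.

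I do not expect any genuine obstacle here — this is a standard fact and the only thing to be careful about is the empty-set convention and making sure the pointwise-to-expectation step is stated cleanly (it is valid because expectation is a nonnegative linear functional, so it preserves the inequality between the two marginal-difference functions of $S$). I would present it in two short paragraphs: first the reduction to the deterministic case via linearity of expectation, then the one-line $\max(0, x_j - a) \geq \max(0, x_j - b)$ computation, followed by a sentence noting monotonicity is trivial since enlarging $S$ enlarges the maximum.

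\begin{proof}[Proof sketch]
Monotonicity is clear: if $S \subseteq S'$ then $\max_{i \in S} X_i \leq \max_{i \in S'} X_i$ pointwise, so $f(S) \leq f(S')$. For submodularity, fix $T \subseteq S \subseteq [n]$ and $j \notin S$. By linearity of expectation, $f(T \cup \{j\}) - f(T) - f(S \cup \{j\}) + f(S) = \mathbb{E}\big[g(T\cup\{j\}) - g(T) - g(S\cup\{j\}) + g(S)\big]$, where $g(A)$ denotes $\max_{i \in A} x_i$ for a fixed realization $(x_1,\dots,x_n)$ (with $g(\emptyset) = 0$). It thus suffices to show the bracketed quantity is nonnegative for every realization. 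Let $a = g(T) \leq g(S) = b$. Then $g(T \cup \{j\}) - g(T) = \max(0, x_j - a)$ and $g(S \cup \{j\}) - g(S) = \max(0, x_j - b)$, and since $a \leq b$ we have $\max(0, x_j - a) \geq \max(0, x_j - b)$. Taking expectations gives $f(T \cup \{j\}) - f(T) \geq f(S \cup \{j\}) - f(S)$, so $f$ is submodular.
\end{proof}
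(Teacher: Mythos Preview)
Your proof is correct and follows essentially the same approach as the paper: both fix a realization, verify the diminishing-returns inequality for the deterministic function $g(A)=\max_{i\in A}x_i$ via the identity $g(A\cup\{j\})-g(A)=\max\{0,x_j-\max_{i\in A}x_i\}$, and then pass to expectation (the paper phrases this last step as ``nonnegative linear combinations of submodular functions are submodular''). Your write-up is slightly more careful about monotonicity and the empty-set convention, but the argument is the same.
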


\begin{proof}
Let $f_\va(S) = \max_{i \in S} \va_i$ be $f(S)$ restricted on an outcome $\va$. $f_\va$ is obviously submodular. For completeness, let $T$ be a subset of the random variables and $S$ a subset of $T$, and consider some $i \in [n] \setminus T$. $f_\va(S \cup \{ i \}) - f_\va(S) = \max\{ \max_{j \in S} \va_j  , \va_i \} - \max_{j \in S} \va_j = \max\{ \va_i - \max_{j \in S} \va_j  , 0 \} \geq \max\{ \va_i - \max_{j \in T} \va_j  , 0 \} = f_\va(T \cup \{ i \}) - f_\va(T)$. Finally, non-negative linear combinations of submodular functions are submodular, so $f(S) = \sum_{\va} \Pr[\va] f_\va (S)$ is submodular.
\end{proof}

\subsection{PTAS and analysis}

\paragraph{Step 1: preprocessing far tails ($>\tau$).}

Let $\tau$ be the largest number such that $\exists S_\tau \subseteq [n], |S_{\tau}| = k$ that satisfies $\Pr[ \max_{i \in S_{\tau}} X_i \geq \tau ] = \eps$. For each $i \in [n]$, let $H_i := \E[ X_i | X_i > \tau  ]$, and let $H_{max} := \max_{i \in [n]} H_i$. Let $\hat{X}_i$ be the random variable that takes value equal to $X_i$ if $X_i \leq \tau$, and if $X_i > \tau$, it takes either value $H_{max}$ w.p. $\frac{H_i}{H_{max}}$ or zero w.p. $1-\frac{H_i}{H_{max}}$.
Note that $\E[ \hat{X}_i | \hat{X}_i > \tau  ] = H_i = \E[ X_i | X_i > \tau  ]$.

\begin{claim}\label{claim:high-tail}
For every subset $S \subseteq [n], |S| = k$,
\[
\E[ \max_{i \in S} \hat{X}_i ] \in \left[(1-\epsilon) \E[ \max_{i \in S} X_i ], \frac{1}{(1-\epsilon)} \E[ \max_{i \in S} X_i ] \right].
\]
\end{claim}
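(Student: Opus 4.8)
The plan is to show that $\hat X_i$ and $X_i$ are close enough, in a distributional sense, that the expected maximum of any $k$-subset cannot change by more than a $(1\pm\eps)$ factor. The key structural observation is that $\hat X_i$ and $X_i$ agree exactly on the event $\{X_i \le \tau\}$; they differ only on the ``far tail'' event $\{X_i > \tau\}$, which by the choice of $\tau$ has small aggregate probability for any $k$-subset. I would set up the comparison by conditioning on which variables (if any) exceed $\tau$, and handling the two regimes separately.

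\textbf{Lower bound.} First I would decompose $\E[\max_{i\in S}X_i]$ according to whether the max comes from the ``bulk'' ($\le\tau$) or from the far tail. Concretely, $\E[\max_{i\in S}X_i] = \E[\max_{i\in S}X_i \cdot \mathbf{1}[\forall i\in S: X_i \le \tau]] + \E[\max_{i\in S}X_i \cdot \mathbf{1}[\exists i\in S: X_i > \tau]]$. On the first event, $\hat X_i = X_i$ for all $i\in S$, so the two contributions are identical. For the second event, I would show the tail contribution to $\E[\max_{i\in S}X_i]$ is at most $\eps$ times the whole, using the definition of $\tau$: $\Pr[\max_{i\in S}X_i > \tau] \le \eps$ (this is where we use that $\tau$ is the \emph{largest} threshold achieving probability exactly $\eps$, so no $k$-subset exceeds it with probability more than $\eps$), combined with the fact that conditioned on $X_i > \tau$, the value $H_i = \E[X_i \mid X_i > \tau]$ is the same for $\hat X_i$. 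Symmetrically, $\hat X_i$'s contribution from its own far tail is nonnegative, so $\E[\max_{i\in S}\hat X_i] \ge \E[\max_{i\in S}X_i\cdot\mathbf{1}[\text{all }\le\tau]] = \E[\max_{i\in S}X_i] - (\text{tail contribution}) \ge (1-\eps)\E[\max_{i\in S}X_i]$, where the last step needs the tail contribution bounded by $\eps\,\E[\max_{i\in S}X_i]$; this follows because on the non-tail event $\max_{i\in S}X_i$ can be as large as $\tau$ with probability $1-\eps$, so the total is at least $(1-\eps)$ fraction. The cleanest way is: tail contribution $= \E[\max_{i\in S}X_i] - (\text{bulk})$, and bulk $\ge$ (something comparable). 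I would instead argue directly that swapping the far tail of each $X_i$ for a point mass of equal conditional expectation $H_i$ can only change the expected max of the \emph{whole} by a controlled amount, by the following coupling: run both processes on the same bulk outcomes, and on tail outcomes replace $X_i$'s value by an independent draw from $\hat X_i$'s tail; since both have the same conditional mean, and the max is $1$-Lipschitz and monotone, a careful accounting of the (probability $\le\eps$) tail event gives the bound.

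\textbf{Upper bound.} This is the symmetric direction: $\E[\max_{i\in S}\hat X_i] \le \frac{1}{1-\eps}\E[\max_{i\in S}X_i]$. Here I would bound $\hat X_i$'s far-tail contribution. On the event that some $\hat X_i$ fires its $H_{max}$-mass, the contribution is at most $H_{max}\cdot\Pr[\exists i\in S:\hat X_i = H_{max}]$; I would relate $\Pr[\exists i\in S:\hat X_i = H_{max}]$ to $\Pr[\exists i\in S: X_i > \tau] \le \eps$ (it is at most this, since $\hat X_i$ fires $H_{max}$ with probability $\frac{H_i}{H_{max}}\Pr[X_i>\tau] \le \Pr[X_i>\tau]$), and relate $H_{max} = \max_i \E[X_i\mid X_i>\tau]$ to $\E[\max_{i\in S}X_i]$ — this is the crux: since for the index $i^\star$ achieving $H_{max}$ we need $i^\star$ (or at least \emph{some} variable with comparable conditional tail mean) to be available in a good subset, and $\E[\max_{i\in S}X_i] \ge \Pr[X_{i^\star}>\tau]\cdot H_{i^\star}$ when $i^\star\in S$. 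The subtlety is that $i^\star$ need not be in $S$; so I would instead only claim the bound for the expected max comparison where it is needed, or bound $H_{max}$ by $\frac{1}{\eps}\cdot(\text{stuff})$ using that at the threshold $\tau$ some $k$-set has tail probability exactly $\eps$ with conditional means at least on the order of... Actually the cleanest route is again the coupling: on the bulk event the two maxima coincide; on the tail event (probability $\le\eps$) use $\E[\max_{i\in S}\hat X_i\cdot\mathbf{1}[\text{tail}]] = \E[\max_{i\in S}X_i\cdot\mathbf{1}[\text{tail}]]$ up to the fact that within the tail the point-mass substitution preserves the relevant conditional expectations, and then the $(1\pm\eps)$ factors pop out from normalizing by the $(1-\eps)$-probability bulk mass.

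\textbf{Main obstacle.} The delicate point I expect to fight with is that the substitution $X_i \mapsto \hat X_i$ is \emph{not} value-preserving pointwise and \emph{not} even stochastically monotone (it smears a high value over $\{0, H_{max}\}$, possibly exceeding the original values), so I cannot simply say $\hat X_i \preceq X_i$ or vice versa. The right tool is that the \emph{expected} maximum, as a function of the marginal of each independent coordinate, is linear in each marginal and that the substitution preserves $\E[X_i\cdot\mathbf{1}[X_i>\tau]]$ exactly (equivalently $\E[\hat X_i] = \E[X_i]$ and the conditional tail mean is preserved), so the change is confined to reshuffling mass on a low-probability event; quantifying that reshuffling's effect on $\E[\max]$ against the $\ge(1-\eps)$ ``guaranteed'' mass of the bulk is what produces both inequalities. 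I would make this rigorous by writing $\E[\max_{i\in S}X_i] = \int_0^\infty \Pr[\max_{i\in S}X_i > t]\,dt = \int_0^\infty (1 - \prod_{i\in S}F_i(t))\,dt$ and comparing the integrands of $X_i$ and $\hat X_i$ on $t\le\tau$ (identical) versus $t>\tau$ (where the tail-probability budget $\eps$ controls the discrepancy), which turns the whole thing into an elementary estimate.
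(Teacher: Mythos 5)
Your overall skeleton matches the paper's proof: condition on the event $A=\{\forall i\in S:\ X_i\le\tau\}$, note that on $A$ the two maxima coincide pointwise (since $\hat X_i$ is a function of $X_i$ that acts as the identity below $\tau$), and then compare the two tail contributions using the facts that $\Pr[\neg A]\le\eps$ and that the construction preserves both $\Pr[X_i>\tau]$ (as the probability of the underlying tail event) and the conditional mean $H_i=\E[X_i\mid X_i>\tau]=\E[\hat X_i\mid X_i>\tau]$. However, the load-bearing step --- the quantitative comparison of $\E[\max_{i\in S}X_i\cdot\mathbf{1}_{\neg A}]$ with $\E[\max_{i\in S}\hat X_i\cdot\mathbf{1}_{\neg A}]$ --- is exactly the part you defer to ``a careful accounting,'' and neither of the concrete routes you float for it works as stated. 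The paper's accounting is a sandwich: expanding over which subset $U\subseteq S$ of variables exceeds $\tau$ and keeping only the $|U|=1$ terms gives the lower bound $\sum_{i\in S}\Pr[X_i>\tau]\,\Pr[A_{-i}]\,H_i$, while bounding the conditional maximum by $\sum_{i\in S}X_i\mathbf{1}_{X_i>\tau}$ gives the upper bound $\sum_{i\in S}\Pr[X_i>\tau]\,H_i$. Both bounds depend only on the quantities $\Pr[X_i>\tau]$ and $H_i$, which are shared by $X_i$ and $\hat X_i$, and the two ends of the sandwich differ by the factor $\Pr[A_{-i}]\ge\Pr[A]\ge 1-\eps$; this is where the $(1\pm\eps)$ comes from.

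Two specific claims in your proposal are false. First, ``the tail contribution to $\E[\max_{i\in S}X_i]$ is at most $\eps$ times the whole'' does not hold in general: only the tail \emph{event} has probability at most $\eps$, while the conditional means $H_i$ can be arbitrarily large relative to $\tau$, so the tail can carry almost all of the expectation. (This is precisely why the construction must preserve the $H_i$'s rather than simply truncate at $\tau$.) Second, in your ``cleanest route'' via $\int_0^\infty\Pr[\,\cdot>t\,]\,dt$, the integrands are \emph{not} identical on $t\le\tau$: the construction moves a $\bigl(1-H_i/H_{max}\bigr)$-fraction of the tail mass to the atom at $0$, so $\Pr[\hat X_i>t]=\Pr[X_i>t]-\Pr[X_i>\tau]\bigl(1-H_i/H_{max}\bigr)$ for every $t\le\tau$, and the marginal-CDF comparison would need additional work in the bulk as well. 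The conditioning formulation avoids this because the event $A$ is defined on the underlying $X_i$'s, on which $\hat X_i=X_i$ holds pointwise; the marginal survival functions of $\hat X_i$ and $X_i$ genuinely differ everywhere above $0$.
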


\begin{proof}

Let $A$ be the event that $\max_{i \in S} X_i \leq \tau$. When $A$ occurs, $\max_{i \in S} \hat{X}_i = \max_{i \in S} X_i$.

Similarly, let $\neg A_i$ denote the event that $X_i > \tau$, and let $A_{-i}$ denote the event that all variables other than $i$ are below the threshold $\tau$.
Note that $\neg A_i$ and $ A_{-i}$ are independent, and conditioned on both, $\max_{j \in S}X_j = X_i$.
Finally, we have that $\E[X_i | \neg A_i] = H_i = \E[\hat{X}_i | \neg A_i]$.

We now derive upper and lower bounds on $\E[ \max_{i \in S} X_i | \neg A]$. The same proof will extend to $\hat{X}_i$.
By inclusion-exlusion principle, 
\begin{align*}
 \Pr[\neg A]\E[ \max_{i \in S} X_i | \neg A] = \sum_{\emptyset \neq U \subseteq S} \underbrace{\prod_{i \in U}\Pr[\neg A_i]\prod_{j\notin U}\Pr[A_j]}_{\Pr[\text{$U$ is the subset of variables that pass $\tau$}]} \E[ \max_{i\in U}X_i | \forall i\in U\;\;X_i > \tau ] .
\end{align*}
We now separate the cases of $|U|=1$ and $|U| > 1$.
\begin{align*}
 \Pr[\neg A]\E[ \max_{i \in S} X_i | \neg A] = & \sum_{i \in S} \underbrace{\Pr[\neg A_i] \Pr[A_{-i}]}_{\text{Only $X_i$ beats $\tau$}} \underbrace{\E[X_i | \neg A_i]}_{=H_i} + \sum_{\stackrel{U \subseteq S}{|U|>1}} (...) .
\end{align*}

Dropping the second term can only decrease the total, hence
\begin{align}\label{eq:X_i-LB}
 \Pr[\neg A]\E[ \max_{i \in S} X_i | \neg A] \geq  \sum_{i \in S} \Pr[\neg A_i] \Pr[A_{-i}]H_i.
\end{align}

To obtain an upper bound on $ \Pr[\neg A]\E[ \max_{i \in S} X_i | \neg A]$, notice that conditioned on $\neg A$, 
$$\max_{i \in S} X_i  \leq \sum_{i \in S} X_i \mathbf{1}_{X_i > \tau}.$$
Hence, we have that 
\begin{align}\label{eq:X_i-UB}
 \Pr[\neg A]\E[ \max_{i \in S} X_i | \neg A]  & \leq  \Pr[\neg A]\E[ \sum_{i \in S} X_i \mathbf{1}_{X_i > \tau} | \neg A] \nonumber\\
& =  \sum_{i \in S} \Pr[\neg A]\E[ X_i \mathbf{1}_{X_i > \tau} | \neg A] \nonumber \\
& = \sum_{i \in S} \underbrace{\E[ X_i \mathbf{1}_{X_i > \tau}] }_{=H_i}.
\end{align}

Combining Eqs.~\eqref{eq:X_i-LB} and~\eqref{eq:X_i-UB}, and recalling that they also hold for $\hat{X}_i$, we have that
\begin{gather*}
\sum_{i \in S} \Pr[\neg A_i]\Pr[A_{-i}] H_i \leq  \Pr[\neg A]\E[ \max_{i \in S} X_i |\neg A] , \Pr[\neg A] \E[ \max_{i \in S} \hat{X}_i |\neg A] \leq \sum_{i \in S} \Pr[\neg A_i] H_i.
\end{gather*}

It's left to show that the lower and upper bound are close. Indeed, by the choice of $\tau$, $\Pr[A_{-i}] \geq \Pr[A] \geq 1-\eps$.
\end{proof}

\paragraph{Step 2: discarding small values.}

Now, let $W_i$ be the random variable that takes value equal to $\hat{X}_i$ when $\hat{X}_i \geq \epsilon^2 \tau$ and zero otherwise. The loss is, again, negligible.

\begin{claim}\label{claim:low-vals}
$(1 + \epsilon) OPT_{max}(W) \geq  OPT_{max}(\hat{X})$.
\end{claim}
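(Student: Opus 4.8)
The plan is to show that discarding the small values (those below $\epsilon^2 \tau$) from each $\hat{X}_i$ can only decrease the expected maximum of any subset, and moreover decreases it by at most an additive $\epsilon \tau$, which in turn is at most $\epsilon$ times $\opt_{max}(\hat X)$. Monotonicity in one direction is immediate: $W_i \le \hat X_i$ pointwise (coupling $W_i$ to $\hat X_i$), so $\max_{i\in S} W_i \le \max_{i \in S}\hat X_i$ for every outcome and every $S$, hence $\opt_{max}(W) \le \opt_{max}(\hat X)$; the content is the reverse bound. Fix the optimal subset $S$ for $\hat X$. First I would couple $W$ and $\hat X$ on the same probability space (each $W_i$ is a deterministic function of $\hat X_i$). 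On an outcome where $\max_{i\in S}\hat X_i \ge \epsilon^2 \tau$, the maximizing coordinate $i^\star$ has $\hat X_{i^\star} \ge \epsilon^2\tau$, so $W_{i^\star} = \hat X_{i^\star}$, giving $\max_{i\in S} W_i = \max_{i\in S}\hat X_i$. On an outcome where $\max_{i\in S}\hat X_i < \epsilon^2 \tau$, we have $\max_{i\in S}\hat X_i - \max_{i\in S} W_i \le \max_{i\in S}\hat X_i < \epsilon^2\tau$. Therefore
\[
\E\!\left[\max_{i\in S}\hat X_i\right] - \E\!\left[\max_{i\in S} W_i\right] \le \epsilon^2\tau \cdot \Pr\!\left[\max_{i\in S}\hat X_i < \epsilon^2\tau\right] \le \epsilon^2 \tau.
\]

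Next I would lower bound $\opt_{max}(\hat X) \ge \E[\max_{i\in S}\hat X_i]$ by $\tau$ up to constants, so that the additive loss $\epsilon^2\tau$ becomes a multiplicative $(1+\epsilon)$ (or better) loss. By the definition of $\tau$ in Step~1, there is a set $S_\tau$ of size $k$ with $\Pr[\max_{i\in S_\tau}\hat X_i \ge \tau] = \epsilon$ (note the transformation of Step~1 preserves whether the max exceeds $\tau$, since it only alters outcomes above $\tau$ and the altered value $H_{max} \ge \tau$). Hence $\opt_{max}(\hat X) \ge \E[\max_{i\in S_\tau}\hat X_i] \ge \tau \cdot \Pr[\max_{i\in S_\tau}\hat X_i\ge \tau] = \epsilon\tau$, so $\epsilon^2\tau \le \epsilon\cdot \opt_{max}(\hat X)$. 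Combining, $\E[\max_{i\in S} W_i] \ge \E[\max_{i\in S}\hat X_i] - \epsilon^2\tau = \opt_{max}(\hat X) - \epsilon^2\tau \ge (1-\epsilon)\opt_{max}(\hat X)$, and therefore $\opt_{max}(W) \ge (1-\epsilon)\opt_{max}(\hat X) \ge \frac{1}{1+\epsilon}\opt_{max}(\hat X)$, i.e. $(1+\epsilon)\opt_{max}(W) \ge \opt_{max}(\hat X)$, as claimed. (If a cleaner constant is wanted one can instead argue $\opt_{max}(W) \ge (1-\epsilon^2/\epsilon)\opt = (1-\epsilon)\opt$ and absorb; the $(1+\epsilon)$ form follows since $(1+\epsilon)(1-\epsilon) \le 1$.)

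The only mildly delicate point — the main obstacle, such as it is — is making sure the normalization constant $H_{max}$ from Step~1 does not mess up the lower bound $\opt_{max}(\hat X)\ge \epsilon\tau$ and that the event $\{\max_{i\in S_\tau}\hat X_i \ge \tau\}$ has the right probability; this is handled by the observation that Step~1 only changes outcomes of value $> \tau$ to the value $H_{max}$ (which satisfies $H_{max}\ge \tau$ since $H_i = \E[X_i \mid X_i > \tau] > \tau$) or to $0$, so the indicator $\mathbf 1[X_i > \tau]$ is preserved up to the event that $X_i > \tau$ but $\hat X_i = 0$ — and even on that event $\hat X_i$ does not drop below the contribution needed, because we only need an inequality. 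I would state this carefully at the start of the proof and then the two displayed estimates above finish it.
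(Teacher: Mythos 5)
Your overall strategy is the same as the paper's: bound the loss from discarding values below $\eps^2\tau$ by an additive $\eps^2\tau$, then absorb that additive loss into a multiplicative factor using a lower bound of $\eps\tau$ on the optimum. Your coupling argument for the additive bound is correct and a bit more direct than the paper's (the paper instead introduces $Z_i=\hat X_i-\eps^2\tau$, notes $\opt_{max}(Z)=\opt_{max}(\hat X)-\eps^2\tau$, and couples $W_i\ge Z_i$; both routes give $\opt_{max}(W)\ge \opt_{max}(\hat X)-\eps^2\tau$). The genuine problem is your last step. You prove the lower bound $\opt_{max}(\hat X)\ge\eps\tau$ and deduce $\opt_{max}(W)\ge(1-\eps)\opt_{max}(\hat X)$, and then assert this implies $\opt_{max}(W)\ge\frac{1}{1+\eps}\opt_{max}(\hat X)$. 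That implication runs the wrong way: since $(1-\eps)(1+\eps)=1-\eps^2\le 1$, we have $1-\eps\le\frac{1}{1+\eps}$, so your bound is the \emph{weaker} of the two. What you have actually shown is $\frac{1}{1-\eps}\opt_{max}(W)\ge\opt_{max}(\hat X)$, not the claimed $(1+\eps)\opt_{max}(W)\ge\opt_{max}(\hat X)$. The fix is exactly the paper's choice: establish the $\eps\tau$ lower bound for $W$ rather than for $\hat X$, i.e.\ $\opt_{max}(W)\ge\E[\max_{i\in S_\tau}W_i]\ge\tau\Pr[\max_{i\in S_\tau}W_i\ge\tau]\ge\eps\tau$, which holds because $W_i$ agrees with $\hat X_i$ on every outcome of value at least $\eps^2\tau$, in particular at least $\tau$. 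Then $\eps^2\tau\le\eps\,\opt_{max}(W)$, so $\opt_{max}(W)\ge\opt_{max}(\hat X)-\eps\,\opt_{max}(W)$, which rearranges to exactly the claim. (For the downstream PTAS a $1-\eps$ versus $\frac{1}{1+\eps}$ factor is immaterial, but the chain as written does not prove the stated inequality.)

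A secondary remark: you are right to worry that Step~1 may not preserve $\Pr[\max_{i\in S_\tau}\cdot\ge\tau]=\eps$, since an outcome with $X_i>\tau$ can be mapped to $\hat X_i=0$; your stated resolution (``we only need an inequality'') is not actually an argument, because the event probability can genuinely decrease. The paper's own proof elides this point as well. A clean patch is to note that $\E[\hat X_i\mathbf{1}_{\hat X_i>\tau}]=\E[X_i\mathbf{1}_{X_i>\tau}]$ by construction, or simply to invoke Claim~\ref{claim:high-tail} to get $\E[\max_{i\in S_\tau}\hat X_i]\ge(1-\eps)\eps\tau$, which costs only another $1-O(\eps)$ factor.
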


\begin{proof}
First, notice that $OPT_{max}(W) \geq \epsilon \tau$, by the definition of $\tau$: for the subset $S_{\tau}$ such that $\Pr[ \max_{i \in S_{\tau}} X_i \geq \tau ] = \Pr[ \max_{i \in S_{\tau}} W_i \geq \tau ]  = \epsilon$, so $\E[ \max_{i \in S_{\tau}} W_i ] \geq  \Pr[ \max_{i \in S_{\tau}} W_i \geq \tau ] \tau \geq \epsilon \tau$. Second, consider the random variables $Z_i = \hat{X}_i - \epsilon^2 \tau$. Then $OPT_{max}(Z) = OPT_{max}(\hat{X}) - \epsilon^2 \tau$. Also, for all $i$, we can couple all outcomes of $W_i$ and $Z_i$ such that $W_i \geq Z_i$. Therefore, $OPT_{max}(W) \geq OPT_{max}(Z) = OPT_{max}(\hat{X}) - \epsilon^2 \tau \geq OPT_{max}(\hat{X}) - \epsilon OPT_{max}(W)$, which implies the lemma.
\end{proof}

\paragraph{Step 3: Interval and rounding.}

We partition the range  $[\eps^2 \tau, \tau]$ into $\ell := \log_{1-\eps}(\eps^2) \in \tilde{O}(1/\eps)$ {\em intervals} $[\eps^2 \tau, \tau] = \cup_{j = 1}^{\ell} I_j$, where $I_j = [ \frac{\eps^2}{(1-\eps)^{j-1}} \tau, \frac{\eps^2}{(1-\eps)^{j}} \tau )$.
We round down the values within each interval. 
That is, let $Y_i$ be the random variable that takes value $\frac{\eps^2}{(1-\eps)^{j-1}}$ when $W_i$ takes value in $I_j$, for $j \in [\ell]$. Notice that this decreases the expected maximum by at most a $(1-\eps)$ factor.

\begin{observation}
For all $S \subseteq [n]$, $\E[ \max_{i \in S} Y_i ] \geq (1-\eps) \E[ \max_{i \in S} W_i ]$.
\end{observation}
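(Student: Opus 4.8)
The plan is to prove this observation by a pointwise (outcome-by-outcome) coupling argument, exactly as in the analogous facts earlier in this section. Fix any realization $\mathbf{v}$ of the underlying randomness, so that each $W_i$ takes a deterministic value $w_i$ and correspondingly $Y_i$ takes value $y_i$, where $y_i = 0$ if $w_i = 0$ and $y_i = \frac{\eps^2}{(1-\eps)^{j-1}}\tau$ whenever $w_i \in I_j$. The first step is to observe that on every such outcome we have $y_i \geq (1-\eps) w_i$: indeed, if $w_i \in I_j = [\frac{\eps^2}{(1-\eps)^{j-1}}\tau, \frac{\eps^2}{(1-\eps)^{j}}\tau)$, then $y_i = \frac{\eps^2}{(1-\eps)^{j-1}}\tau$ while $w_i < \frac{\eps^2}{(1-\eps)^{j}}\tau = \frac{1}{1-\eps}\cdot y_i$, so $y_i > (1-\eps) w_i$; and if $w_i = 0$ the inequality is trivial. (One should double-check the two edge intervals and the atoms at $0$ and $H_{max}$, but by Step 2 the variable $W_i$ is supported on $\{0\}\cup[\eps^2\tau,\tau]$ after the far-tail preprocessing produced atoms only at $0$ and $H_{max}$, and Step 1's $H_{max}$ atom should be treated as already handled / left untouched by this rounding step, as the paragraph says we only round values in $[\eps^2\tau,\tau]$.)

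The second step is to push this through the $\max$ operator. Since $y_i \geq (1-\eps) w_i$ for every $i$, we get $\max_{i \in S} y_i \geq \max_{i \in S} (1-\eps) w_i = (1-\eps)\max_{i \in S} w_i$, using that scaling by the positive constant $1-\eps$ commutes with $\max$ and that $\max$ is monotone in each coordinate. The third and final step is to take expectations over $\mathbf{v}$: since the inequality $\max_{i\in S} Y_i \geq (1-\eps)\max_{i\in S} W_i$ holds pointwise (under the obvious coupling where $Y_i$ is a deterministic function of $W_i$), taking expectations preserves it, giving $\E[\max_{i\in S} Y_i] \geq (1-\eps)\E[\max_{i \in S} W_i]$, which is exactly the claim.

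There is essentially no main obstacle here — the statement is a one-line consequence of the fact that rounding each value down by at most a $(1-\eps)$ multiplicative factor can only shrink the coordinatewise maximum by at most a $(1-\eps)$ factor. The only point requiring minor care is bookkeeping about which atoms the rounding step touches (the $0$ atom and the $H_{max}$ atom from Step 1 are left alone, and the values in $[\eps^2\tau,\tau]$ are rounded down to the left endpoint of their interval), so that the pointwise bound $y_i \geq (1-\eps)w_i$ genuinely holds for all parts of the support. Once that is in place, the coupling-and-monotonicity argument is immediate.
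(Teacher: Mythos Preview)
Your proposal is correct and is exactly the intended argument: the paper states this as an observation without proof, relying on the immediate fact that rounding each value down to the left endpoint of its interval decreases it by at most a $(1-\eps)$ factor, and your pointwise coupling $y_i \geq (1-\eps)w_i$ followed by monotonicity of $\max$ and linearity of expectation is precisely that justification spelled out. The only cosmetic slip is in your description of the support of $W_i$ (it is $\{0\}\cup[\eps^2\tau,\tau]\cup\{H_{max}\}$, not just $\{0\}\cup[\eps^2\tau,\tau]$), but you correctly handle the $H_{max}$ atom anyway, so this does not affect the argument.
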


\paragraph{Step 4: Core-Tail decomposition.}

Let $\eta$ be the largest number such that there exists a set $S_\eta \subseteq [n], |S_{\eta}| = \eps k$, that satisfies
\begin{gather}\label{eq:eta}
\Pr[ \max_{i \in S_{\eta}} Y_i \geq \eta ] = 1-\eps.
\end{gather}
We henceforth use $S_C$ to denote the set that attains equality in \eqref{eq:eta}.
We decompose each $Y_i$ into ``core'' ($C_i$) and ``tail''($T_i$). Specifically, $C_i$ is the random variable that is equal $Y_i$ when $Y_i \leq \eta$ (and is zero otherwise), and $T_i$ is the random variable that is equal to $Y_i$ when $Y_i > \eta$ (and is zero otherwise).

We have used $\eps$-fraction of the budget to select the set $S_C$ (that is, $S_C$ has size $\eps k$). %, which has near-optimal contribution from cores. 
Next, we show that since the function $f(S) = \E[ \max_{i \in S} Y_i ]$ is a submodular function, using the remaining $(1-\eps)$-fraction of the budget to optimize contribution from tails recovers almost the same contribution as spending the entire budget on optimal tails. Let $S^{'}_T$ be the subset of size $k'$ that is almost optimal with respect to $S_C$, i.e.
$$ S^{'}_T := \arg \max_{|S|=k'} \E\left[\max \left\{ \max_{i\in S}T_i ,\max_{i\in S_C}Y_i\right\}\right].$$

The following lemma compares the union of $S^{'}_T$ and $S_C$ to the optimal set $S^*$ for the $Y_i$s.

\begin{lemma}\label{lem:core-tail}
\begin{gather}\label{eq:core-tail}
\E\left[\max \left\{ \max_{i\in S^{'}_T}T_i ,\max_{i\in S_C}Y_i\right\}\right] \geq (1-O(\eps)) \opt_{max}(Y)
\end{gather}
\end{lemma}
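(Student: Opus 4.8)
The plan is to exploit the submodularity of $f(S) = \E[\max_{i \in S} Y_i]$ together with the defining property of $\eta$ and $S_C$. First I would set up the notation: write $g(S) := \E[\max\{\max_{i \in S} T_i, \max_{i \in S_C} Y_i\}]$, so that $g$ is a submodular (monotone) set function, being a pointwise maximum over outcomes of the ``shifted'' core-tail objective; the key point is that $g(S)$ equals the marginal-value function obtained from $f$ after first committing to $S_C$ and then ignoring all core contributions of the newly added variables. I would let $S^*$ be an optimal set for $Y$, i.e. $f(S^*) = \opt_{max}(Y)$, and split its contribution into the part coming from values below $\eta$ (the cores) and the part above $\eta$ (the tails).

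The core of the argument is a three-way comparison. Step one: the contribution to $\opt_{max}(Y)$ from outcomes below $\eta$ is at most $\eta$, since $\max_{i \in S^*} Y_i \cdot \mathbf{1}[\max_{i \in S^*} Y_i \le \eta] \le \eta$ always. Step two: by the definition of $S_C$ and $\eta$ (Equation~\eqref{eq:eta}), $\Pr[\max_{i \in S_C} Y_i \ge \eta] = 1-\eps$, hence $\E[\max_{i \in S_C} Y_i] \ge (1-\eps)\eta$, so $\E[\max_{i \in S_C} Y_i]$ already dominates $\eta$ up to a $(1-\eps)$ factor. Combining these two observations, the ``core part'' of $\opt_{max}(Y)$ is at most $\frac{1}{1-\eps}\E[\max_{i \in S_C} Y_i] \le \frac{1}{1-\eps} g(\emptyset) \le (1+O(\eps)) g(S^{'}_T)$ — so we need only worry about recovering the tail part.

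Step three handles the tails via submodularity. Consider the set function $h(S) := g(S) - g(\emptyset)$, which is the marginal contribution of adding the tail variables in $S$ on top of $S_C$; it is monotone and submodular with $h(\emptyset)=0$. By Claim~\ref{claim:submodular} applied to $h$ (restricting from budget $k$ to budget $k' = (1-\eps)k$), we get $\max_{|S| \le k'} h(S) \ge (1-\eps)\max_{|S| \le k} h(S) \ge (1-\eps)\, h(S^*)$. Now $g(S^*) \ge f(S^*) = \opt_{max}(Y)$ since adding $S_C$'s values inside the max only helps, so $h(S^*) = g(S^*) - g(\emptyset) \ge \opt_{max}(Y) - \E[\max_{i \in S_C} Y_i]$. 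Putting it together, $g(S^{'}_T) = g(\emptyset) + h(S^{'}_T) \ge \E[\max_{i \in S_C} Y_i] + (1-\eps)\big(\opt_{max}(Y) - \E[\max_{i \in S_C} Y_i]\big) \ge (1-\eps)\opt_{max}(Y)$, which gives the claim (here the $\E[\max_{i \in S_C} Y_i]$ terms conveniently cancel in the right direction since the coefficient $\eps$ multiplying it is nonnegative).

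The main obstacle I expect is making the submodular-restriction step rigorous in the presence of the fixed offset $S_C$: one has to verify that $h$ genuinely is submodular (it is, as a difference where the subtracted term is a constant, composed with the submodular $g$) and that Claim~\ref{claim:submodular} applies with the non-integer budget $(1-\eps)k$ — which is why the claim was stated with $|S| \le (1-\eps)k$ rather than an exact size. A secondary subtlety is bookkeeping the various $(1-\eps)$ and $\frac{1}{1-\eps}$ factors so that they collapse into a clean $1 - O(\eps)$; this is routine but must be done carefully, in particular ensuring that the factor accumulated across all of Steps~1--4 of the preprocessing (the earlier claims) still multiplies out to $1-O(\eps)$ overall.
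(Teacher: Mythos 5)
Your overall route is the same as the paper's: control the cost of truncating the optimal set's variables at $\eta$ via the defining property of $S_C$, then invoke submodularity (Claim~\ref{claim:submodular}) to pay for the budget reduction from $k$ to $k'=(1-\eps)k$. The one genuine flaw is the inequality $g(S^*)\ge f(S^*)$ in your Step three. It is false as stated: passing from $Y_i$ to $T_i$ zeroes out every outcome with $Y_i\le\eta$, and on the event that simultaneously no variable of $S_C$ exceeds $\eta$ --- which by Eq.~\eqref{eq:eta} has probability $\eps$, not $0$ --- the quantity inside $g(S^*)$ can fall short of $\max_{i\in S^*}Y_i$ by up to $\eta$. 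The correct statement is $g(S^*)\ge f(S^*)-\eps\eta$, and $\eps\eta\le\frac{\eps}{1-\eps}\,\E[\max_{i\in S_C}Y_i]=\frac{\eps}{1-\eps}\,g(\emptyset)$ by Eq.~\eqref{eq:eta} plus Markov; this is exactly the paper's Eq.~\eqref{eq:core-tail-3}. Fortunately your final computation survives the correction: $g(S'_T)\ge g(\emptyset)+(1-\eps)\bigl(\opt_{max}(Y)-\frac{\eps}{1-\eps}g(\emptyset)-g(\emptyset)\bigr)=(1-\eps)\opt_{max}(Y)$, with the $g(\emptyset)$ terms cancelling exactly.

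A secondary point: the additive ``core part plus tail part'' decomposition sketched in your Steps one and two cannot serve as a substitute for this fix. Bounding the below-$\eta$ contribution of $\opt_{max}(Y)$ by $(1+O(\eps))\,g(S'_T)$ and the above-$\eta$ contribution by another $(1+O(\eps))\,g(S'_T)$ and then summing would only give $\opt_{max}(Y)\le (2+O(\eps))\,g(S'_T)$, i.e.\ a factor-$2$ approximation rather than $1-O(\eps)$. What makes the argument work --- in both your Step three (once repaired) and the paper's proof --- is that the core contribution is not recovered separately but shown to be \emph{lost only on a probability-$\eps$ event}, so it enters the bound multiplied by $\eps$.
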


\begin{proof}%[Proof of Lemma~\ref{lem:core-tail}]
Let $S^*$ be the optimal set (of size $k$) for the $Y_i$s.
We will show a slightly stronger bound, namely that the bound holds even when we replace the RHS of Eq.~\eqref{eq:core-tail} with a union of $S^*$ and $S_C$:
\begin{gather}\label{eq:core-tail-2}
\E\left[\max \left\{ \max_{i\in S^{'}_T}T_i ,\max_{i\in S_C}Y_i\right\}\right] \geq (1-O(\eps)) \E\left[ \max_{i\in S^* \cup S_C}Y_i \right].
\end{gather}

We first bound the loss to the RHS from truncating all the variables in $S^* \setminus S_C$ below $\eta$ (i.e. replacing those $Y_i$s with $T_i$s). 
Notice that the loss to the RHS of~\eqref{eq:core-tail-2} is at most $\eta$ times the probability that none of the variables in $S_C$ exceed $\eta$. The latter, happens with probability at most $\eps$ by definition of $\eta$. % (Fact~\ref{fact:core}).
Therefore,

\begin{align}\label{eq:core-tail-3}
\E\left[ \max_{i\in S^* \cup S_C}Y_i \right] & \leq \E\left[\max \left\{ \max_{i\in S^*}T_i ,\max_{i\in S_C}Y_i\right\}\right] + \eps \eta \nonumber \\
&\leq^{(\text{Eq.~\eqref{eq:eta} + Markov's inequality})} \E\left[\max \left\{ \max_{i\in S^*}T_i ,\max_{i\in S_C}Y_i\right\}\right] + \eps \frac{\E[ \max_{i \in S_C} Y_i ] }{1-\eps} \notag \\
&\leq (1+2\eps)\E\left[\max \left\{ \max_{i\in S^*}T_i ,\max_{i\in S_C}Y_i\right\}\right].
\end{align}
%where the second line used Eq.~\eqref{eq:eta} again.% the obesrvation that the max of $S_C$ is very likely to exceed $\alpha_{min}$.

Now, if we let $S_T$ denote the optimal set of $k$ tails. By optimality of $S_T$, we have %\alex{not sure what with respect to $S_C$ means in this context}
\begin{gather}\label{eq:core-tail-4}
\E\left[\max \left\{ \max_{i\in S_T}T_i ,\max_{i\in S_C}Y_i\right\}\right] \geq \E\left[\max \left\{ \max_{i\in S^*}T_i ,\max_{i\in S_C}Y_i\right\}\right].
\end{gather}

Finally, we use the fact that the expected max is a submodular function, to obtain:
\begin{align*}
\E\left[\max \left\{ \max_{i\in S^{'}_T}T_i ,\max_{i\in S_C}Y_i\right\}\right] &\geq (1-\eps)\E\left[\max \left\{ \max_{i\in S_T}T_i ,\max_{i\in S_C}Y_i\right\}\right]
&& \text{(Claim~\ref{claim:submodular})}\\
& \geq (1-\eps) \E\left[\max \left\{ \max_{i\in S^*}T_i ,\max_{i\in S_C}Y_i\right\}\right] &&\text{(Eq.~\eqref{eq:core-tail-4})}\\
& \geq (1-O(\eps)) \E\left[ \max_{i\in S^* \cup S_C}Y_i \right] &&\text{(Eq.~\eqref{eq:core-tail-3})}.
\end{align*}
\end{proof}

\paragraph{Step 5: Discarding intervals with small relative contribution.}

For a tail variable $T_i$ and interval $I \in \Big\{ [\eps^2 \tau, \frac{\eps^2}{1-\eps}\tau) , \dots , [(1-\eps) \tau, (1-\eps)^2\tau) , [(1-\eps) \tau, \tau] , (\tau,\infty) \Big\}$, recall that $T_i$ restricted to $I$ is a point mass: either one of the $\frac{\eps^2 \tau}{(1-\eps)^{j-1}}$ or $H_{max}$. Let $T_i(I)$ denote its marginal contribution to the expectation, i.e. its probability times its value. We round down the probability on each point mass so its marginal contribution is equal to $(1-\eps)^z \E[T_i]$ for some integer $z$.

For any set $|S| = k$, there is at most a constant probability that no variable passes $\eta$. To see this, consider partitioning $S$ into $1/\eps$ subsets $S_j$ of size $\eps k$. 
\[
\Pr [ \max_{i \in S} Y_i < \eta]  = \prod_j  \Pr [ \max_{i \in S_j} Y_i < \eta] \geq^{\text{(Eq.~\eqref{eq:eta})}}  \prod_j  \eps = \eps^{1/\eps}.
\]

Therefore, conditioned on being nonzero, any $T_i$ has a constant ($\geq \eps^{1/\eps}$) probability of attaining the maximum.
Therefore 
\begin{gather}\label{eq:max-vs-sum}
\E[\max_{i \in S^{'}_T} Y_i] \geq \eps^{1/\eps} \sum_{i \in S^{'}_T}  \E[T_i].
\end{gather}

We can therefore discard any intervals whose contribution is at most $\eps^{1/\eps+3} \E[T_i]$, while reducing the expected max by at most a $(1-\eps)$ factor.
Formally, we have the following claim.
\begin{claim}\label{claim:intervals}
Let $\hat{T}_i$ denote the modified variable $T_i$ where we discard intervals with contribution at most $\eps^{1/\eps+3} \E[T_i]$, and otherwise round it down to the nearest power of $(1-\eps)$. 
Then,
$$\E\left[\max \left\{ \max_{i\in S^{'}_T}\hat{T}_i ,\max_{i\in S_C}Y_i\right\}\right] \geq (1-\eps)^2\E\left[\max \left\{ \max_{i\in S^{'}_T}T_i ,\max_{i\in S_C}Y_i\right\}\right].$$
\end{claim}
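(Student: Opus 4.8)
I would split the modification into a \emph{discarding} step, producing $T_i'$ from $T_i$ by moving to $0$ the probability mass of every interval whose contribution is at most $\epsilon^{1/\epsilon+3}\E[T_i]$, and a \emph{rounding} step, producing $\hat T_i$ from $T_i'$ by replacing each surviving atom's probability $q_{i,j}$ (at value $v_{i,j}$) by the $q'_{i,j}$ for which $q'_{i,j}v_{i,j}$ is the largest number of the form $(1-\epsilon)^z\E[T_i]$ not exceeding $q_{i,j}v_{i,j}$; since consecutive such numbers differ by a factor $(1-\epsilon)$, we have $q'_{i,j}\in[(1-\epsilon)q_{i,j},\,q_{i,j}]$. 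Abbreviate $M$, $M'$, $\hat M$ for the random variable $\max\{\max_{i\in S'_T}Z_i,\ \max_{i\in S_C}Y_i\}$ with $Z=T$, $T'$, $\hat T$ respectively. Since $\hat T_i\preceq T_i'\preceq T_i$ in the stochastic order, $\E[\hat M]\le\E[M']\le\E[M]$ by monotonicity of the expected maximum; the claim is the reverse chain with two $(1-\epsilon)$ factors, which I would prove one step at a time.

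\emph{Discarding step.} Only the tail attaining the maximum of $M$ can be affected by moving low-contribution mass to $0$, so pointwise $0\le M-M'\le\sum_{i\in S'_T}(T_i-T_i')$; and $\E[T_i-T_i']$ equals the total contribution of the discarded intervals of $T_i$, hence is at most $O(\ell)\,\epsilon^{1/\epsilon+3}\E[T_i]$ since there are only $O(\ell)=\tilde O(1/\epsilon)$ intervals. On the other side, the argument preceding Eq.~\eqref{eq:max-vs-sum}---conditioned on being nonzero, any tail attains the maximum of the combined set with probability at least $\epsilon^{1/\epsilon}$---gives $\E[M]\ge\E[\max_{i\in S'_T}T_i]\ge\epsilon^{1/\epsilon}\sum_{i\in S'_T}\E[T_i]$. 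Dividing, $\E[M]-\E[M']\le O(\ell)\,\epsilon^{3}\,\E[M]=\tilde O(\epsilon^2)\,\E[M]\le\epsilon\,\E[M]$ (for $\epsilon$ small, the only interesting regime), so $\E[M']\ge(1-\epsilon)\E[M]$. The point is that the discard threshold was deliberately chosen a couple of powers of $\epsilon$ below the factor $\epsilon^{1/\epsilon}$ one loses, via Eq.~\eqref{eq:max-vs-sum}, when trading $\sum_i\E[T_i]$ for $\E[M]$.

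\emph{Rounding step.} The naive bound (sum of per-atom losses $\le\epsilon\sum_i\E[T_i]$) is hopeless here, since $\sum_i\E[T_i]$ can exceed $\E[M]$ by a factor $\epsilon^{-\Theta(1/\epsilon)}$. Instead, observe that rounding atom probabilities down is stochastically dominated by independent Bernoulli dampening: with $\xi_i\sim\mathrm{Bernoulli}(1-\epsilon)$ independent of everything and of one another, $\Pr[\xi_iT_i'=v_{i,j}]=(1-\epsilon)q_{i,j}\le q'_{i,j}=\Pr[\hat T_i=v_{i,j}]$ for every surviving atom while the rest of the mass of each sits at $0$, so $\hat T_i$ stochastically dominates $\xi_iT_i'$. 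By monotonicity of the expected maximum it then suffices to show $\E[\max\{\max_{i\in S'_T}\xi_iT_i',\ \max_{i\in S_C}Y_i\}]\ge(1-\epsilon)\E[M']$. Writing both sides via the layer-cake formula $\E[\max_\ell Z_\ell]=\int_0^\infty(1-\prod_\ell\Pr[Z_\ell\le t])\,dt$ (the variables are bounded, so everything converges), setting $p_i:=\Pr[T_i'>t]$, $P:=\prod_{i\in S_C}\Pr[Y_i\le t]$, and using $\Pr[\xi_iT_i'\le t]=1-(1-\epsilon)p_i$ for $t\ge0$, it is enough to prove the pointwise inequality
\[
1-P\prod_{i\in S'_T}\bigl(1-(1-\epsilon)p_i\bigr)\ \ge\ (1-\epsilon)\Bigl(1-P\prod_{i\in S'_T}(1-p_i)\Bigr)
\]
for all $p_i,P\in[0,1]$. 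Rearranging, this is $P\cdot g(1-\epsilon)\le\epsilon$ where $g(c):=\prod_i(1-cp_i)-c\prod_i(1-p_i)$; since $g$ is nonnegative and convex on $[0,1]$ (its second derivative $\sum_{i\ne j}p_ip_j\prod_{k\ne i,j}(1-cp_k)$ is $\ge0$), with $g(0)=1$ and $g(1)=0$, we get $g(c)\le1-c$, hence $P\,g(1-\epsilon)\le g(1-\epsilon)\le\epsilon$. Therefore $\E[\hat M]\ge(1-\epsilon)\E[M']\ge(1-\epsilon)^2\E[M]$, proving the claim.

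The crux is the rounding step: recognizing that rounding a point mass's probability down is a form of independent $(1-\epsilon)$-dampening, and that a maximum of independent random variables is robust to such dampening (the convexity of $g$) rather than losing additively per atom. The discarding step, by contrast, is routine bookkeeping once the threshold is placed below the $\epsilon^{\Theta(1/\epsilon)}$ sum-versus-max gap of Eq.~\eqref{eq:max-vs-sum}.
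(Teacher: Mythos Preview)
Your proof is correct and the overall two-step decomposition (discard, then round) matches the paper's. The discarding step is essentially identical to the paper's: both bound the total loss by $(\text{number of intervals})\cdot\eps^{1/\eps+3}\sum_{i}\E[T_i]$ and then trade $\sum_i\E[T_i]$ for $\eps^{-1/\eps}\E[M]$ via Eq.~\eqref{eq:max-vs-sum}.

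Where you differ is the rounding step. The paper disposes of it in a single sentence, ``Rounding down each contribution to the nearest power of $(1-\eps)$ can cost at most another factor of $1-\eps$,'' without argument. You correctly observe that the naive telescoping bound only yields $\E[M']-\E[\hat M]\le\eps\sum_i\E[T_i]$, which can be larger than $\E[M']$ by a factor $\eps^{-\Theta(1/\eps)}$, so the paper's assertion is not immediate. Your fix---dominating $\hat T_i$ from below by $\xi_iT_i'$ with $\xi_i\sim\mathrm{Bernoulli}(1-\eps)$, then proving the pointwise layer-cake inequality via convexity of $c\mapsto\prod_i(1-cp_i)$---is a clean and valid way to justify the $(1-\eps)$ factor. (An equivalent one-line formulation: $h(c):=1-P\prod_i(1-cp_i)$ is concave in $c$ with $h(0)=1-P\ge 0$, hence $h(1-\eps)\ge(1-\eps)h(1)+\eps h(0)\ge(1-\eps)h(1)$.) So your argument genuinely fills a gap the paper leaves open.
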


\begin{proof}
Let $\hat{\hat{T}}_i$ denote the modified variable where we only discard intervals with low contributions, but without the rounding.
Discarding each interval with contribution at most $\eps^{1/\eps+3} \E[T_i]$ can decrease the expected max by at most $\eps^{1/\eps+3} \E[T_i]$. There are $\tilde{O}(1/\eps) < 1/\eps^2$ intervals, hence discarding all low-contribution intervals for variable $i$ decreases the expected max by $\eps^{1/\eps+3} \E[T_i]$.
Summing accross all intervals, we have that
\begin{align*}
\E\left[\max \left\{ \max_{i\in S^{'}_T}\hat{\hat{T}}_i ,\max_{i\in S_C}Y_i\right\}\right] & \geq  \E\left[\max \left\{ \max_{i\in S^{'}_T}T_i ,\max_{i\in S_C}Y_i\right\}\right] - \eps^{1/\eps+3} \sum_{i\in S^{'}_T} \E[T_i] \\
& \geq^{\text{(Eq.~\eqref{eq:max-vs-sum})}} (1-\eps)\E\left[\max \left\{ \max_{i\in S^{'}_T}T_i ,\max_{i\in S_C}Y_i\right\}\right].
\end{align*}
Rounding down each contribution to the nearest power of $(1-\eps)$ can cost at most another factor of $1-\eps$.
\end{proof}

\paragraph{The brute-force algorithm}

Notice that for each interval $I$, the marginal contribution from the point mass of $\hat{T}_i$ takes one of $1/\eps+3$ values. 
Therefore, each variable belongs to one of $(1/\eps)^{\tilde{O}(1/\eps)} = O(1)$ many {\em types}, which describe the relative marginal contribution from each point mass.

We now guess an approximate {\em type-histogram}, i.e. the optimal number of variables for each type, rounded down to the nearest power of $(1+\eps)$. There are $O(\log(k))$ possible guesses for each type, so $\log^{(1/\eps)^{\tilde{O}(1/\eps)}}(k)$ possible type-histograms total. For each type-histogram, we generate a candidate set of $\leq k$ variables by taking, for each type, the variables from that type with maximal $\E[T_i]$. We estimate the expected maximum of each candidate set, and return the best one across all type-histograms.

\paragraph{Running time}

The five preprocessing steps run in linear time (aka linear in sum of variable supports). The brute-force algorithm runs in time $O(k \cdot \polylog(k))$.

\section{Proof of Theorem~\ref{thm: hardness for smax} }\label{app: proof of hardness}

We reduce from the Densest-$\kappa$-subgraph problem, formally defined as follows.

\begin{definition}[Densest $\kappa$-subgraph ($D\kappa S$)]
We are given a $n$-vertex graph $G = (V,E)$ and an integer $\kappa$. The goal is to select a subgraph of $G$ of size $\kappa$ with maximum density (average degree).
\end{definition}

\cite{alon2011inapproximability} prove the following hardness result.

\begin{theorem}[\cite{alon2011inapproximability}]\label{thm: alon et al}
If there is no polynomial time algorithm for solving the hidden clique problem for a planted clique of size $n^{1/3}$ in the random graph $G(n,1/2)$, then for any $2/3 \geq \epsilon > 0$, $\delta > 0$, there is no polynomial time algorithm that distinguishes between a graph $G$ on $N$ vertices containing a clique of size $\kappa = N^{1-\epsilon}$, and a graph $G'$ on $N$ vertices in which the densest subgraph on $\kappa$ vertices has density at most $\delta$.
\end{theorem}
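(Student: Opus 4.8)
The plan is to establish the theorem by reducing the (conjecturally hard) task of distinguishing a random graph $G(n,1/2)$ from one with a planted clique of size $q=n^{1/3}$ to the gap version of Densest-$\kappa$-subgraph; a polynomial-time algorithm for the latter would then yield one for the former. Given an input graph $G$ on vertex set $[n]$, I would build a graph $G'$ by a \emph{clique blow-up}: fix a parameter $r=r(\epsilon)$, let the vertices of $G'$ be the $r$-element subsets of $[n]$ that induce a clique in $G$, and join two such subsets $A,B$ by an edge exactly when $A\cup B$ also induces a clique in $G$. Writing $N:=|V(G')|$, in a random graph the number of $r$-cliques concentrates around $\binom nr 2^{-\binom r2}=\Theta(n^r)$ (constant depending on $r$), whereas a planted clique on $q=n^{1/3}$ vertices adds only $\binom qr=\Theta(n^{r/3})=o(n^r)$ further $r$-cliques; hence $N$ is, up to constants, a function of $n$ and $r$ alone, the reduction is well defined, and for constant $r$ it runs in time $n^{O(r)}=\mathrm{poly}(n)$. (One should also check explicitly that $N$ leaks no information about which case we are in — it agrees up to lower-order terms — so the produced instance is a faithful reduction.)

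For completeness, if $G$ contains a clique $K$ with $|K|=q$, then every $r$-subset of $K$ is a vertex of $G'$ and any two of them have union contained in $K$, hence a clique; so these $\binom qr$ vertices form a clique in $G'$. Setting $\kappa:=\binom qr=\Theta(n^{r/3})$ and comparing with $N=\Theta(n^r)$ gives $\kappa\ge N^{1/3-o(1)}$, which is at least $N^{1-\epsilon}$ only once $\epsilon$ is slightly above $2/3$. Obtaining the full range $\epsilon\in(0,2/3]$ is precisely where the real work lies: the basic blow-up always makes the planted structure an $n^{-2r/3}$ fraction of $V(G')$, so the completeness exponent is stuck at $1/3$, and one needs a more intricate construction (as in \cite{alon2011inapproximability}) — for instance, layering the blow-up so that the planted clique is amplified into a larger polynomial fraction of $N$ — to push $\epsilon$ below $2/3$.

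For soundness, suppose $G\sim G(n,1/2)$. I would fix an arbitrary candidate set $S\subseteq V(G')$ with $|S|=\kappa$ and bound the number of edges of $G'$ inside $S$: each potential edge $\{A,B\}$ is the event ``$A\cup B$ is a clique in $G$'', of probability $2^{-\binom{|A\cup B|}{2}}$, which is exponentially small in $r$. A first- and second-moment estimate shows this edge count is, with very high probability, at most $\delta\binom\kappa2$, and then a union bound over all $\binom N\kappa$ choices of $S$ — whose log-cardinality is $\approx\kappa\log N$ — shows that w.h.p.\ every $\kappa$-subgraph of $G'$ has edge density at most $\delta$. The parameter $r$ is tuned so that the per-set failure probability beats the $\kappa\log N$ entropy of the union bound while $N$ remains polynomial in $n$.

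The hard part is this soundness step together with the parameter balancing it forces: one must control, uniformly over exponentially many vertex subsets, the number of ``coincidental'' cliques in a random graph, with a quantitative tail bound strong enough to survive the union bound, while simultaneously keeping $r$ (hence $N=n^{\Theta(r)}$) small enough for a polynomial-time reduction, making the completeness-to-soundness gap $1/\delta$ arbitrarily large, and — most delicately — pushing the exponent $\epsilon$ all the way down through $(0,2/3]$, which the naive blow-up cannot do and which is the technical heart of \cite{alon2011inapproximability}.
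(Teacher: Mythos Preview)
The paper does not prove this theorem; it is quoted from \cite{alon2011inapproximability} and used only as a black box in the reduction of Appendix~\ref{app: proof of hardness}. So there is no ``paper's own proof'' to compare your attempt against.

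As for the sketch itself: you correctly diagnose its own shortfall. The $r$-clique blow-up you describe always produces $\kappa = \Theta(N^{1/3})$, i.e.\ $\epsilon \approx 2/3$, independently of $r$; it therefore does not establish the statement for any $\epsilon$ strictly below $2/3$, and you explicitly defer that range back to \cite{alon2011inapproximability}. That makes the write-up a plausibility outline for one endpoint of the claimed range rather than a proof of the theorem as stated. On the part you do sketch, one further gap is worth flagging: the edges of $G'$ are far from independent --- many pairs $\{A,B\}$ share underlying edges of $G$ --- so a ``first- and second-moment estimate'' on the edge count inside a fixed $\kappa$-set does not by itself yield the exponentially small tail needed to survive a union bound over $\binom{N}{\kappa}$ sets; a genuinely sharper concentration argument (or a different counting strategy) is required here. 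You do flag the side-channel issue with $N$, which is good, but ``agrees up to lower-order terms'' is not yet an argument that the fluctuation is statistically undetectable in polynomial time.
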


\cite{Manurangsi17} proves the following hardness result.

\begin{theorem}[\cite{Manurangsi17}]\label{thm: pasin}
There is a constant $c > 0$ such that, assuming the exponential time hypothesis, no polynomial-time algorithm can, given a graph $G$ on $n$ vertices and a positive integer $\kappa \leq n$, distinguish between the following two cases:
\begin{itemize}
    \item There exist $\kappa$ vertices of $G$ that induce the $\kappa$-clique.
    \item Every $\kappa$-subgraph of $G$ has density  at most $n^{-1/(\log\log n)^c }$
\end{itemize}
\end{theorem}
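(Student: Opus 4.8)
This is Manurangsi's ETH-hardness of Densest-$\kappa$-Subgraph, and the plan follows the ``birthday repetition'' route. First I would reduce ETH to a \emph{gap} constraint satisfaction problem: by the sparsification lemma, ETH implies that $3$-SAT with $m$ clauses and $O(m)$ variables has no $2^{o(m)}$-time algorithm, and composing with a (quasi-)linear-size PCP gives a constant $\eps_0>0$ such that distinguishing satisfiable instances from those in which every assignment satisfies at most a $(1-\eps_0)$ fraction of the clauses has no $2^{o(m)}$-time algorithm. I would recast this as a two-prover one-round game (equivalently a symmetric bounded-occurrence $2$-CSP) $\mathcal{G}$ on a ground set of size $\Theta(m)$, with value $1$ in the YES case and value at most $1-\eps_0'$ in the NO case.

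The heart of the construction is the $\ell$-fold \emph{birthday repetition} $\mathcal{G}_\ell$: each prover receives a uniformly random $\ell$-subset of the ground set and answers with a partial assignment satisfying the induced constraints, and the referee accepts iff the two partial assignments are mutually consistent on the overlap of the two subsets (which, by the birthday paradox, is nonempty with probability $\approx \ell^2/m$). The key accounting point is that, unlike parallel repetition -- which would blow the instance size up to $m^{\Theta(\ell)}=2^{\omega(m)}$ -- birthday repetition produces only $\binom{m}{\ell}2^{O(\ell)}=2^{\Theta(\ell\log(m/\ell))}$ question--answer pairs, which stays $2^{o(m)}$ as long as $\ell\log(m/\ell)=o(m)$. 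From $\mathcal{G}_\ell$ I would build the graph $G$ whose vertices are the question--answer pairs $(Q,a)$ of $\mathcal{G}_\ell$, with $(Q_1,a_1)\sim(Q_2,a_2)$ iff $a_1,a_2$ are mutually consistent (agree on $Q_1\cap Q_2$ and jointly violate no constraint), and set $\kappa$ to the number of questions. \emph{Completeness:} a globally satisfying assignment to $\mathcal{G}$ induces, for each question $Q$, the vertex $(Q,a|_Q)$, and these $\kappa$ vertices are pairwise consistent, hence form a $\kappa$-clique.

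For \emph{soundness} I would argue directly about the number of \emph{edges} (not cliques) in a $\kappa$-vertex subgraph $H$: every edge of $H$ is a mutually consistent answer-pair for some pair of questions in the referee's support, and a double-counting argument bounds the number of such pairs in terms of $\mathrm{val}(\mathcal{G}_\ell)$ and the overlap probability $\approx\ell^2/m$; so if $H$ had density exceeding $N^{-\gamma}\kappa$ (with $N=|V(G)|$ the ``$n$'' of the theorem and $\gamma=1/(\log\log N)^c$) it would certify $\mathrm{val}(\mathcal{G}_\ell)$ too large. It then remains to balance parameters: taking $\ell=\Theta(m/\mathrm{polylog}(m))$ keeps $N=2^{o(m)}$ while forcing $\mathrm{val}(\mathcal{G}_\ell)\le\exp(-\tilde\Omega(\ell^2/m))=2^{-\tilde\Omega(m/\mathrm{polylog}(m))}$, which is below $N^{-\gamma}$ for the stated $\gamma$; since $\log\log N=\Theta(\log m)$ this matches the $n^{-1/(\log\log n)^c}$ gap in the statement. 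A polynomial-time algorithm for the DkS gap would therefore decide gap-$3$-SAT in time $\mathrm{poly}(N)=2^{o(m)}$, contradicting ETH (and the planted-clique route is routed separately through the Alon et al.\ reduction from planted clique to gap-DkS).

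The main obstacle is precisely the soundness bound on $\mathcal{G}_\ell$. The naive argument -- ``the two $\ell$-subsets overlap with probability $\ell^2/m$, so the value drops by one $(1-\eps_0')$-factor'' -- only gives $\mathrm{val}(\mathcal{G}_\ell)\le 1-\Omega(\ell^2/m)$, which is vacuous in our regime, and there is no independence across the many overlapping constraints to let one exponentiate it. Obtaining genuinely exponentially small soundness $\exp(-\tilde\Omega(\ell^2/m))$ is Manurangsi's improved analysis of birthday repetition: essentially a packing/entropy bound showing that the collection of ``large mutually-consistent sets'' of question--answer pairs in $G$ is small whenever the base CSP has value bounded away from $1$, proved on the bipartite consistency structure, with careful tracking of the dependence on $\eps_0'$ and on the $\mathrm{polylog}$ factors (these are what pin down the final constant $c$). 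Everything else -- the PCP reduction, the completeness clique, and the arithmetic of fitting $\ell$ into the $2^{o(m)}$ budget while hitting the target gap -- is routine once that quantitative birthday-repetition theorem is available.
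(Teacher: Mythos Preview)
The paper does not prove this theorem at all: it is quoted verbatim as a black-box result from \cite{Manurangsi17} and then combined (together with Theorem~\ref{thm: alon et al}) with the paper's own reduction from Densest-$\kappa$-Subgraph to the expected-second-largest problem. So there is no ``paper's own proof'' to compare against; your proposal is a sketch of Manurangsi's original argument, not of anything in this paper.

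That said, your outline of Manurangsi's proof is essentially on target: the route is indeed ETH $\Rightarrow$ gap-2CSP via sparsification and a near-linear PCP, then $\ell$-fold birthday repetition, then the question--answer graph with $\kappa$ equal to the number of questions, with completeness coming from the clique induced by a global assignment and soundness coming from the improved exponentially-small bound on the value of birthday repetition. You have also correctly identified the one nontrivial ingredient, namely that the naive ``overlap with probability $\ell^2/m$'' argument is far too weak and that the real work is the $\exp(-\tilde\Omega(\ell^2/m))$ soundness theorem. One minor remark: the parenthetical about the planted-clique route is extraneous here, since Theorem~\ref{thm: pasin} is purely the ETH-based statement; the planted-clique hardness is the separate Theorem~\ref{thm: alon et al}.
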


In the following, we show that given a graph $G$ on $n$ vertices, we can construct $n$ random variables, $X_1, \dots, X_n$ such that
\begin{itemize}[leftmargin=*]
    \item (Completeness) If there exists a subset of vertices $S^*$ of size $k$ such that $| E \cap (S^* \times S^*) | = \ell$, then there exists a subset of random variables whose expected second largest value is at least $\ell$. 
    \item (Soundness) If for all subsets $S$ of size $k$ $| E \cap (S \times S) | < \ell$, then there is no subset of random variables whose expected second largest value is more than $\ell + 1/2k$.
\end{itemize}

Combining with Theorems~\ref{thm: alon et al} and~\ref{thm: pasin}, Theorem~\ref{thm: hardness for smax} follows immediately.

Our construction works as follows. Given a graph $G = (V,E)$, we make a random variable $X_i$ for each vertex $i \in V$. If $(i,j) \in E$, then we add the value $p_{ij}^2$ with probability $1/p_{ij}$ to the support of $X_i$ and $X_j$, where $p_{ij} = \frac{1}{(2k+1)^{\pi(i,j)}}$, where $\pi : E \rightarrow \mathbb{N}$ is an arbitrary ordering of the edges (concretely, one can take $\pi(i,j) = \frac{ (\max\{i,j\} - 1)(\max\{i,j\} - 2)  }{ 2 } + \min\{ i,j \}$).

\paragraph{Completeness.} 
Assume that there exists $S^* \subseteq V$, $|S^*| = k$, such that $| E \cap (S^* \times S^*) | = \ell$. Then
\begin{align*}
\E[ \smax_{i \in S^*} X_i ] &= \sum_{v} v \cdot \Pr[ \smax_{z \in S} X_z = v ]\\
&= \sum_{ i,j \in S^* : (i,j) \in E } p_{ij}^2 \cdot \Pr[ \smax_{z \in S} X_z = p_{ij}^2 ] \\
&\geq \sum_{ i,j \in S^* : (i,j) \in E } p_{ij}^2 \cdot \Pr[ X_i = X_j = p_{ij}^2 ] \\
&= \sum_{ i,j \in S^* : (i,j) \in E } p_{ij}^2 \frac{1}{p_{ij}^2} \\
&= \ell.
\end{align*}

\paragraph{Soundness.}
Let $S$ be an arbitrary subset of random variables. We'll show that $\E[ \smax_{i \in S} X_i ]$ is at most $\ell + 1/2k$, where $\ell$ is the number of edges between vertices of $S$.

\begin{align*}
    &\E[ \smax_{i \in S} X_i ] = \sum_{ i,j \in S^* : (i,j) \in E } p_{ij}^2 \cdot \Pr[ \smax_{z \in S} X_z = p_{ij}^2 ] \\
    &= \sum_{ i,j \in S^* : (i,j) \in E } p_{ij}^2 \cdot ( \Pr[ \smax_{z \in S} X_z = \max_{z \in S} X_z = p_{ij}^2 ] \\
    &\quad\quad\quad + \Pr[ \smax_{z \in S} X_z = p_{ij}^2 ~ \& ~ \max_{z \in S} X_z > p_{ij}^2 ] ).
\end{align*}

Since $X_i$ and $X_j$ are the only variables that can take value $p_{ij}^2$ we have that 
\begin{equation}\label{eq: first part}
    \Pr[ \smax_{z \in S} X_z = \max_{z \in S} X_z = p_{ij}^2 ] \leq \Pr[ X_i = X_j = p_{ij}^2 ] = \frac{1}{p_{ij}^2}.
\end{equation}

For the same reason, the probability that the second largest value is $p_{ij}^2$ and the maximum value is at least $p_{ij}^2$ is upper bounded by $\frac{2}{p_{ij}} \sum_{k \in S^*} \Pr[ X_k > p_{ij}^2 ]$.
We have that
\begin{align*}
\Pr[ X_k > p_{ij}^2 ] &= \sum_{v > p_{ij}^2} \Pr[X_k = v] \\
&= \sum_{ z > \pi(i,j) } \Pr[X_k = p^2_z ] \\
&\leq \sum_{ z = \pi(i,j) + 1 }^{\infty} \frac{1}{(2k+1)^z} \\
&= \frac{1}{2k} \cdot \frac{1}{(2k+1)^{\pi(i,j)}}\\
&= \frac{1}{2k p^2_{ij}}
\end{align*}

Thus
\begin{equation}\label{eq: second part}
    \Pr[ \smax_{z \in S} X_z = p_{ij}^2 ~ \& ~ \max_{z \in S} X_z > p_{ij}^2 ] \leq \frac{2}{p_{ij}} \cdot k \cdot \frac{1}{2k p_{ij}^2} = \frac{1}{p_{ij}^3}
\end{equation}

Plugging in~\eqref{eq: first part} and~\eqref{eq: second part} we get that
\begin{align*}
\E[ \smax_{i \in S} X_i ] &\leq \sum_{ i,j \in S^* : (i,j) \in E } p_{ij}^2 \cdot ( \frac{1}{p_{ij}^2} + \frac{1}{p_{ij}^3} ) \\
&= \sum_{ i,j \in S^* : (i,j) \in E } 1 + \frac{1}{p_{ij}} \\
&= \ell + \sum_{ i,j \in S^* : (i,j) \in E } \frac{1}{(2k+1)^{\pi(i,j)}} \\
&\leq \ell + \sum_{ z = 1 }^{\infty} \frac{1}{(2k+1)^{z}} \\
&= \ell + \frac{1}{2k}. \qed
\end{align*}

\section{Proofs missing from Section~\ref{sec: quantile algo}}\label{app : quantile appendix}

\subsection{Proof of Lemma~\ref{lemma: truncated perf}}

\begin{proof}
First, we lower bound the probability that the maximum is at least the smallest $\alpha^{(i)}_p$. Let $\alpha_{min} = \min_{i \in S} \alpha^{(i)}_p$. For all $i \in S$, $\Pr[ X_i \leq \alpha_{min} ] \leq \Pr[ X_i \leq \alpha^{(i)}_{p} ] = 1 - 1/p$. Therefore, $\Pr[ \max_{i \in S} \hat{X}_i \leq \alpha_{min} ] = \Pr[ \forall_{i \in S} \hat{X}_i \leq \alpha_{min} ] \leq (1 - 1/p)^k$. Thus, $\Pr[ \max_{i \in S} \hat{X}_i \geq \alpha_{min} ] \geq 1 - (1 - 1/p)^k$. Second, conditioned on the maximum value being at least $\alpha_{min}$, our algorithm has, in fact, picked the random variable that takes the largest value (out of all $n$ random variables). To see this most clearly, notice that the only (truncated) random variables that can take values at least $\alpha_{min}$ are in $S$. The first part of the lemma follows from combining the two observations.

For the second part of the lemma, notice that $\Pr[ \smax_{i \in S} \hat{X_i} \leq \alpha_{min} ] = \Pr[ \text{for all $i \in S$}, \hat{X}_i  \leq \alpha_{min}] + \Pr[ \text{for all but one $i \in S$}, \hat{X}_i  \leq \alpha_{min}]$. The first term is at most $(1-1/p)^k$. The second term is at most $\sum_{j \in S} \Pr[ \hat{X}_j \geq \alpha_{min}] \prod_{i \neq j \in S} \Pr[ \hat{X}_i < \alpha_{min}] \leq k (1-1/p)^{k-1}$. So, overall, $\Pr[ \smax_{i \in S} \hat{X_i} \leq \alpha_{min} ] \leq (k+1) (1-1/p)^{k-1}$, and thus $\Pr[ \smax_{i \in S} \hat{X_i} \geq \alpha_{min} ] \geq 1 - (k+1) (1-1/p)^{k-1}$. When this event occurs, the selected subset of variables includes all random variables whose have value at least $\alpha_{min}$; the second part of the lemma follows.
\end{proof}

\subsection{Proofs missing from Section~\ref{subsec: second step of quantile}}

Let $Con[X \geq x] = \E[ X | X \geq x ] \cdot Pr [X \geq x] = \int_{x}^{\infty} z f(z) dz$.  Let $G_t$ be the set of random variables that got eliminated in round $t$ of Algorithm~\ref{algo: beta}, for $t = 0, \dots, \log_2 k - 1$, i.e. $G_t = Q_t \setminus Q_{t+1}$, and let $G_{\log_2 k} = Q_{\log_2 k}$, i.e. the unique random variable that survived the first $\log_2 k - 1$ rounds.

\subsubsection{Upper bounding the tail}\label{subsec: upper bound}
The upper bounds on the tail contribution will hold only for MHR random variables.
We first bound the contribution above $\beta_1$ for the sum of all random variables except $G_{\log_2k}$ in Lemma~\ref{lem: bound on sum of tails}. We then proceed to bound the contribution above $\beta_2$ for $G_{\log_2k}$, in Lemma~\ref{lem: bounding last variable}. When upper bounding the tail of the expected maximum we need both lemmas, but for the case of the expected second highest value, we can safely exclude one random variable.
\begin{lemma}\label{lem: bound on sum of tails}
Let $X_1, \dots, X_k$ be MHR random variables. For all $i \in [k] \setminus G_{\log_2k}$ and $\epsilon \in (0,1/16)$, let $S_i = Con[X_i \geq \log_2(1/\epsilon) \beta_1]$. Then
$\sum_{i \in [k] \setminus G_{\log_2k}} S_i \leq8 \sqrt{\epsilon} \log_2(1/\epsilon) \beta_1$.
\end{lemma}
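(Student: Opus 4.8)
\textbf{Proof proposal for Lemma~\ref{lem: bound on sum of tails}.}

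The plan is to exploit the defining property of $\beta_1$: for each round $t = 0, \dots, \log_2 k - 1$, every random variable eliminated in round $t$ (the set $G_t$) has its $\sqrt{k/2^t}$-top-quantile value $\alpha^{(i)}_{\sqrt{k}/\sqrt{2}^t}$ at most $\beta_t \le \beta_1$. (This is because $\beta_t$ is the threshold of the worst surviving variable, which dominates all eliminated ones in that round.) The idea is then to convert a bound on the quantile $\alpha^{(i)}_q$ into a bound on the tail contribution $\mathrm{Con}[X_i \ge c\,\beta_1]$, using the MHR assumption, and finally to sum over all rounds. I would first prove the key single-variable estimate: if $X$ is MHR and its $q$-top-quantile satisfies $\alpha^{(X)}_q \le \beta_1$ (equivalently $\Pr[X > \beta_1] \le 1/q$), then for a threshold $c\beta_1$ with $c \ge 1$ the survival probability decays geometrically, $\Pr[X \ge c\beta_1] \le (1/q)^{c}$ or a comparable bound, and hence $\mathrm{Con}[X \ge c\beta_1]$ is at most roughly $c\beta_1 \cdot (1/q)^{c}$ (integrating the tail, the contribution is dominated by the boundary term plus a geometric sum). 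The MHR property is exactly what gives this exponential tail decay past a quantile: $1 - F(c\beta_1) \le (1-F(\beta_1))^{c}$ when $h$ is non-decreasing, since $-\log(1-F(x))$ is convex.

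With the single-variable bound in hand, the plan is to organize the sum $\sum_{i \in [k]\setminus G_{\log_2 k}} S_i$ by the round in which each variable was eliminated. A variable $i \in G_t$ was compared under the quantile $q_t := \sqrt{k/2^t} = \sqrt{k}/\sqrt{2}^{\,t}$ and lost, so $\Pr[X_i > \beta_t] \le 1/q_t$, and a fortiori $\Pr[X_i > \beta_1] \le 1/q_t$ since $\beta_1 \ge \beta_t$. There are exactly $|G_t| = k/2^{t+1}$ such variables (we eliminate the bottom half each round). Taking $c = \log_2(1/\epsilon)$, the single-variable bound gives each such $S_i \lesssim \log_2(1/\epsilon)\,\beta_1 \cdot q_t^{-\log_2(1/\epsilon)}$. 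Summing over $i \in G_t$ contributes about $(k/2^{t+1}) \cdot \log_2(1/\epsilon)\,\beta_1 \cdot (2^t/k)^{\frac12\log_2(1/\epsilon)}$, and since $\log_2(1/\epsilon) \ge 4$ (because $\epsilon < 1/16$), the factor $(2^t/k)^{\frac12\log_2(1/\epsilon)} \le (2^t/k)^2$ easily beats the $k/2^{t+1}$ prefactor, making each round's contribution at most $O(\sqrt{\epsilon}\,\log_2(1/\epsilon)\,\beta_1)$ times a geometric weight in $t$; summing the geometric series over $t$ yields the claimed $8\sqrt{\epsilon}\log_2(1/\epsilon)\beta_1$. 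The constant $8$ will fall out of bounding the geometric series $\sum_t$ and the $\sqrt{2^t/k}$ factors; I would not stress over its exact value since the paper explicitly disclaims tightness of constants.

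The main obstacle I anticipate is the single-variable tail estimate: carefully converting ``the $q$-quantile is below $\beta_1$'' plus MHR into a clean bound on $\mathrm{Con}[X \ge \log_2(1/\epsilon)\beta_1]$, getting both the exponent on $1/q$ and the polynomial-in-$\log(1/\epsilon)$ prefactor right. One subtlety is that $\alpha^{(i)}_q$ is defined as an $\inf$ over $x$ with $F_i(x) = 1 - 1/q$, so one must be slightly careful translating quantile inequalities to CDF inequalities (continuity of $F_i$ helps here). A second subtlety is that a variable eliminated at round $t$ might, at an earlier round $t' < t$, have been ranked on a different (larger) quantile $q_{t'}$; but the bound I need only uses the round in which it was finally eliminated, so the cleanest bookkeeping is to charge each eliminated variable to its elimination round only, which is what the partition $G_t = Q_t \setminus Q_{t+1}$ gives. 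Once the per-variable bound and this partition are set up, the summation is a routine geometric-series computation.
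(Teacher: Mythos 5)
Your proposal follows essentially the same route as the paper's proof: partition the variables by elimination round $G_t$, use $\alpha^{(i)}_{\sqrt{k/2^t}}\le\beta_t\le\beta_1$ together with the MHR tail-decay property (your $1-F(cx)\le(1-F(x))^c$ is exactly the content of the Cai--Daskalakis facts the paper invokes as Lemmas~\ref{lem: alpha to alpha} and~\ref{lem: bound on alpha}) to bound each $S_i$ by roughly $d\beta_1 q_t^{-d}$ with $d=\log_2(1/\epsilon)$, multiply by $|G_t|=k/2^{t+1}$, and sum the resulting geometric series in $t$. The only difference is that the paper cites the single-variable estimates from prior work while you sketch re-deriving them; your sketch is correct, so this is the same proof.
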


\begin{proof}
Let $d = \log_2(1/\epsilon)$, and notice that $d > 4$ since $\epsilon < 1/16$.

For $i \in G_t$, we know that $\alpha^{(i)}_{\sqrt{k/2^t}} \leq \beta_t$. Furthermore, by Lemma~\ref{lem: alpha to alpha}, $d \alpha^{(i)}_{\sqrt{k/2^t}} \geq \alpha^{(i)}_{(\sqrt{k/2^t})^d}$. Therefore, we have that $Con[ X_i \geq d \beta_t ] \leq Con[ X_i \geq d \alpha^{(i)}_{\sqrt{k/2^t}} ] \leq Con[ X_i \geq \alpha^{(i)}_{(\sqrt{k/2^t})^d} ]$. 

Using Lemma~\ref{lem: bound on alpha} we get
\[
Con[ X_i \geq \alpha^{(i)}_{(\sqrt{k/2^t})^d} ] \leq 6 \alpha^{(i)}_{(\sqrt{k/2^t})^d} (\sqrt{2^t/k})^d \leq 6 d \beta_t  (\sqrt{2^t/k})^d.
\]

Since $|G_t| = k/2^{t+1}$, we have  
\[
\sum_{i \in G_t} S_i \leq 6 d \beta_t  (\sqrt{2^t/k})^d \cdot k/2^{t+1} = \frac{3 d \beta_t}{\sqrt{k}^{d-2}} \cdot (\sqrt{2}^t)^{d-2} 
\]
Therefore, the total contribution to the tail is
\begin{align*}
    \sum_{i \in [k] \setminus G_{\log_2k}} S_i &\leq \sum_{t=0}^{\log_2 k -1} \frac{3 d \beta_t}{\sqrt{k}^{d-2}} \cdot (\sqrt{2}^t)^{d-2}  \\
    &\leq \frac{3 d \beta_1}{\sqrt{k}^{d-2}} \sum_{t=0}^{\log_2 k -1}  (\sqrt{2}^{d-2})^t \\
    &= \frac{3 d \beta_1}{\sqrt{k}^{d-2}} \cdot \frac{(\sqrt{2}^{d-2})^{\log_2 k} - 1}{\sqrt{2}^{d-2} - 1}\\
    &= \frac{3 d \beta_1}{\sqrt{k}^{d-2}} \cdot \frac{ (\sqrt{k})^{d-2} - 1}{\sqrt{2}^{d-2} - 1} \\
    &\leq \frac{3 d \beta_1}{\sqrt{2}^{d-2} - 1} \\
    &\leq \frac{8 d \beta_1}{\sqrt{2}^d} = 8 
    \sqrt{\epsilon} \log_2(1/\epsilon) \beta_1
\end{align*}
where we used the fact that $\frac{2^{d+2}}{2^{d}-1}\leq \frac{8}{3}$ for $d \geq 4$.
\end{proof}

\begin{lemma}\label{lem: bounding last variable}
Let $i$ be the unique element in $G_{\log_2k}$. Then, if $X_i$ is MHR
\[
Con[ X_i \geq \log_2(1/\epsilon) \beta_2 ] \leq 6 \sqrt{\eps} \log_2(1/\epsilon) \beta_2, ~~\text{ for all $\epsilon \in (0, 1/16)$. }
\]
\end{lemma}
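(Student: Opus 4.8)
The plan is to reduce this to the single‑variable anchoring estimates already used inside the proof of Lemma~\ref{lem: bound on sum of tails}, now applied only to the lone survivor $X_i$ with quantile parameter $q = \sqrt{2}$. Write $d = \log_2(1/\epsilon)$ and note that $d > 4$ because $\epsilon < 1/16$. Recall also that by the last line of Algorithm~\ref{algo: beta}, $\beta_2 = \beta_{\log_2 k} = \alpha^{(i)}_{\sqrt{2}}$ for the unique $i \in G_{\log_2 k}$, so $\beta_2$ is genuinely the $\sqrt{2}$-top-quantile threshold of the variable we are analyzing.

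First I would invoke Lemma~\ref{lem: alpha to alpha} with quantile parameter $q = \sqrt{2}$ and exponent $d$, which yields $\alpha^{(i)}_{(\sqrt{2})^d} \le d\,\alpha^{(i)}_{\sqrt{2}} = d\beta_2$. Since $Con[X_i \ge x]$ is non-increasing in $x$, this gives $Con[X_i \ge d\beta_2] \le Con[X_i \ge \alpha^{(i)}_{(\sqrt{2})^d}]$. Next I would apply Lemma~\ref{lem: bound on alpha} — the same bound $Con[X \ge \alpha^{(i)}_Q] \le 6\,\alpha^{(i)}_Q / Q$ used in Lemma~\ref{lem: bound on sum of tails} — with $Q = (\sqrt{2})^d$, obtaining $Con[X_i \ge \alpha^{(i)}_{(\sqrt{2})^d}] \le 6\,\alpha^{(i)}_{(\sqrt{2})^d}\,(\sqrt{2})^{-d}$. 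Chaining the two inequalities and using $\alpha^{(i)}_{(\sqrt{2})^d} \le d\beta_2$ once more, $Con[X_i \ge d\beta_2] \le 6\, d\beta_2\,(\sqrt{2})^{-d}$. Finally $(\sqrt{2})^{-d} = 2^{-d/2} = (2^{\log_2(1/\epsilon)})^{-1/2} = \sqrt{\epsilon}$, so $Con[X_i \ge d\beta_2] \le 6\sqrt{\epsilon}\, d\,\beta_2 = 6\sqrt{\epsilon}\,\log_2(1/\epsilon)\,\beta_2$, as claimed.

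There is no real obstacle here; the proof is a two-line corollary of the auxiliary lemmas specialized to one variable. The only points needing care are bookkeeping ones: checking that the hypotheses of Lemmas~\ref{lem: alpha to alpha} and~\ref{lem: bound on alpha} are satisfied (the MHR assumption on $X_i$, and that $q = \sqrt{2}$ and $d > 4$ lie in the admissible ranges — note $q = \sqrt{2}$ is exactly the endpoint value $\sqrt{k/2^t}$ at $t = \log_2 k - 1$, so it is already covered), and using monotonicity of $Con$ correctly. If the earlier statements of Lemmas~\ref{lem: alpha to alpha} and~\ref{lem: bound on alpha} happen to be phrased only for parameters of the form $\sqrt{k/2^t}$, the mild extra step is to observe that those single-variable statements use only the MHR property and no structure of $k$, so they hold verbatim for a general parameter $q \ge \sqrt{2}$; this is routine.
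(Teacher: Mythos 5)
Your proof is correct and is essentially identical to the paper's: the paper chains $Con[X_i \ge d\beta_2] \le Con[X_i \ge \alpha^{(i)}_{\sqrt{2}^{\,d}}]$ via Lemma~\ref{lem: alpha to alpha}, applies Lemma~\ref{lem: bound on alpha} with $p=\sqrt{2}^{\,d}\ge 2$, and uses Lemma~\ref{lem: alpha to alpha} once more to replace $\alpha^{(i)}_{\sqrt{2}^{\,d}}$ by $d\beta_2$, exactly as you do. Your hypothesis checks (MHR, $p\ge 2$, and the general statement of the auxiliary lemmas) are all consistent with how they are stated in the paper, so no extra step is needed.
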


\begin{proof}
Let $d = \log_2(1/\epsilon)$.
\begin{align*}
Con[X_i \geq d \beta_2 ] &= Con[ X_i \geq d \alpha^{(i)}_{\sqrt{2}} ] \\
&\leq^{(Lemma~\ref{lem: alpha to alpha})} Con[ X_i \geq \alpha^{(i)}_{\sqrt{2}^d} ]\\
&\leq^{(Lemma~\ref{lem: bound on alpha})} \frac{6 \alpha^{(i)}_{\sqrt{2}^d}}{\sqrt{2}^d}\\
&\leq^{(Lemma~\ref{lem: alpha to alpha})} 6 \sqrt{\eps} \log_2(1/\epsilon) \beta_2,
\end{align*}
where in the application of Lemma~\ref{lem: bound on alpha} we used the fact that $\sqrt{2}^d = \sqrt{2}^{\log_2(1/\epsilon)} \geq 2$ for $\epsilon < 1/16$.
\end{proof}

The maximum of $k$ random variables is upper bounded by their sum, therefore by combining Lemmas~\ref{lem: bound on sum of tails} and~\ref{lem: bounding last variable} we get the following corollary.

\begin{corollary}\label{cor: bound on max}
Let $X_1, \dots, X_k$ be MHR random variables and $\beta = \max\{ \beta_1, \beta_2 \}$ be the maximum of the two values output by Algorithm~\ref{algo: beta}. Then for all $\epsilon \in (0,1/16)$ we have
\[
\int_{\beta \log_2(1/\epsilon)}^{\infty} x f_{max}(x) dx \leq 14 \sqrt{\epsilon} \log_2(1/\epsilon) \beta, 
\]
where $f_{max}(x)$ is the probability density function of the random variable $\max_{i \in [k]} X_i$.
\end{corollary}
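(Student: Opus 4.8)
The plan is to derive the corollary directly from Lemmas~\ref{lem: bound on sum of tails} and~\ref{lem: bounding last variable} via the elementary observation that the maximum of non-negative random variables, restricted to the event that it exceeds a threshold, is dominated by the sum of the individual variables restricted to the same threshold. In the notation $Con[Y \geq y] = \E[Y\cdot\mathbf{1}_{Y\geq y}] = \int_y^\infty z f_Y(z)\,dz$, the quantity to be bounded is exactly $Con[\max_{i\in[k]}X_i \geq \beta\log_2(1/\eps)]$. First I would establish the pointwise inequality $\left(\max_{i\in[k]}X_i\right)\mathbf{1}_{\max_i X_i \geq t} \leq \sum_{i\in[k]} X_i\,\mathbf{1}_{X_i \geq t}$, which holds for every realization: if the left-hand side is nonzero, the index $j^\star$ attaining the maximum satisfies $X_{j^\star}\geq t$, so its term on the right equals the left-hand side, while all the remaining terms are non-negative. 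Taking expectations gives $Con[\max_i X_i \geq t] \leq \sum_{i\in[k]} Con[X_i \geq t]$.

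Next I would split the index set as $[k] = \left([k]\setminus G_{\log_2 k}\right)\cup G_{\log_2 k}$, where $G_{\log_2 k}=\{j\}$ is the singleton consisting of the variable that survives all rounds of Algorithm~\ref{algo: beta}, and set $t = \beta\log_2(1/\eps)$ with $\beta=\max\{\beta_1,\beta_2\}$. Since $\beta\geq\beta_1$ and $Con[X_i\geq\cdot]$ is non-increasing for non-negative $X_i$, we have $Con[X_i\geq t]\leq Con[X_i\geq \beta_1\log_2(1/\eps)] = S_i$ for each $i\neq j$, and Lemma~\ref{lem: bound on sum of tails} gives $\sum_{i\in[k]\setminus G_{\log_2 k}} S_i \leq 8\sqrt{\eps}\log_2(1/\eps)\beta_1 \leq 8\sqrt{\eps}\log_2(1/\eps)\beta$. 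Likewise $\beta\geq\beta_2$, so $Con[X_j\geq t]\leq Con[X_j\geq\beta_2\log_2(1/\eps)]\leq 6\sqrt{\eps}\log_2(1/\eps)\beta_2\leq 6\sqrt{\eps}\log_2(1/\eps)\beta$ by Lemma~\ref{lem: bounding last variable}. Both lemmas require $\eps\in(0,1/16)$, which is assumed, and the MHR hypothesis, which we have. Summing the two contributions yields $(8+6)\sqrt{\eps}\log_2(1/\eps)\beta = 14\sqrt{\eps}\log_2(1/\eps)\beta$, as claimed.

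I do not expect a genuine obstacle here: all the work is in the two lemmas, and the corollary is a bookkeeping step. The only points needing mild care are (i) justifying the $\max\leq\sum$ domination on the tail event, which uses only non-negativity of the $X_i$'s (not MHR), and (ii) reconciling the single threshold $\beta=\max\{\beta_1,\beta_2\}$ used in the corollary with the separate thresholds $\beta_1,\beta_2$ appearing in Lemmas~\ref{lem: bound on sum of tails} and~\ref{lem: bounding last variable}, which is handled by monotonicity of $Con[X\geq\cdot]$ together with $\beta_1,\beta_2\leq\beta$. The union-bound-style passage from $\max$ to $\sum$ loses nothing of consequence precisely because only the terms exceeding the (large) truncation threshold are retained.
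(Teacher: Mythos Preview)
Your proposal is correct and follows exactly the paper's approach: the paper simply states ``the maximum of $k$ random variables is upper bounded by their sum'' and then combines Lemmas~\ref{lem: bound on sum of tails} and~\ref{lem: bounding last variable} to obtain $8+6=14$. Your write-up in fact fills in two details the paper leaves implicit---the pointwise inequality $(\max_i X_i)\mathbf{1}_{\max_i X_i \ge t} \le \sum_i X_i \mathbf{1}_{X_i \ge t}$ and the monotonicity step reconciling the single threshold $\beta=\max\{\beta_1,\beta_2\}$ with the separate thresholds in the two lemmas---so it is a more careful version of the same argument.
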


For the second largest value, observe that its expected value is upper bounded by the sum of the random variables minus (any) one of them. Therefore, by excluding the random variable in $G_{\log_2 k}$, we can upper bound the tail of the expected second highest value using Lemma~\ref{lem: bound on sum of tails}.

\begin{corollary}\label{cor: bound on smax}
Let $X_1, \dots, X_k$ be MHR random variables and $\beta_1$ be the first output of Algorithm~\ref{algo: beta}. Then for all $\epsilon \in (0,1/16)$ we have
\[
\int_{\beta_1 \log_2(1/\epsilon)}^{\infty} x f_{\smax}(x) dx \leq 8 \sqrt{\epsilon} \log_2(1/\epsilon) \beta_1, 
\]
where $f_{\smax}(x)$ is the probability density function of the random variable $\smax_{i \in [k]} X_i$.
\end{corollary}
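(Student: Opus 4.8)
The plan is to reduce Corollary~\ref{cor: bound on smax} directly to Lemma~\ref{lem: bound on sum of tails}, which already bounds $\sum_{i\in[k]\setminus G_{\log_2 k}} Con[X_i \ge \beta_1\log_2(1/\epsilon)] = \sum_i S_i$ by $8\sqrt{\epsilon}\log_2(1/\epsilon)\beta_1$. The bridge is the elementary fact that a thresholded version of the second-largest value is dominated by the sum over any $k-1$ of the variables, so it suffices to discard exactly the one variable in $G_{\log_2 k}$ — the one survivor for which no $\beta_1$-type control is available.

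First I would establish a pointwise domination: for every realization of the (nonnegative) variables, every fixed index $j$, and every threshold $t\ge 0$,
\[
\smax_{i\in[k]} X_i \cdot \mathbf{1}\!\left[\smax_{i\in[k]} X_i \ge t\right] \;\le\; \sum_{i\ne j} X_i\,\mathbf{1}[X_i \ge t].
\]
This is verified by a short case analysis. Sort the indices by value and let $a_1, a_2$ be the indices realizing the largest and second-largest values (distinct elements of $[k]$, since $k\ge 2$). If $j\ne a_2$, then on the event $\{\smax\ge t\}$ we have $X_{a_2}=\smax\ge t$ with $a_2\ne j$, so the $i=a_2$ summand on the right already dominates the left. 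If $j=a_2$, then $\smax=X_j$ is excluded from the sum, but $X_{a_1}\ge X_{a_2}=\smax\ge t$ with $a_1\ne j$, so the $i=a_1$ summand dominates instead. Nonnegativity of the remaining entries lets me discard them.

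Taking expectations of both sides turns the left-hand side into $\int_t^\infty x f_{\smax}(x)\,dx$ and the right-hand side into $\sum_{i\ne j} Con[X_i\ge t]$. I then instantiate $j$ as the unique element of $G_{\log_2 k}$ and $t=\beta_1\log_2(1/\epsilon)$: the right-hand sum becomes exactly $\sum_{i\in[k]\setminus G_{\log_2 k}} Con[X_i\ge \beta_1\log_2(1/\epsilon)] = \sum_{i\in[k]\setminus G_{\log_2 k}} S_i$ in the notation of Lemma~\ref{lem: bound on sum of tails}, and that lemma immediately gives the claimed bound $8\sqrt{\epsilon}\log_2(1/\epsilon)\beta_1$.

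The only delicate point is the pointwise inequality, and within it the case $j=a_2$: this is precisely why the discarded variable must be $G_{\log_2 k}$. That variable survives all $\log_2 k-1$ elimination rounds and is controlled only through $\beta_2$ (Lemma~\ref{lem: bounding last variable}), never through $\beta_1$; since for $\smax$ — unlike $\max$, where Corollary~\ref{cor: bound on max} needs both $\beta_1$ and $\beta_2$ — we may drop one variable for free, dropping precisely this one lets us invoke the $\beta_1$-bound of Lemma~\ref{lem: bound on sum of tails} on everything that remains. No appeal to the MHR hypothesis is needed beyond what that lemma already uses.
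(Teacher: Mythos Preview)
Your proposal is correct and follows essentially the same route as the paper: the paper's one-line justification is that $\smax_i X_i$ is bounded above by the sum of the variables minus any one of them, so discarding the unique survivor in $G_{\log_2 k}$ and invoking Lemma~\ref{lem: bound on sum of tails} finishes it. Your pointwise thresholded inequality $\smax\cdot\mathbf{1}[\smax\ge t]\le\sum_{i\ne j}X_i\mathbf{1}[X_i\ge t]$ is exactly the sharpened form of that observation needed to pass from the bare domination to the tail-contribution bound, and your case analysis on whether $j$ hits $a_2$ is the correct way to verify it.
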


\subsubsection{Lower bounding the probability of being in the tail}\label{subsec: lower bound}

We will repeatedly use the following two facts. 

\begin{lemma}[\cite{cai2015extreme}]\label{lem: alpha to alpha}
Let $X$ be an MHR random variable. Then for $p \geq 1$ and $d \geq 1$, $d \alpha_p \geq \alpha_{p^d}$.
\end{lemma}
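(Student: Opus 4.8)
The plan is to pass to the cumulative hazard function and exploit its convexity. Write $\bar F := 1 - F$ for the survival function of $X$ and set $H(x) := -\ln \bar F(x) = \int_0^x h(t)\,dt$, using that the value distribution $X$ is non-negative so that $\bar F(0) = 1$ and hence $H(0) = 0$. Since $X$ is MHR, $H' = h$ is non-decreasing, so $H$ is a non-decreasing \emph{convex} function with $H(0) = 0$; moreover, as $X$ is continuous, $H$ is continuous and surjects onto $[0,\infty)$. I would also record the translation of quantile thresholds into level sets of $H$: by definition $\bar F(\alpha_q) = 1/q$, equivalently $H(\alpha_q) = \ln q$, so in particular $H(\alpha_p) = \ln p$ and $H(\alpha_{p^d}) = \ln(p^d) = d\ln p = d\,H(\alpha_p)$, where $\alpha_q = \inf\{x : H(x) = \ln q\}$ (well-defined by continuity of $H$).

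The heart of the argument is the elementary fact that for a convex function $H$ with $H(0) = 0$ the ratio $H(x)/x$ is non-decreasing on $(0,\infty)$: for $0 < x \le y$, convexity gives $H(x) \le \tfrac{x}{y}H(y) + (1-\tfrac{x}{y})H(0) = \tfrac{x}{y}H(y)$. Since $d \ge 1$ we have $d\alpha_p \ge \alpha_p > 0$ (the case $\alpha_p = 0$ is handled separately below), so
\[
\frac{H(d\alpha_p)}{d\alpha_p} \ \ge\ \frac{H(\alpha_p)}{\alpha_p},
\qquad\text{hence}\qquad
H(d\alpha_p) \ \ge\ d\,H(\alpha_p) \ =\ \ln(p^d).
\]
Because $H$ is continuous with $H(0) = 0 \le \ln(p^d) \le H(d\alpha_p)$, the intermediate value theorem produces a point $x^\star \le d\alpha_p$ with $H(x^\star) = \ln(p^d)$, and therefore $\alpha_{p^d} = \inf\{x : H(x) = \ln(p^d)\} \le d\alpha_p$, which is the claim.

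The only loose ends are the degenerate cases. If $\alpha_p = 0$ then $\ln p = H(0) = 0$, so $p = 1$, whence $\alpha_{p^d} = \alpha_1 = 0$ and $d\alpha_p = 0 \ge 0$ trivially; and if $p = 1$ in general the inequality reads $d\,\alpha_1 \ge \alpha_1$, immediate from $d \ge 1$. I do not anticipate a genuine obstacle here — this is essentially the observation of Cai and Daskalakis~\cite{cai2015extreme} that $-\ln(1-F)$ is convex for MHR distributions — and the only mild care needed is ensuring the infimum in the definition of $\alpha_{p^d}$ is attained, which follows from continuity of $F$.
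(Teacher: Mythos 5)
Your proof is correct. Note that the paper does not prove this lemma at all --- it is imported verbatim from Cai and Daskalakis~\cite{cai2015extreme} --- so there is no in-paper argument to compare against; your derivation via convexity of the cumulative hazard $H(x)=-\ln(1-F(x))$ (so that $H(dx)\ge d\,H(x)$ for $d\ge 1$ when $H(0)=0$, combined with $H(\alpha_q)=\ln q$) is the standard argument and matches how the cited source establishes the fact, with the degenerate cases $p=1$ and $\alpha_p=0$ handled cleanly.
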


\begin{lemma}[\cite{cai2015extreme}]\label{lem: bound on alpha}
Let $X$ be an MHR random variable. Then for all $p \geq 2$, $Con[ X \geq \alpha_p ] \leq 6 \alpha_p / p$.
\end{lemma}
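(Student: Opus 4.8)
The plan is to work directly with the survival function $R(x):=\Pr[X\geq x]$ and to exploit that, for an MHR random variable with density $f$, the hazard rate satisfies $h(x)=f(x)/R(x)=-(\ln R)'(x)$, so that $R(x)=\exp\!\big(-\int_0^x h(t)\,dt\big)$ (here I take $X$ nonnegative, so $R(0)=1$; if the support starts above $0$ set $h\equiv 0$ below it). Write $\lambda:=h(\alpha_p)$ and note $R(\alpha_p)\leq 1/p$, which is immediate from $F(\alpha_p)\geq 1-1/p$ and monotonicity of $F$. Everything then reduces to three short estimates, carried out in this order.

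First, integrate by parts (equivalently, apply Fubini) to split off a boundary term: $Con[X\geq\alpha_p]=\int_{\alpha_p}^\infty z f(z)\,dz=\alpha_p R(\alpha_p)+\int_{\alpha_p}^\infty R(z)\,dz$, where the evaluation of $[-zR(z)]_{\alpha_p}^\infty$ contributes only $\alpha_p R(\alpha_p)$ once we know $zR(z)\to 0$ — which follows from the exponential tail bound established in the next step. The boundary term is at most $\alpha_p/p$.

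Second, bound the survival integral. Monotonicity of $h$ gives, for $z\geq\alpha_p$, $R(z)=R(\alpha_p)\exp\!\big(-\int_{\alpha_p}^z h(t)\,dt\big)\leq \tfrac1p e^{-\lambda(z-\alpha_p)}$, hence $\int_{\alpha_p}^\infty R(z)\,dz\leq \tfrac1{p\lambda}$. To replace $1/\lambda$ by a multiple of $\alpha_p$, use monotonicity of $h$ once more in the other direction: $-\ln R(\alpha_p)=\int_0^{\alpha_p} h(t)\,dt\leq \alpha_p\, h(\alpha_p)=\alpha_p\lambda$, and combined with $R(\alpha_p)\leq 1/p$ this gives $\alpha_p\lambda\geq\ln p$, i.e.\ $1/\lambda\leq\alpha_p/\ln p$. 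Thus $\int_{\alpha_p}^\infty R(z)\,dz\leq \alpha_p/(p\ln p)$.

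Adding the two pieces yields $Con[X\geq\alpha_p]\leq \tfrac{\alpha_p}{p}\big(1+\tfrac1{\ln p}\big)$, and for $p\geq 2$ we have $1+1/\ln p\leq 1+1/\ln 2<2.5\leq 6$, which proves the claim (the constant is far from tight; we keep $6$ only to match \cite{cai2015extreme}). I do not expect a genuine obstacle: the one step needing care is the vanishing of $zR(z)$ at infinity in the integration by parts, which is handled by the exponential decay derived in the second step, and the minor bookkeeping around continuity of $F$ and the definition of $\alpha_p$ is benign, since $Con[X\geq\cdot]$ is nonincreasing and $\alpha_p/p$ nondecreasing in $\alpha_p$, so the inequality $R(\alpha_p)\leq 1/p$ points in exactly the direction we need.
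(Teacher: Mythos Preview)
Your argument is correct. The paper does not actually supply a proof of this lemma: it is stated as a fact imported from \cite{cai2015extreme} and used as a black box, so there is no ``paper's own proof'' to compare against. Your derivation is self-contained and in fact yields the sharper bound $Con[X\ge\alpha_p]\le \frac{\alpha_p}{p}\big(1+\tfrac{1}{\ln p}\big)$, which for $p\ge 2$ is below $2.5\,\alpha_p/p$; the constant $6$ is kept only for consistency with the cited reference. One small point worth making explicit is that $\lambda=h(\alpha_p)>0$: otherwise monotonicity of $h$ would force $h\equiv 0$ on $[0,\alpha_p]$ and hence $R(\alpha_p)=1$, contradicting $R(\alpha_p)\le 1/p\le 1/2$. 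With that, both the division by $\lambda$ and the vanishing boundary term $zR(z)\to 0$ are fully justified.
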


We now lower bound the probability that the largest and second largest value are above the outputs $\max\{ \beta_1 , \beta_2 \}$ and $\beta_1$ of Algorithm~\ref{algo: beta}. These bounds hold even if the variables are not MHR.

\begin{lemma}\label{lem: beta lower bound for max}
For any random variables (possibly not MHR) $X_1, \dots, X_k$ the threshold $\beta = \max\{ \beta_1, \beta_2 \}$ given by Algorithm~\ref{algo: beta} satisfies $\Pr[ \max_i X_i \geq \beta ] \geq 1/2$.
\end{lemma}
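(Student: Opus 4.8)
The plan is to exploit the fact that $\beta=\max\{\beta_1,\beta_2\}$ is, by construction, one of the thresholds $\beta_t$ recorded by Algorithm~\ref{algo: beta} over its rounds $t=0,\dots,\log_2 k$: the output $\beta_1=\max_{t\le \log_2 k-1}\beta_t$ collects, for each round, the largest quantile threshold seen among the variables eliminated that round, and $\beta_2=\beta_{\log_2 k}=\alpha^{(\cdot)}_{\sqrt{2}}$ is the threshold of the unique survivor. So I would fix an index $t^\star$ with $\beta=\beta_{t^\star}$ and lower-bound $\Pr[\max_i X_i\ge\beta_{t^\star}]$ by a short case analysis. Crucially, nothing below uses MHR; the only inputs are that each variable lies above its own $1/q$-top quantile $\alpha^{(i)}_q$ with probability $1/q$, and that the $X_i$ are mutually independent and the thresholds are deterministic.

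First I would handle the case $t^\star\le\log_2 k-1$. After round $t^\star$ the surviving pool $Q_{t^\star+1}$ has $k/2^{t^\star+1}$ variables, and every $i\in Q_{t^\star+1}$ survived precisely because its quantile value $\alpha^{(i)}_{q}$, with $q=\sqrt{k/2^{t^\star}}$, ranked in the top half; hence $\alpha^{(i)}_{q}\ge\beta_{t^\star}$, and therefore $\Pr[X_i\ge\beta_{t^\star}]\ge\Pr[X_i\ge\alpha^{(i)}_{q}]=1/q$. Since $Q_{t^\star+1}$ and $\beta_{t^\star}$ are deterministic while the $X_i$ are independent, $\Pr[\max_i X_i<\beta_{t^\star}]\le\prod_{i\in Q_{t^\star+1}}\Pr[X_i<\beta_{t^\star}]\le(1-1/q)^{k/2^{t^\star+1}}$; writing $m:=k/2^{t^\star}\ge 2$ (recall $k$ is a power of two), this equals $(1-1/\sqrt{m})^{m/2}\le e^{-\sqrt{m}/2}\le e^{-1/\sqrt{2}}<1/2$. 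For the boundary case $t^\star=\log_2 k$ we have $\beta=\alpha^{(j)}_{\sqrt{2}}$ for the lone survivor $j$, so $\Pr[\max_i X_i\ge\beta]\ge\Pr[X_j\ge\alpha^{(j)}_{\sqrt{2}}]=1/\sqrt{2}>1/2$. Either way $\Pr[\max_i X_i\ge\beta]\ge 1/2$.

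The one step I expect to require care is the inequality $(1-1/\sqrt{m})^{m/2}\le 1/2$: it is essentially tight at $m=2$, where $e^{-1/\sqrt{2}}\approx 0.493$, and this tightness is exactly why the algorithm must pair the $\sqrt{k/2^t}$-top quantiles with halving the surviving set at each round — the per-variable exceedance probability times the surviving pool size must stay at least $\ln 2$ through every round, in particular the last one. Everything else is bookkeeping: making sure $\beta$ is read as the maximum over \emph{all} recorded thresholds $\beta_0,\dots,\beta_{\log_2 k}$ rather than just those with $t\le\log_2 k-1$, that $Q_{t+1}$ is nonempty precisely for $t\le\log_2 k-1$, and that (as the lemma demands) no step invokes the monotone hazard rate assumption.
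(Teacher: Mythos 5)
Your proposal is correct and follows essentially the same route as the paper's proof: the $\beta=\beta_2$ case is handled via the lone survivor's $\alpha^{(j)}_{\sqrt{2}}$ threshold, and the $\beta=\beta_t$ case via the $k/2^{t+1}$ survivors in $Q_{t+1}$, each exceeding $\beta_t$ with probability at least $\sqrt{2^t/k}$, yielding the same bound $1-(1-1/\sqrt{m})^{m/2}\ge 1-e^{-\sqrt{2}/2}\ge 1/2$ at the tight point $m=2$. No gaps.
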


\begin{proof}
For the unique element $i$ in $G_{\log_2 k}$ we have that $\Pr[ X_i \geq \beta_2 ] = 1/\sqrt{2} \geq 1/2$; this covers the case that $ \max\{ \beta_1, \beta_2 \} = \beta_2$. For the case that  $\max\{ \beta_1, \beta_2 \} = \beta_1$ we prove that for all $t = 0, \dots, \log_2 k - 1$, $\Pr[ \max_{i} X_i \geq \beta_t] \geq 1/2$. This  is sufficient, since $\beta_1 = \max_{ t = 0, \dots, \log_2 k - 1} \beta_t$.

Notice that, for all $i$ that survived round $t$, i.e. $Q_{t+1}$, we have that $\alpha^{(i)}_{\sqrt{k/2^t}} \geq \beta_t$. Therefore, for those random variables, $\Pr[X_i \leq \beta_t ] \leq \Pr[ X_i \leq \alpha^{(i)}_{\sqrt{k/2^t}} ] = 1 - \sqrt{\frac{2^t}{k}}$. $|Q_{t+1}| = k/2^{t+1}$, so we get
\begin{align*}
    & \Pr[ \max_i X_i \geq \beta_t ] \geq  \Pr[ \exists i \in Q_{t+1}: X_i \geq \beta_t ] \geq 1 - (1 - \frac{\sqrt{2^t}}{\sqrt{k}})^{k/2^{t+1}}\\
    &\geq 1 - \left( \frac{1}{\sqrt{e}} \right)^{ \sqrt{k}/\sqrt{2^t} } \geq 1 - \left( \frac{1}{\sqrt{e}} \right)^{ \sqrt{k}/\sqrt{2^{ \log_2 k -1 }}} \geq 1 - \left( \frac{1}{\sqrt{e}} \right)^{\sqrt{2}} \geq 1/2.
    \qedhere
\end{align*}
\end{proof}

\begin{lemma}\label{lem: beta lower bound for smax}
For any random variables (possibly not MHR) $X_1, \dots, X_k$ the value $\beta_1$ given by Algorithm~\ref{algo: beta} satisfies $\Pr[ \smax_i X_i \geq \beta_1 ] \geq 0.098$.
\end{lemma}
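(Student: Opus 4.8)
The plan is to single out one round of Algorithm~\ref{algo: beta} that certifies the value $\beta_1$, show that in that round at least two independent variables each exceed $\beta_1$ with a quantifiable probability, and then reduce the statement to a one-line ``at least two successes'' binomial estimate. MHR will not be used anywhere, which is the whole point of the lemma.

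First I would fix $t^* \in \{0,\dots,\log_2 k - 1\}$ attaining the maximum in $\beta_1 = \max_t \beta_t$, so $\beta_1 = \beta_{t^*}$, and set $q := \sqrt{2^{t^*}/k}$ and $m := k/2^{t^*+1} + 1$. Since $t^* \le \log_2 k - 1$ we have $k/2^{t^*+1} \ge 1$, hence $m \ge 2$; and $1/q^2 = k/2^{t^*}$ yields the exact identity $m = 1 + \tfrac{1}{2q^2}$, which is the only structural fact the final estimate will need. In round $t^*$ the algorithm sorts the $k/2^{t^*}$ surviving variables by $\alpha^{(\cdot)}_{\sqrt{k/2^{t^*}}}$ in decreasing order and defines $\beta_{t^*}$ to be the $(k/2^{t^*+1}+1)$-st value of this order. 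Hence the $m$ variables occupying the top $m$ positions --- the survivors $Q_{t^*+1}$ together with the single ``best eliminated'' variable $\pi_{t^*+1}(k/2^{t^*+1}+1)$ --- all satisfy $\alpha^{(i)}_{\sqrt{k/2^{t^*}}} \ge \beta_{t^*} = \beta_1$. Call this index set $P$, with $|P| = m \ge 2$. By the defining property of the quantile value $\alpha^{(i)}_p$ (namely $\Pr[X_i \ge \alpha^{(i)}_p] \ge 1/p$), every $i \in P$ obeys $\Pr[X_i \ge \beta_1] \ge 1/\sqrt{k/2^{t^*}} = q$. This argument invoked only quantiles at levels $\le \sqrt{k}$, so it transfers verbatim to the truncated variables $\hat X_i$.

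Next I would note that whenever two or more of the variables indexed by $P$ attain a value $\ge \beta_1$, the second largest of $X_1,\dots,X_k$ is already $\ge \beta_1$. The $X_i$ are independent and each $i \in P$ lands above $\beta_1$ with probability at least $q$, so the number of indices in $P$ above $\beta_1$ stochastically dominates a $\mathrm{Bin}(m,q)$ variable, giving $\Pr[\smax_i X_i \ge \beta_1] \ge 1 - (1-q)^m - m q (1-q)^{m-1}$. It then remains to lower bound this quantity uniformly over the admissible regime $m = 1 + \tfrac{1}{2q^2}$, $q \in [1/\sqrt k, 1/\sqrt 2]$. This is a routine one-variable exercise: using $(1-q)^m \le e^{-mq}$ and $mq = q + \tfrac{1}{2q} \to \infty$ as $q \to 0$, the bound tends to $1$ at the small-$q$/large-$m$ end, while at $q = 1/\sqrt 2$ one has $m = 2$ and the bound equals $q^2 = 1/2$; handling the few small values $m \in \{2,3\}$ by direct substitution and the rest by the exponential estimate shows the minimum over the whole range exceeds the stated $0.098$ (no attempt is made to optimize this constant).

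The only genuinely delicate point --- and the reason the lemma carries the modest constant $0.098$ --- is the bookkeeping above: one must ensure the round certifying $\beta_1$ always exposes \emph{two} variables with $\alpha$-value at least $\beta_1$, which forces the inclusion of the ``best eliminated'' variable of that round and not merely $Q_{t^*+1}$, a set of size only one when $t^* = \log_2 k - 1$. Everything after identifying the pool $P$ (of size $m = 1 + \tfrac{1}{2q^2}$, each member above $\beta_1$ with probability $\ge q$) is the textbook independent-trials computation.
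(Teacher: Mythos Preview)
Your proposal is correct and follows essentially the same approach as the paper: both identify, in the round $t$ certifying $\beta_1$, the set of $m=k/2^{t+1}+1$ variables (the survivors $Q_{t+1}$ together with the single best-eliminated variable) each of which exceeds $\beta_t$ with probability at least $q=\sqrt{2^t/k}$, and then lower-bound the probability that at least two of them succeed. Your stochastic-domination formulation gives the slightly tighter expression $1-(1-q)^m-mq(1-q)^{m-1}$ in place of the paper's $1-m(1-q)^{m-1}$ (so your minimum would actually land well above $0.098$), but the structure of the argument---including the crucial inclusion of the best-eliminated variable to guarantee $m\ge 2$---is identical.
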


\begin{proof}
We prove that for all $t=0, \dots, \log_2 k -1$, $\Pr[ \smax_i X_i \geq \beta_t ] \geq 0.098$, which suffices since $\beta_1 = \max_{t = 0, \dots, \log_2 k -1 } \beta_t$.
In round $t$, for all $k/2^{t+1}$ surviving random variables $X_i$, as well as the eliminated random variable $X_{\ell}$ with the largest (among eliminated random variables) $\alpha_{\sqrt{k/2^t}}$, we have that  $\alpha^{(i)}_{\sqrt{k/2^t}} \geq \beta_t$. Therefore, for all $i \in Q_{t+1} \cup \{ \ell \}$, $\Pr[ X_i \leq \beta_t ] \leq 1 - \frac{\sqrt{2^t}}{\sqrt{k}}$, with equality for the random variable $X_\ell$.

\begin{align*}
    &\Pr[ \smax_i X_i \geq \beta_t ] \geq
    \Pr[ \smax_{i \in Q_{t+1} \cup \{ \ell \}  } X_i  \geq \beta_t] \\
    &= 1 -  ( \Pr[ \max_{i \in Q_{t+1} \cup \{ \ell \}  } X_i \leq \beta_t ] + \Pr[ \max_{i \in Q_{t+1} \cup \{ \ell \}  } X_i \geq \beta_t \text{ and }  \underset{i \in Q_{t+1} \cup \{ \ell \}  }{\smax} X_i \leq \beta_t ] ) \\
    &\geq 1 - \prod_{i \in Q_{t+1} \cup \{ \ell \}} \Pr[ X_i \leq \beta_t ] - \sum_{i \in Q_{t+1}  } \Pr[ X_i \geq \beta_t ] \prod_{j \neq i} \Pr[ X_j \leq \beta_t ] \\
    &~~~~~~~~~~~~~~~~~~~~~ - \Pr[ X_\ell \geq \beta_t ] \prod_{j \neq \ell} \Pr[ X_j \leq \beta_t ] \\
    &\geq 1 - \left( 1 - \frac{\sqrt{2}^t }{ \sqrt{k} } \right)^{k/2^{t+1}+1} - \frac{k}{2^{t+1}} \cdot \left( 1 - \frac{\sqrt{2}^t }{ \sqrt{k} } \right)^{k/2^{t+1}} - \frac{\sqrt{2}^t}{\sqrt{k}}\left( 1 - \frac{\sqrt{2}^t }{ \sqrt{k} } \right)^{k/2^{t+1}} \\
    &= 1 - \left( \frac{k}{2^{t+1}} + 1 \right)  \left( 1 - \frac{\sqrt{2}^t }{ \sqrt{k} } \right)^{k/2^{t+1}}.
\end{align*}

When $\frac{k}{2^{t+1}}$ takes small values the standard approximation $(1-1/n)^n \leq 1/e$ is not good enough. We take cases. When $\frac{k}{2^{t+1}} \geq 32$ we use the standard exponential approximation, and argue that the minimum of the resulting function is at least $0.395$. When $\frac{k}{2^{t+1}} < 32$, i.e. when it takes the values $1,2, 4, 8$ and $16$ we simply compute the value of the function above; the smallest of the five is $0.098$ when $\frac{k}{2^{t+1}} = 8$.

Let $x = \log_2(\frac{k}{2^{t+1}})$. Our goal is to find the minimum of $g(x) = 1 - \left( 2^x + 1 \right)\left( 1 - \frac{1}{\sqrt{2^{x+1}}}\right)^{2^x}$. The standard approximation $(1-1/n)^n \leq 1/e$ is not good enough for small values of $x$ (specifically $x \geq $

For $x \geq 5$, we use the standard approximation $(1-1/n)^n \leq 1/e$:
\[
1 - \left( 2^x + 1 \right)\left( 1 - \frac{1}{\sqrt{2^{x+1}}}\right)^{2^x} \geq 1 - \left( 2^x + 1 \right)\left( \frac{1}{\sqrt{e}}\right)^{\sqrt{2^{x+1}}} = f(x)
\]

Taking the derivative, we have that $f'(x) = \frac{1}{4} \ln(2) e^{ -\sqrt{2^{x-1}} } \left( 2^{x+2} - \sqrt{2^{x+1}} (2^x+1) \right)$; this expression is negative when $2^{x+2} \leq \sqrt{2^{x+1}} (2^x+1)$, which holds for $x \geq 5$. Therefore, for $x \geq 5$, $f(x)$ achieves its maximum at $x = 5$, where it takes the value $f(5) = 1 - \frac{33}{e^4} \geq 0.395$.

Therefore, it remains to confirm the lower bound on $g(x) = 1 - \left( 2^x + 1 \right)\left( 1 - \frac{1}{\sqrt{2^{x+1}}}\right)^{2^x}$ for the cases of $x = 0, \dots, 4$ (equivalently, $\frac{k}{2^{t+1}} = 1,2,4,8$ and $16$) where we have:
\begin{itemize}
    \item $g(0) = \sqrt{2} - 1 \approx 0.412$.
    \item $g(1) = 1/4$.
    \item $g(2) = 3/64 (-167 + 120 sqrt(2)) \approx 0.126$.
    \item $g(3) = 6487/65536 \geq 0.098$.
    \item $g(4) = 1 - 17 (1 - 1/(4 sqrt(2)))^16 \approx 0.243$
\end{itemize}

The lowest number is $g(3)$, therefore, $\Pr[ \smax_i X_i \geq \beta_t ] \geq 0.098$. Since $\max_{t = 0,\log_2 k-1} \beta_t = \beta_1$, we get $\Pr[ \smax_i X_i \geq \beta_1 ] \geq 0.098$.
\end{proof}

\subsubsection{Bounding the loss of truncation}\label{subsec: proof of loss of truncation}

Here, we prove our main bound on the loss from truncating.

\begin{proof}[Proof of Lemma~\ref{lem: loss of truncating}]
Observe that since Algorithm~\ref{algo: beta} only uses top quantiles smaller than $\frac{1}{\sqrt{k}}$, then the outputs $(\beta_1, \beta_2)$ of Algorithm~\ref{algo: beta} with inputs the $X_i$s are identical to its outputs with inputs the $\hat{X}_i$s. With this observation at hand we can proceed as follows. Let $\beta = \max\{ \beta_1, \beta_2 \}$.
First, combining Markov's inequality with Lemmas~\ref{lem: beta lower bound for max} and~\ref{lem: beta lower bound for smax}\footnote{Note that these lemmas do not need the random variables to be MHR, which is important since distributions with point masses, like the $\hat{X}_i$s, are not MHR, as $\log(1-F_{\hat{X}_i}(x))$ is not concave.} we get the following inequalities
\begin{align}
  \frac{\E[\max_{i} \hat{X}_i]}{ \beta } \geq &\Pr[ \max_{i} \hat{X}_i \geq \beta ]  \geq 1/2 \label{ineq: beta and expected hat} \\
  \frac{\E[\smax_{i} \hat{X}_i]}{ \beta_1 } \geq &\Pr[ \smax_{i} \hat{X}_i \geq \beta_1 ]  \geq 0.098 \label{ineq: beta1 and expected hat}
\end{align}
For $\E[ \max_i X_i ]$ we have:
\begin{align*}
    \E[ \max_i X_i ] &= \int_0^{\log_2(1/\eps) \beta} x f_{max}(x) dx + \int_{\log_2(1/\eps) \beta}^{\infty} x f_{max}(x) dx \\
    &\leq^{(Corollary~\ref{cor: bound on max})} 
    \beta( \log_2(1/\eps) + 14 \sqrt{\eps} \log_2(1/\eps) ) \\
    &\leq^{(Eq~\eqref{ineq: beta and expected hat})} 2( \log_2(1/\eps) + 14 \sqrt{\eps} \log_2(1/\eps) )  \E[\max_{i} \hat{X}_i].
\end{align*}

By picking $\epsilon = 0.00075$ we have $\E[ \max_i X_i ] \leq 28.8 \E[\max_{i} \hat{X}_i]$.

For $\E[ \smax_i X_i ]$ we have:
\begin{align*}
    \E[ \smax_i X_i ] &= \int_0^{\log_2(1/\eps) \beta_1} x f_{smax}(x) dx + \int_{\log_2(1/\eps) \beta_1}^{\infty} x f_{smax}(x) dx \notag \\
    &\leq^{(Corollary~\ref{cor: bound on smax})} 
    \beta_1( \log_2(1/\eps) + 8 \sqrt{\eps} \log_2(1/\eps) ) \notag \\
    &\leq^{(Eq~\eqref{ineq: beta1 and expected hat})}  \frac{1}{0.098} ( \log_2(1/\eps) + 8 \sqrt{\eps} \log_2(1/\eps) ) \E[\smax_{i} \hat{X}_i].
\end{align*}

By picking $\epsilon = 0.0074$ we have $\E[ \smax_i X_i ] \leq 122 \E[\smax_{i} \hat{X}_i]$.
\end{proof}

\section{Additional Experimental Results}\label{app: experiments}

\paragraph{Additional Implementation Details.}
We implemented all our algorithms in Tensorflow, on Google's Colab.

For synthetic data, a random variable $X_i$ is constructed by first sampling a mean $\mu_i$ from $U[0,60]$ and a $\sigma_i$ from $U[0,30]$, and taking the empirical over $5000$ samples from a Normal distribution $\mathcal{N}_i(\mu_i,\sigma_i)$, where the sampled values were rounded up to $0$ and down to $V_{max} = 1000$ if outside of the $[0,V_{max}]$ range. An experiment constructs $n=500$ random variables, and selects a subset of size $k$, for $k = 10, 20$ and $30$ for each of the different methods. For a selected subset $S$, we compute $\E[ \max_{i \in S} X_i ]$ and $\E[ \smax_{i \in S} X_i ]$, which is the ``score'' for that experiment. We ran $100$ experiments.

For the small versus big data experiments, we have a similar setup. A random variable $X_i$ is constructed by first sampling a mean $\mu_i$ from $U[0,60]$ and a $\sigma_i$ from $U[0,30]$ and a uniformly random label $\{ s , b \}$. If the label is $s$ $X_i$ is the empirical over $10$ samples from $\mathcal{N}_i(\mu_i,\sigma_i)$, otherwise it is the empirical over $5000$ samples. Once a method selects a subset $S$, its score (true performance) is $\E[ \max_{i \in S} \mathcal{N}_i(\mu_i,\sigma_i) ]$ and $\E[ \smax_{i \in S} \mathcal{N}_i(\mu_i,\sigma_i) ]$, which is computed via sampling. Specifically, we sample $500$ times from each $\mathcal{N}_i(\mu_i,\sigma_i)$, $i \in S$, remembering the largest/second largest value, and then take the average. Our plots show the number of small data candidates selected versus the true performance.

For the Twitter data, we have a dataset of $8$ million tweets. The first $2$ million tweets (in chronological order) are used for feature collection. We drop all entries with fewer than $5$ likes and pre-process the text, removing stopwords (``this'', ``and'', etc) and stemming (reducing words to their root, e.g. ``jumped'', ``jumping'' get mapped to ``jump''). We find the set of distinct words in this set, and out of those, use as features the ones that appear at least $10$ and at most $350$ times. The purpose of the upper bound is (1) to not take into account words like ``BTC'' that appear in almost all tweets, (2) keep the number of features small enough for computation to be feasible. We get $5500$ features that we use to train our models in the next $2$ million tweets (without dropping any entries like in the feature collection). Regression is the simplest to train. For Quantile, we train a neural network (the framework we used is Keras) with $2$ hidden layers, with quantile loss, at quantiles $[0.7, 0.8, 0.9, 0.95, 0.99]$. This is the final model for the Quantile method. The KR method uses these models to filter out the train set (the same $2$ million entries quantile was trained on) by dropping all entries with fewer ``likes'' than the quantile prediction. We train with squared loss on the remaining entries. This concludes the training step. For the final step, we randomly perturb the last $4$ million tweets and split it into non-overlapping chunks of $n=500$ tweets. One experiment samples a chunk, and picks, for each method, a subset of size $k = 10, 20 $ and $30$ by ranking the entries based on the value of the prediction. The score of a method for this experiment is the true largest/second largest number of ``likes'' in the set picked. We do $8000$ experiments.

\paragraph{Additional Figures}
Figure~\ref{fig:smax1} shows the results for the second largest objective, on the synthetic data. Figure~\ref{fig:smaxnlp} shows the results for the second largest objective on the Twitter data.

In Figures~\ref{fig: div q} and~\ref{fig: div k} we have the percentage of small data candidates and expected maximum for the selected set, for the Quantile and KR algorithm (respectively), for different quantiles.

\begin{figure}
\centering
\begin{subfigure}{.5\textwidth}
  \centering
  \includegraphics[width=0.95\textwidth]{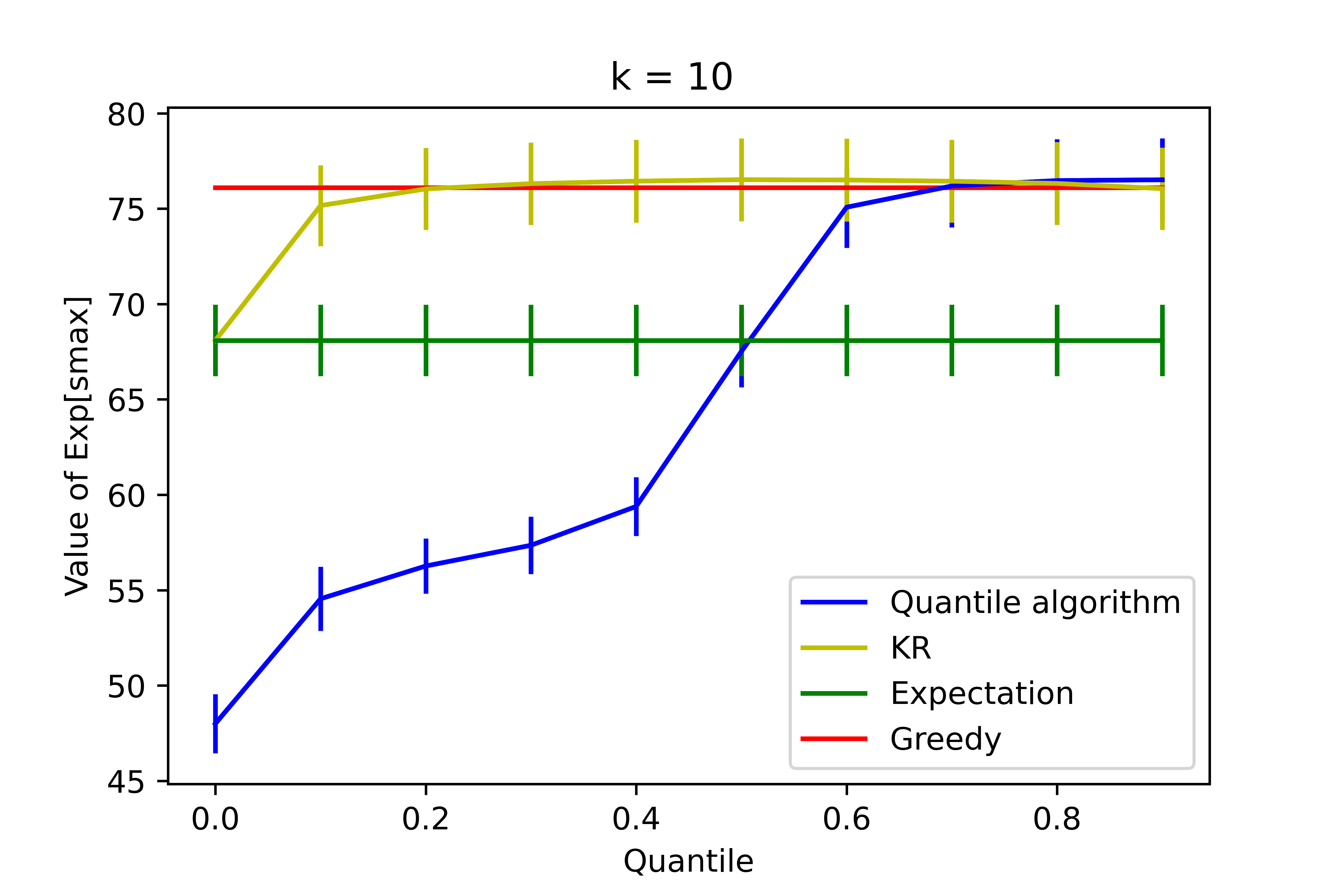}
  \caption{$k=10$}
  \label{fig:sub1smax}
\end{subfigure}%
\begin{subfigure}{.5\textwidth}
  \centering
  \includegraphics[width=0.95\textwidth]{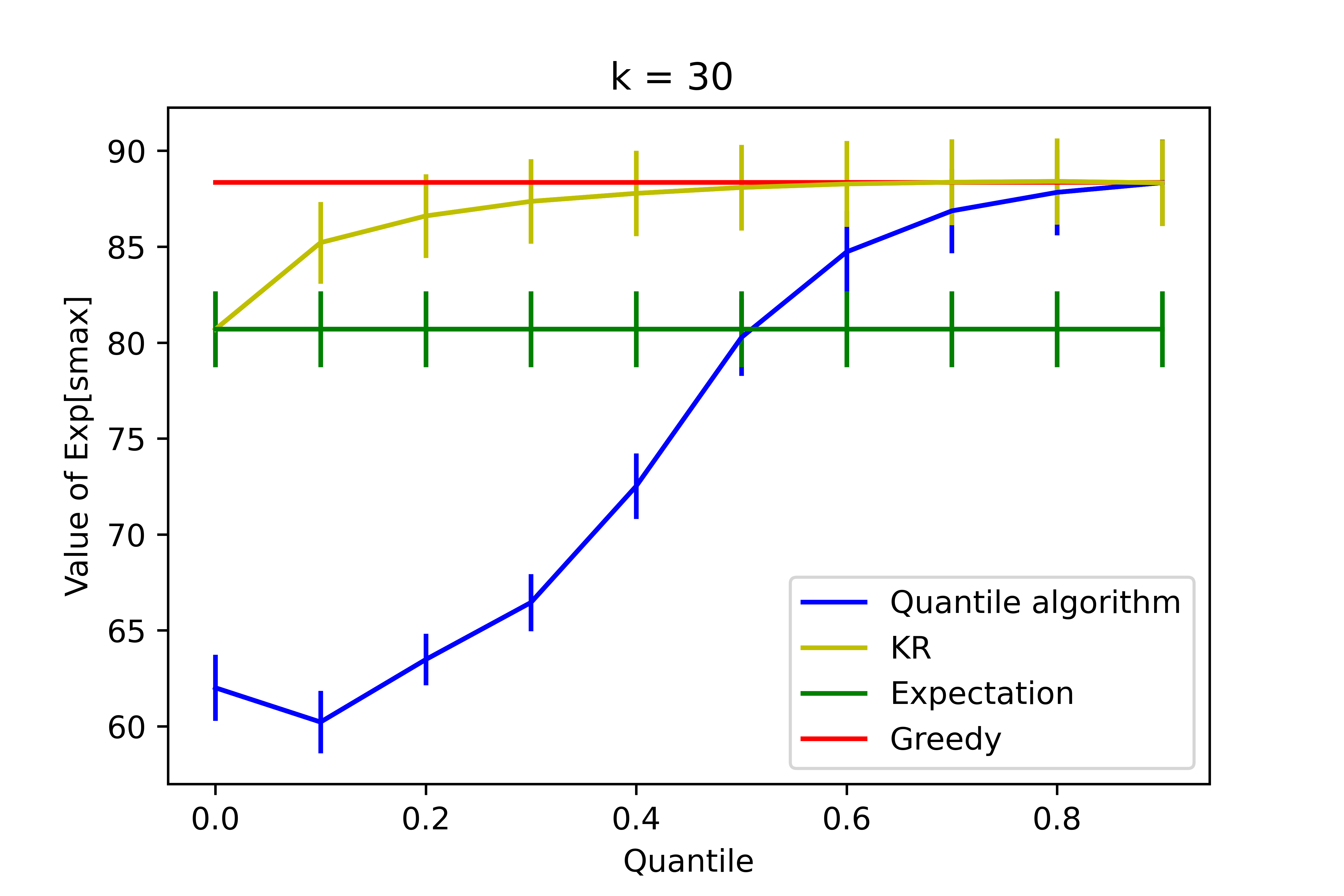}
  \caption{$k=30$}
  \label{fig:sub2smax}
\end{subfigure}
\caption{Comparing the average performance (errors bars show standard deviation divided by square root of number of experiments) of the score-based algorithms and Greedy, for selecting $k$ out of $n=500$ distributions, for the expected second largest value objective. }
\label{fig:smax1}
\end{figure}

\begin{figure}
\centering
\begin{subfigure}{.5\textwidth}
  \centering
  \includegraphics[width=0.95\textwidth]{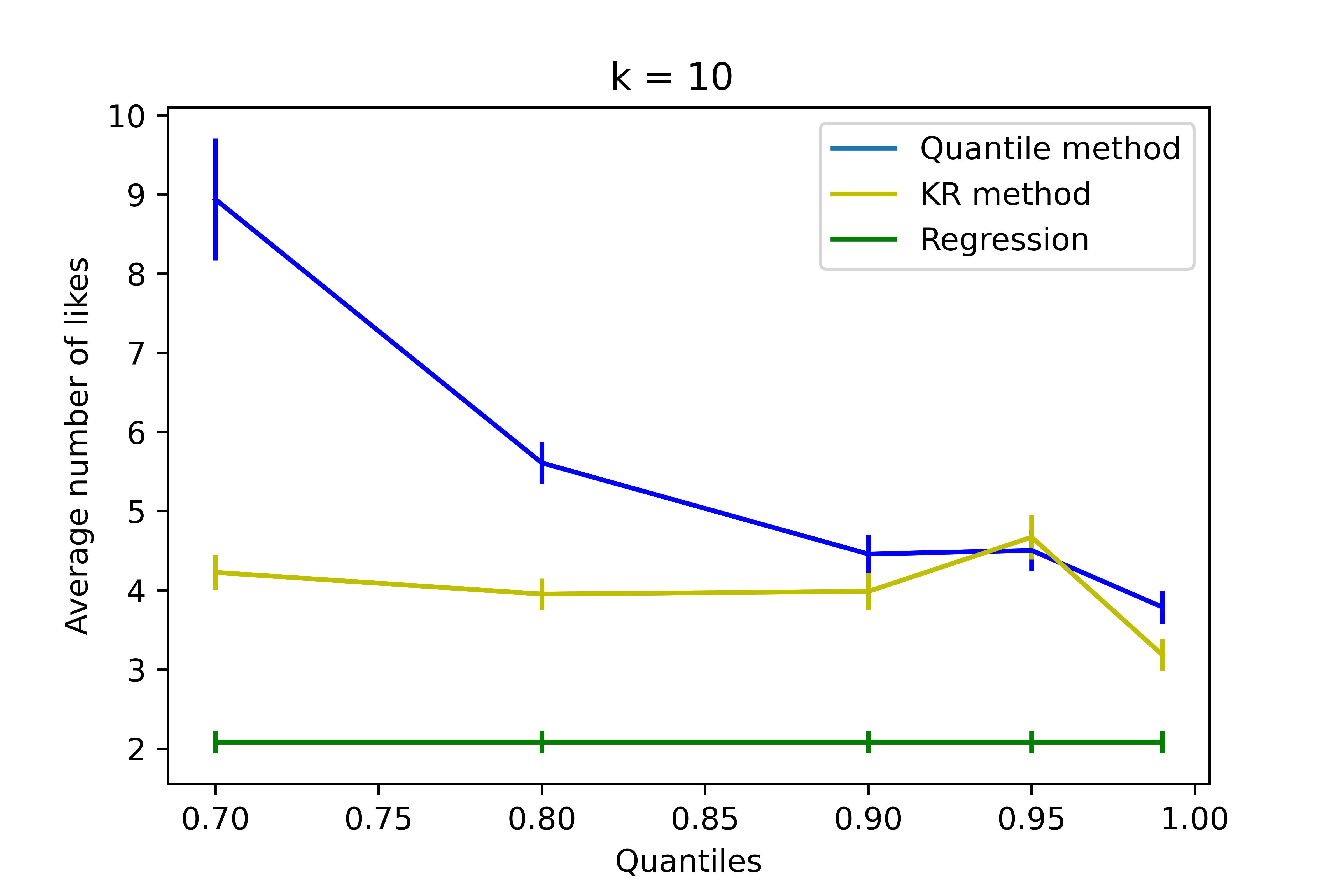}
  \caption{$k=10$}
  %\label{fig:sub1smax}
\end{subfigure}%
\begin{subfigure}{.5\textwidth}
  \centering
  \includegraphics[width=0.95\textwidth]{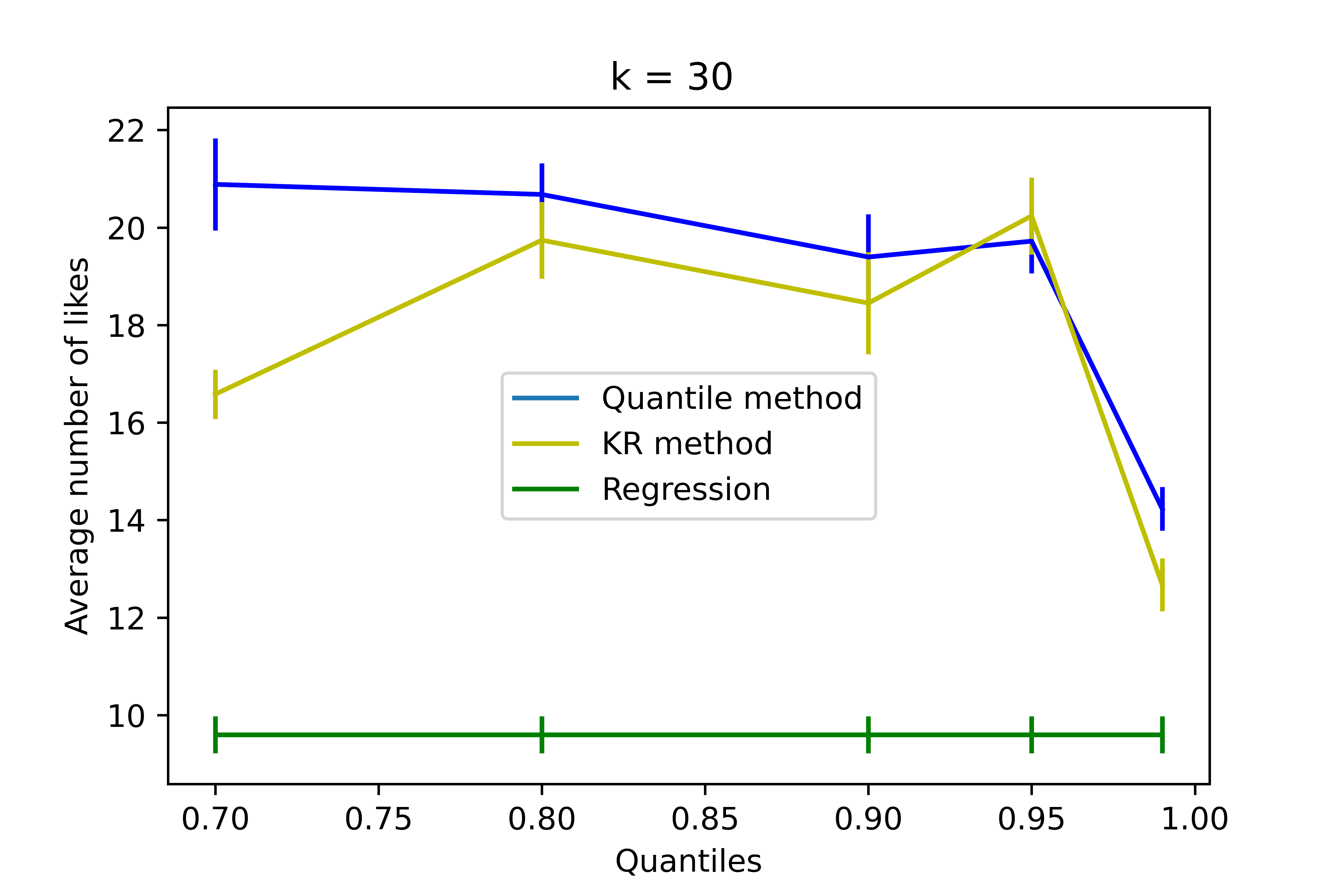}
  \caption{$k=30$}
  %\label{fig:sub2smax}
\end{subfigure}
\caption{Comparing the average performance (errors bars show standard deviation divided by square root of number of experiments) of the KR, quantile and regression methods, for selecting $k$ out of $n=500$ distributions, for the expected second largest value objective.}
\label{fig:smaxnlp}
\end{figure}

\begin{figure}[h]
\includegraphics[width=\textwidth]{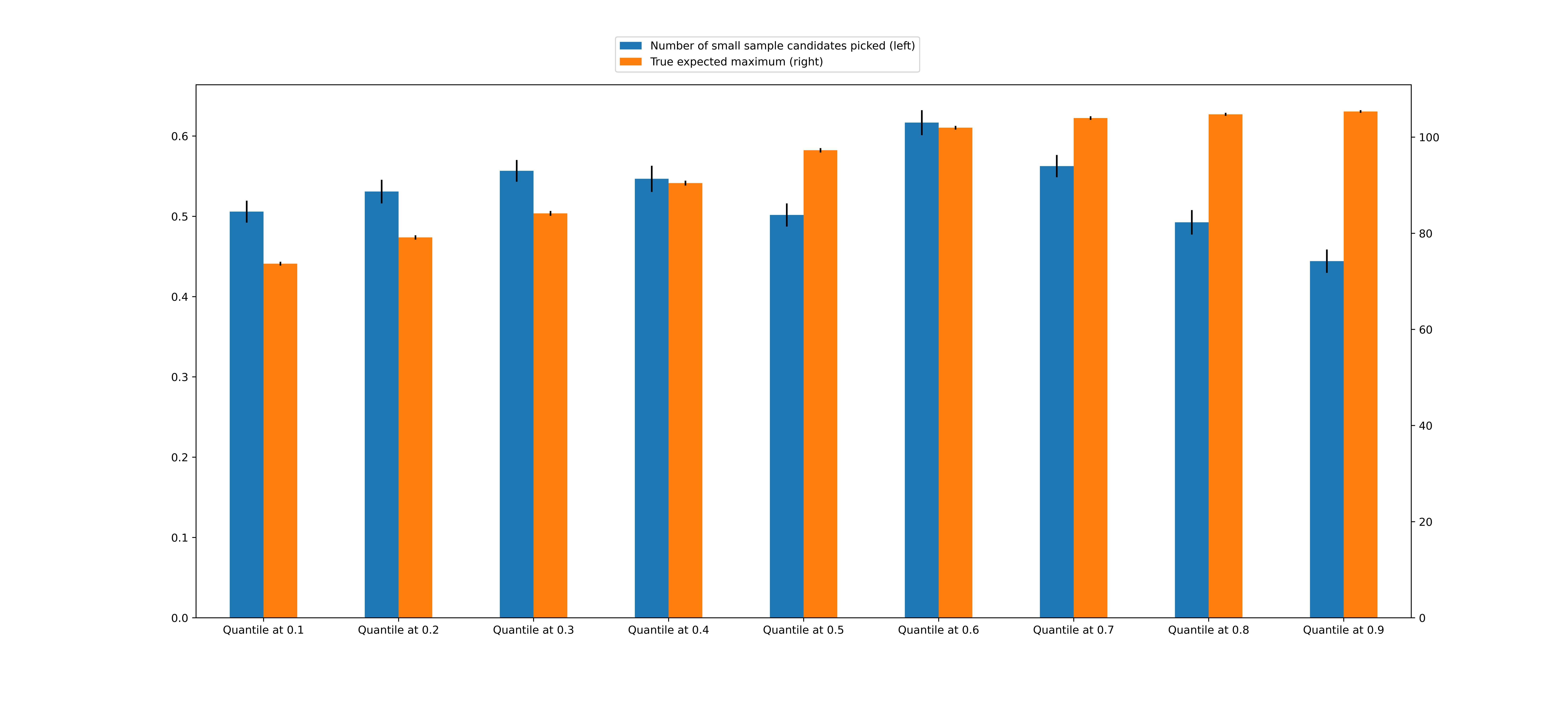}
\caption{Percentage of small data candidates and expected maximum for the quantile algorithm, for different quantiles.}
\label{fig: div q}
\end{figure}

\begin{figure}[h]
\includegraphics[width=\textwidth]{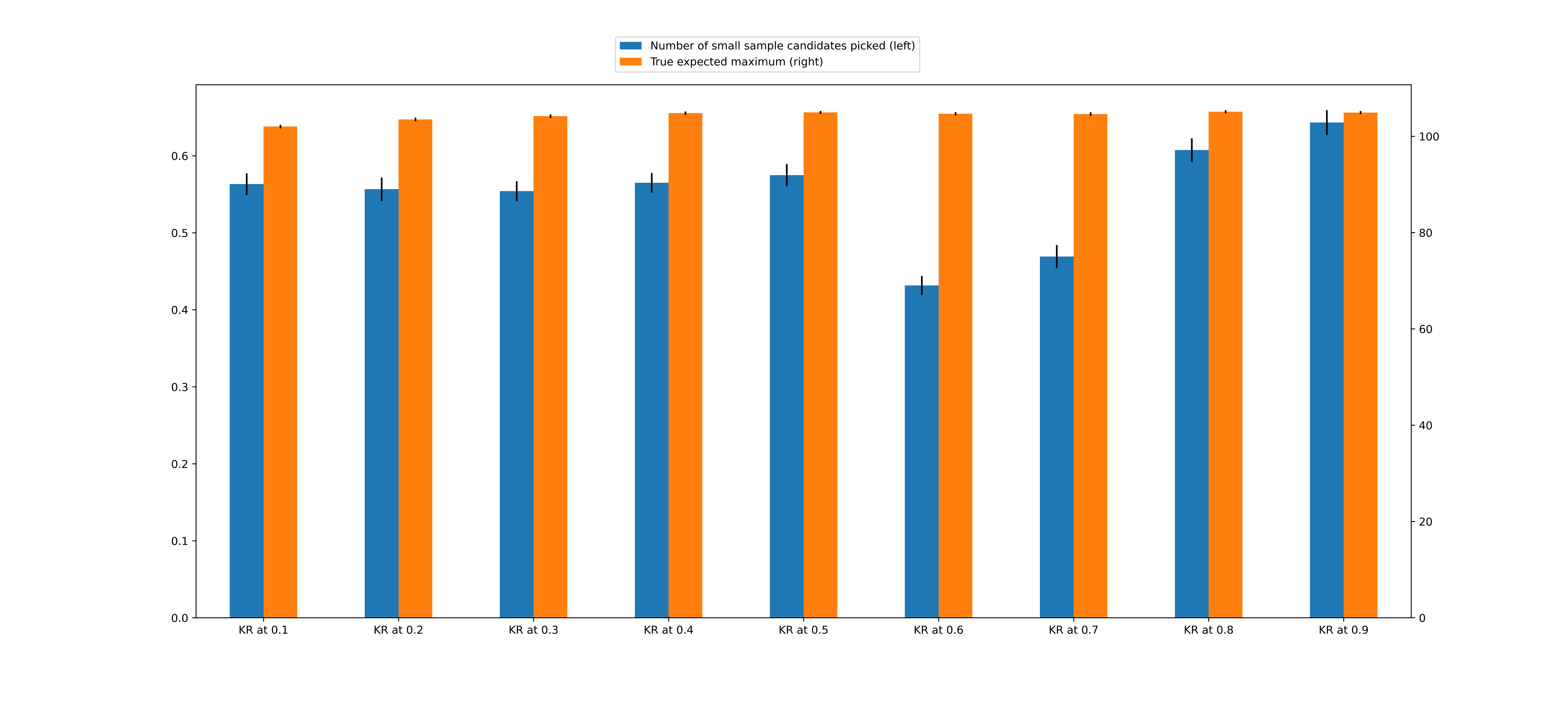}
\caption{Percentage of small data candidates and expected maximum for the KR algorithm, for different quantiles.}
\label{fig: div k}
\end{figure}

\end{document}